\newcommand{\be}{\begin{equation}}
\newcommand{\ee}{\end{equation}}
\newcommand{\bea}{\begin{eqnarray}}
\newcommand{\eea}{\end{eqnarray}}
\newcommand{\beas}{\begin{eqnarray*}}
\newcommand{\eeas}{\end{eqnarray*}}
\theoremstyle{plain}
\newtheorem{thm}{Theorem}
\newtheorem{lem}[thm]{Lemma}
\newtheorem{rem}[thm]{Remark}
\newtheorem{cor}[thm]{Corollary}
\newtheorem{prop}[thm]{Proposition}
\theoremstyle{definition}
\newtheorem{defn}[thm]{Definition}
\newtheorem{rmk}[thm]{Remark}
\newtheorem{ex}[thm]{Example}
\numberwithin{thm}{section}
\numberwithin{equation}{section}
\newcommand{\ve}{\varepsilon}
\newcommand{\eq}[2]{\begin{equation}\label{#1}#2 \end{equation}}
\newcommand{\ml}[2]{\begin{multline}\label{#1}#2 \end{multline}}
\newcommand{\ga}[2]{\begin{gather}\label{#1}#2 \end{gather}}
\newcommand{\surj}{\twoheadrightarrow}
\newcommand{\inj}{\hookrightarrow}
\newcommand{\sA}{{\mathcal A}}
\newcommand{\sH}{{\mathcal H}}
\newcommand{\sO}{{\mathcal O}}
\newcommand{\A}{{\mathbb A}}
\newcommand{\C}{{\mathbb C}}
\newcommand{\D}{{\mathbb D}}
\renewcommand{\P}{{\mathbb P}}
\newcommand{\Q}{{\mathbb Q}}
\newcommand{\R}{{\mathbb R}}
\newcommand{\Z}{{\mathbb Z}}
\def\triangleGeom{{\;\raisebox{-3cm}{\epsfysize=8cm\epsfbox{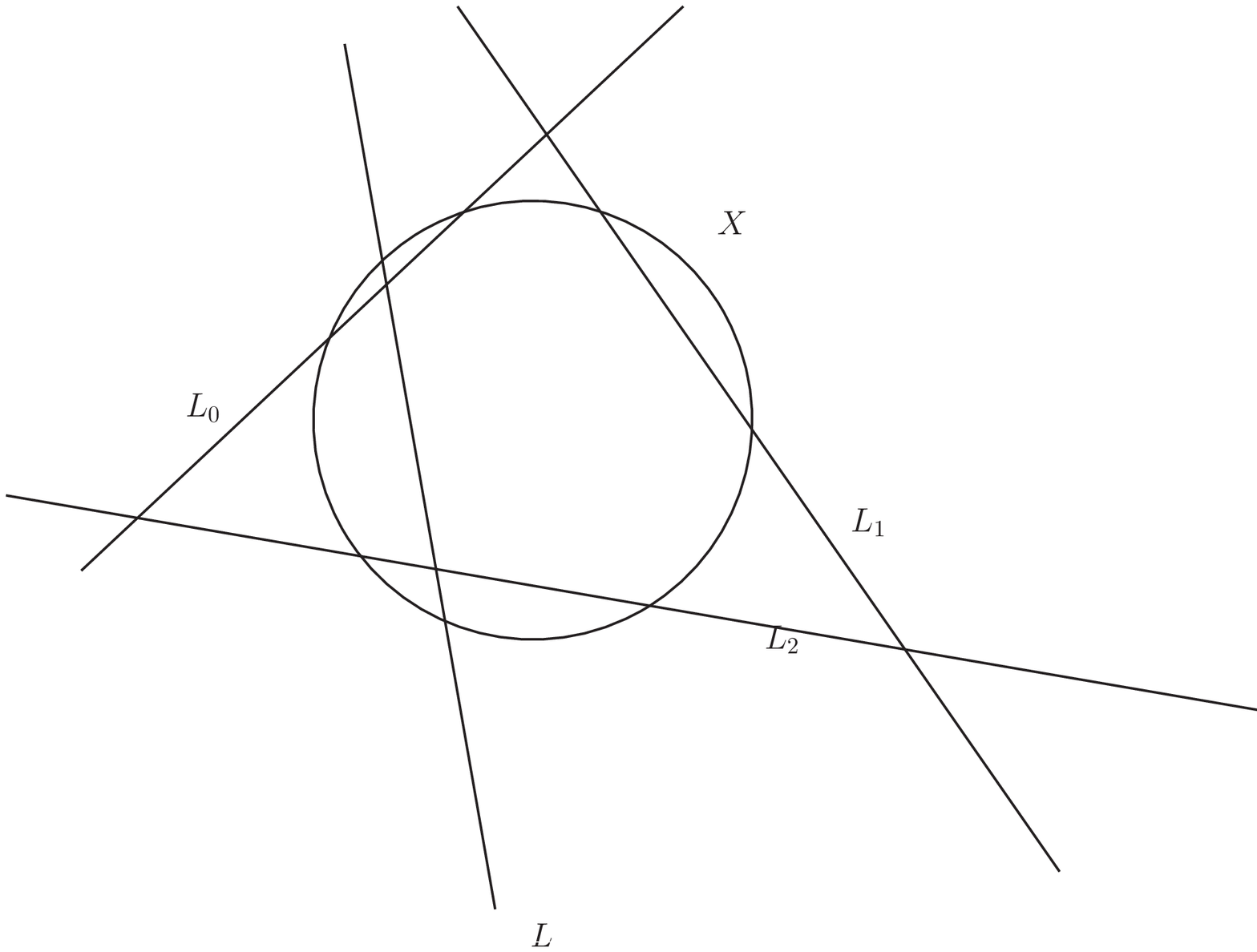}}\;}}
\def\triangle{{\;\raisebox{-5mm}{\epsfysize=1cm\epsfbox{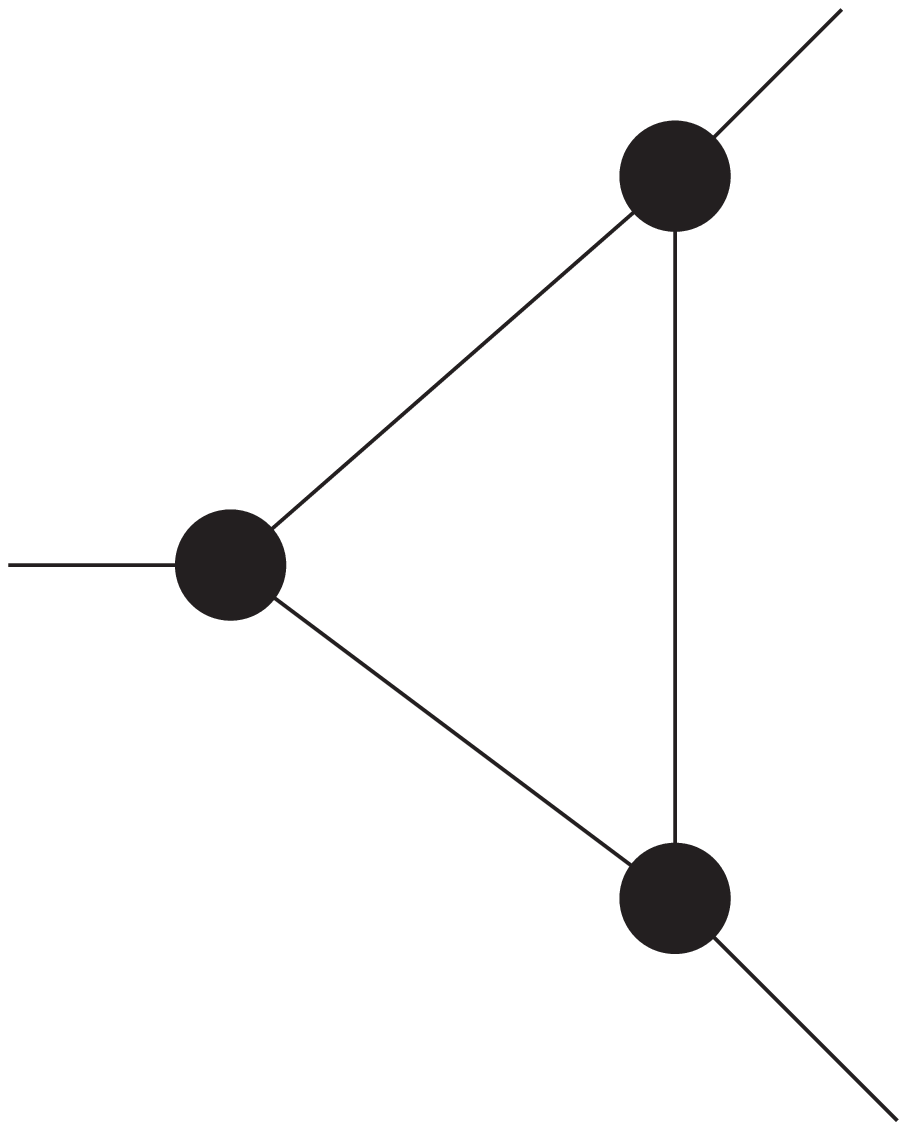}}\;}}
\def\triangleab{{\;\raisebox{-5mm}{\epsfysize=1cm\epsfbox{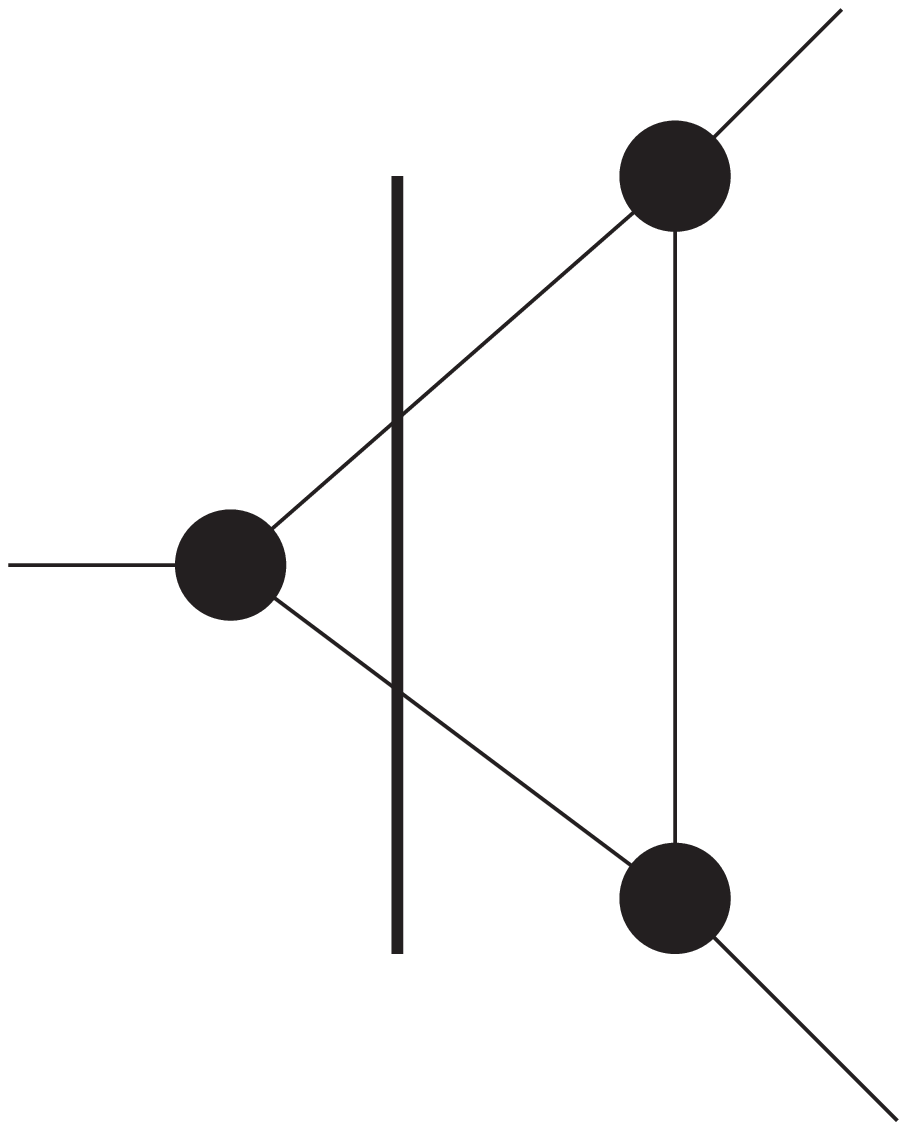}}\;}}
\def\trianglebc{{\;\raisebox{-5mm}{\epsfysize=1cm\epsfbox{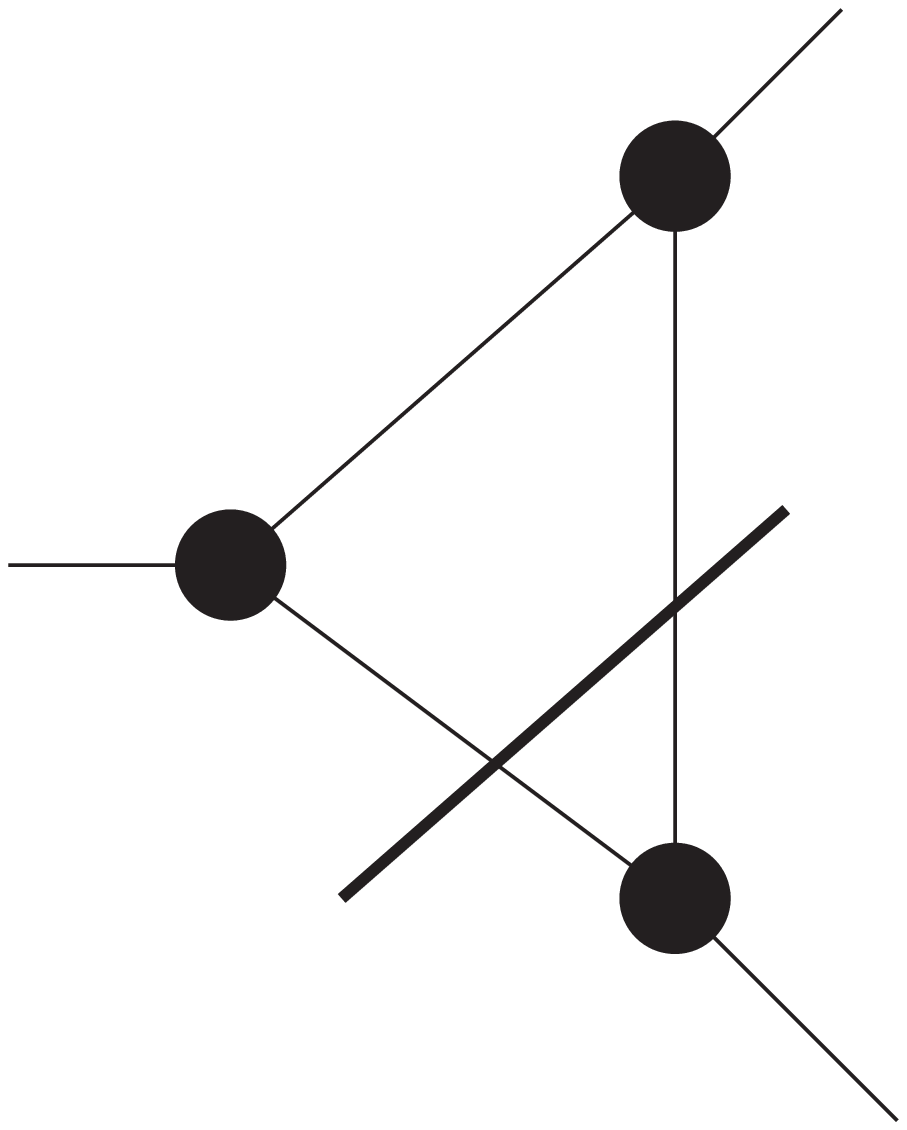}}\;}}
\def\triangleca{{\;\raisebox{-5mm}{\epsfysize=1cm\epsfbox{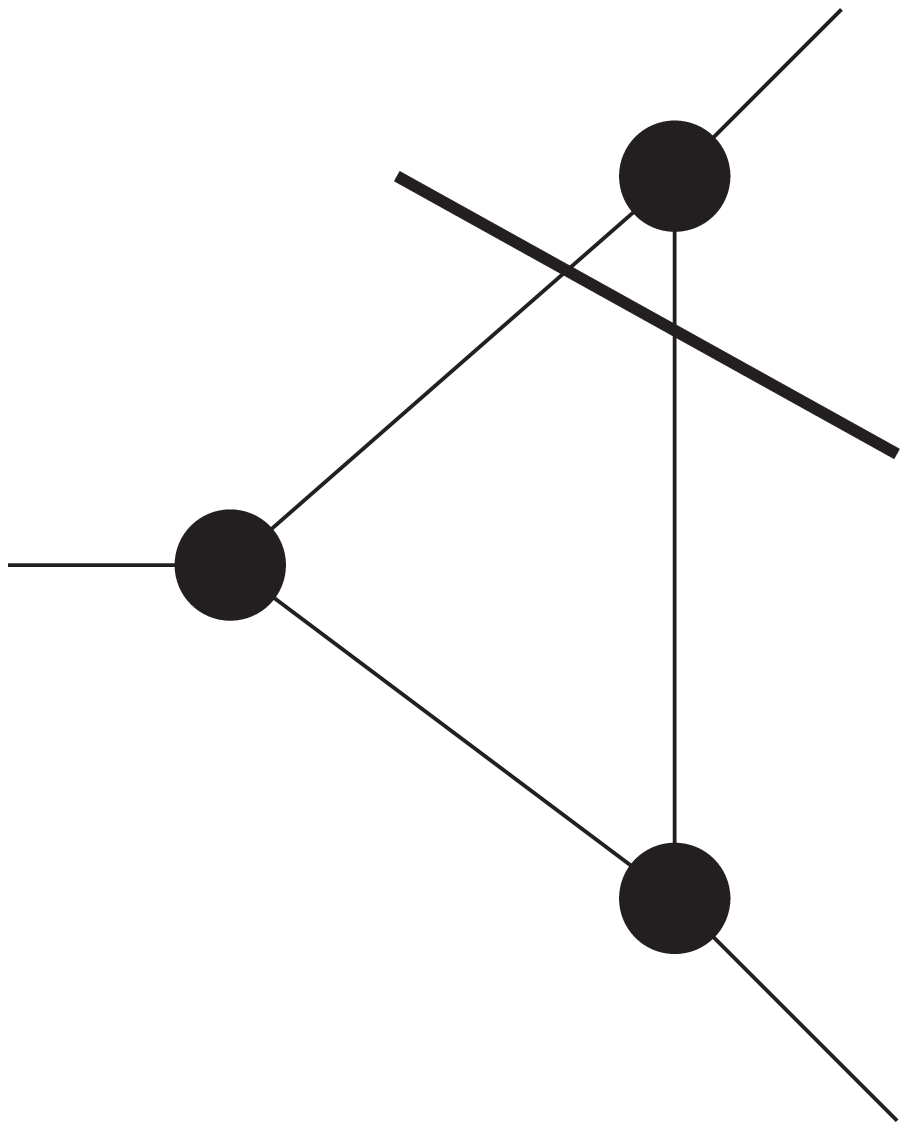}}\;}}
\def\triangleabc{{\;\raisebox{-5mm}{\epsfysize=1cm\epsfbox{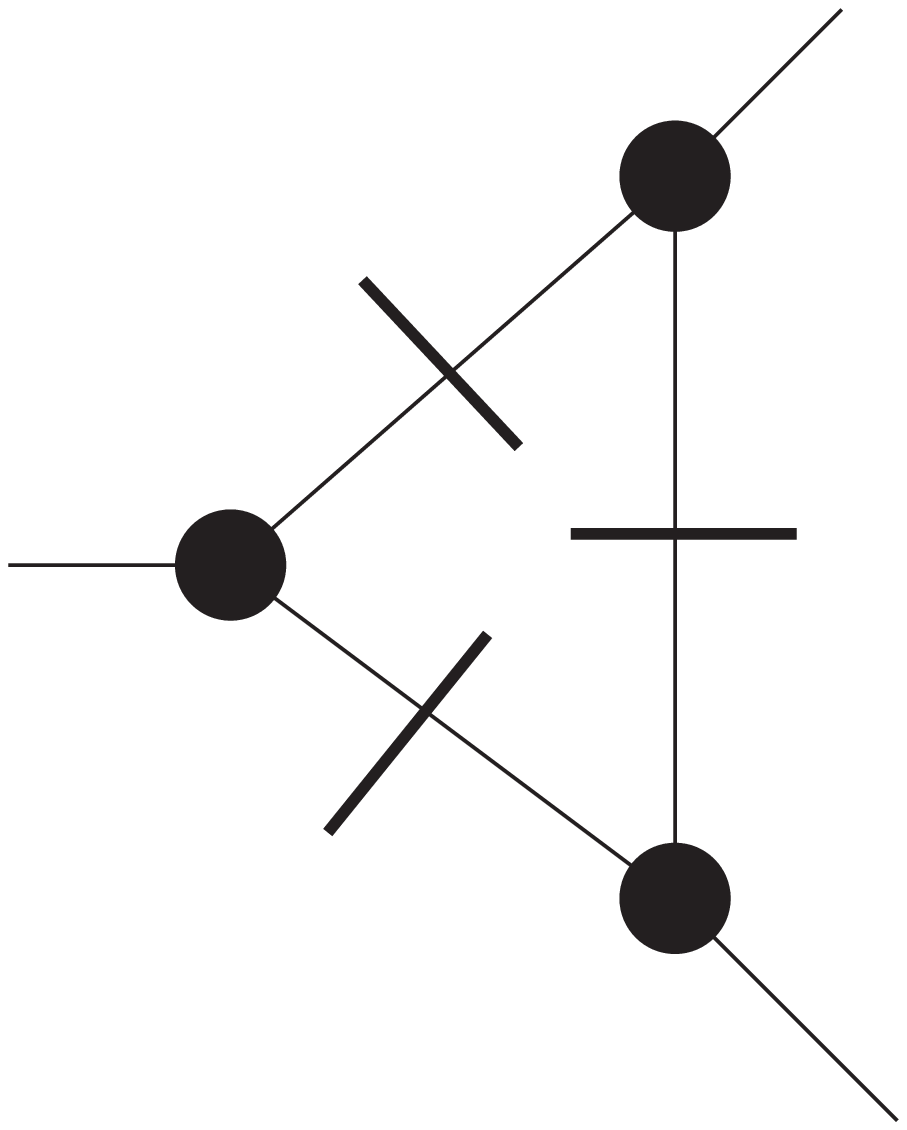}}\;}}
\def\rtra{{\;\raisebox{-5mm}{\epsfysize=1cm\epsfbox{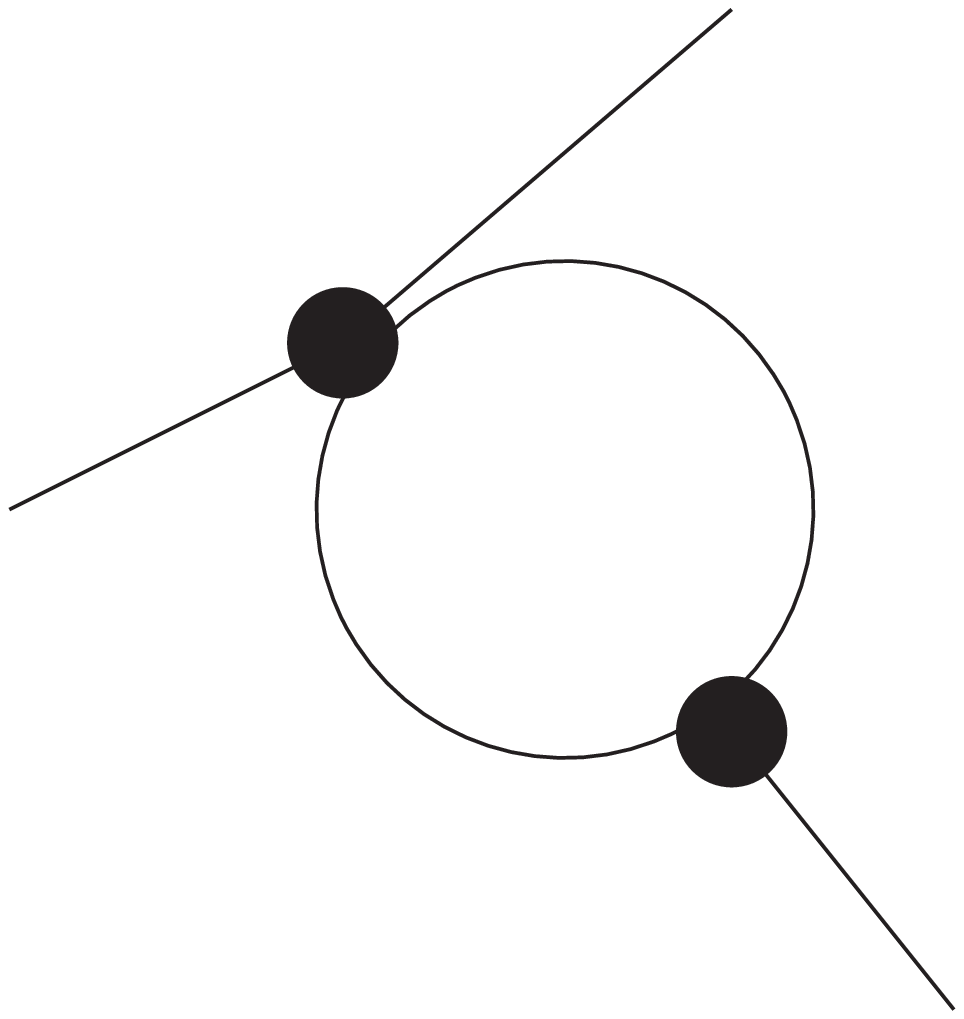}}\;}}
\def\rtrb{{\;\raisebox{-5mm}{\epsfysize=1cm\epsfbox{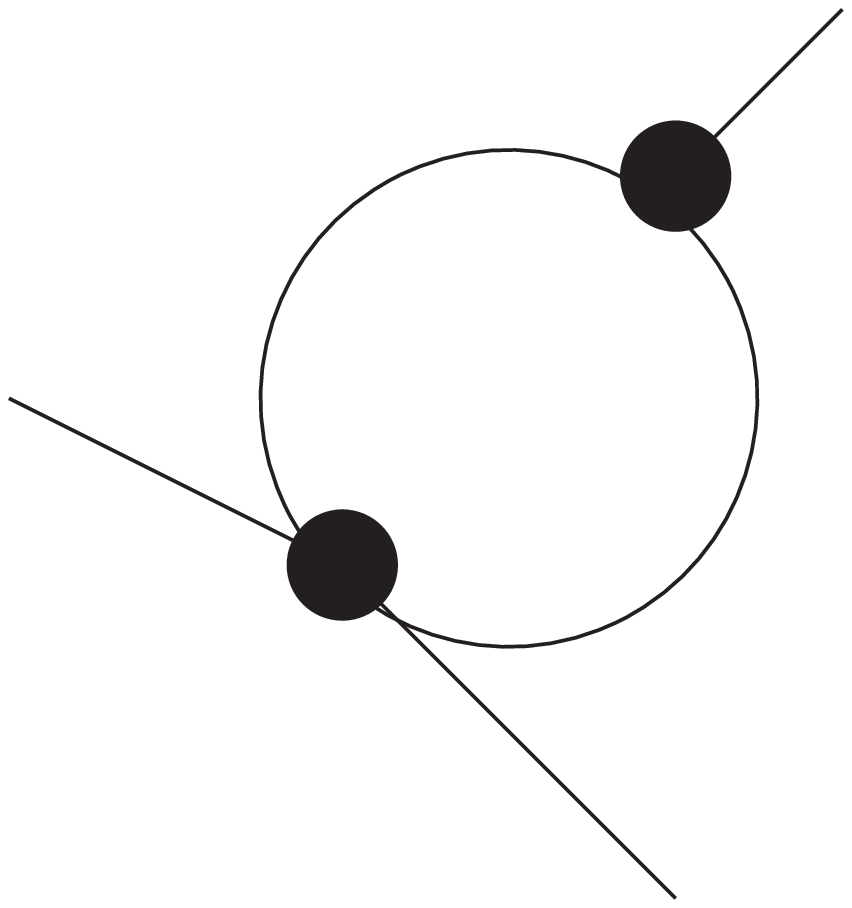}}\;}}
\def\rtrc{{\;\raisebox{-1mm}{\epsfysize=5mm\epsfbox{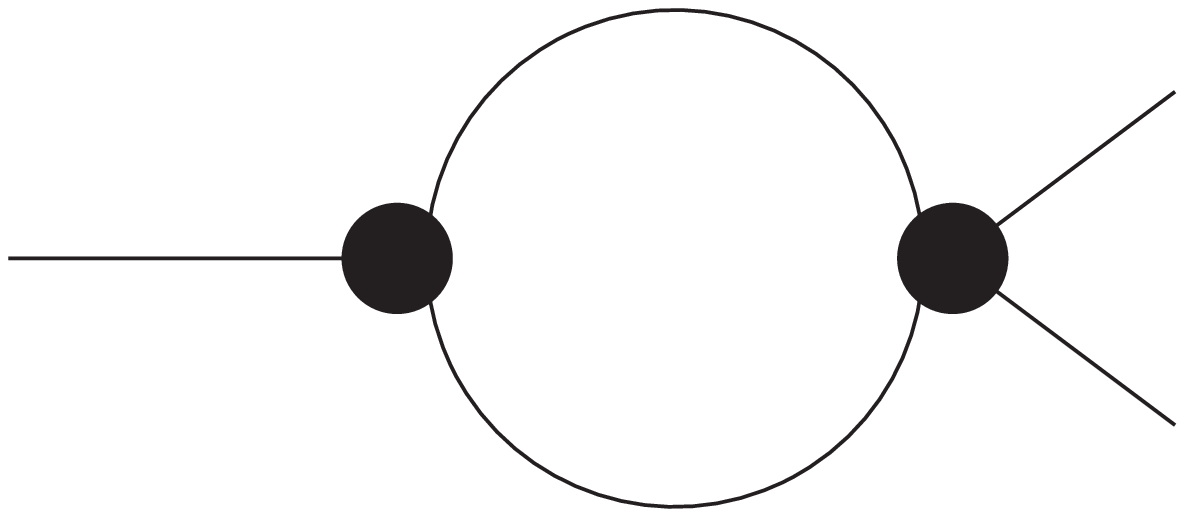}}\;}}
\def\rtrbc{{\;\raisebox{-5mm}{\epsfysize=1cm\epsfbox{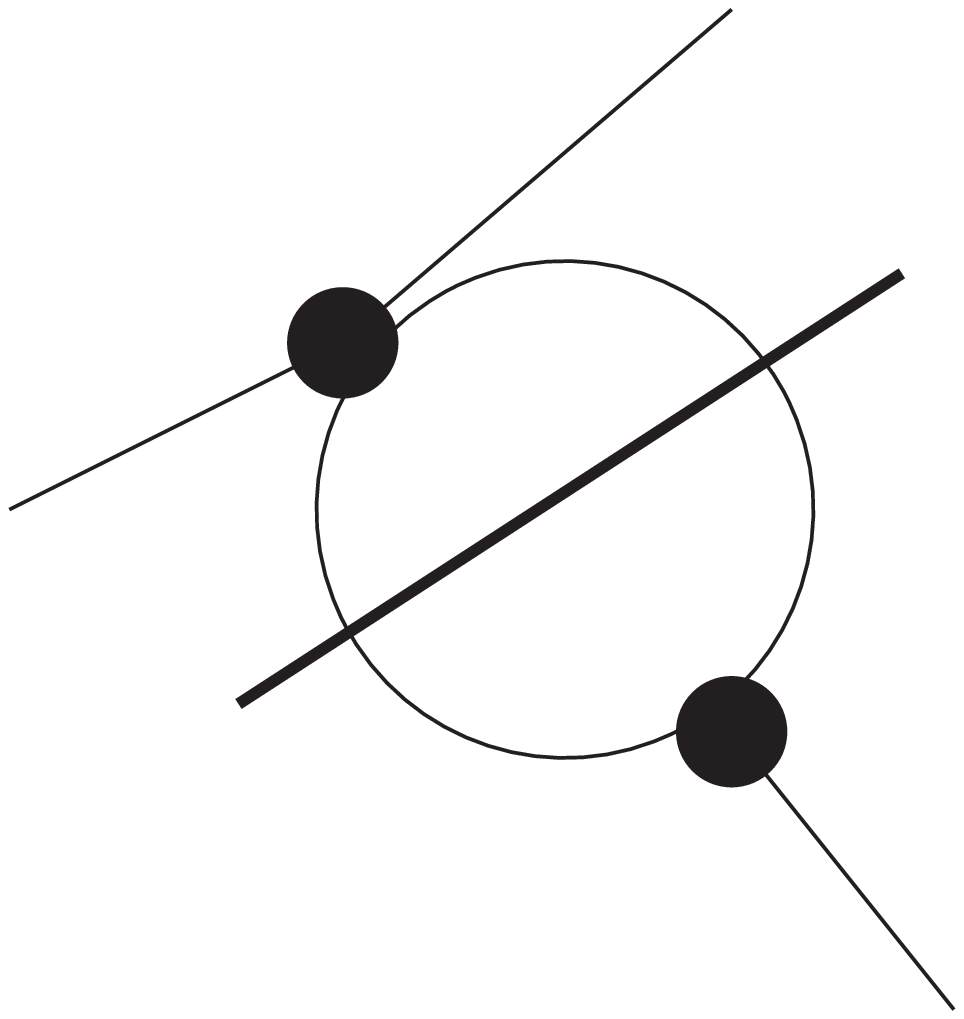}}\;}}
\def\rtrca{{\;\raisebox{-5mm}{\epsfysize=1cm\epsfbox{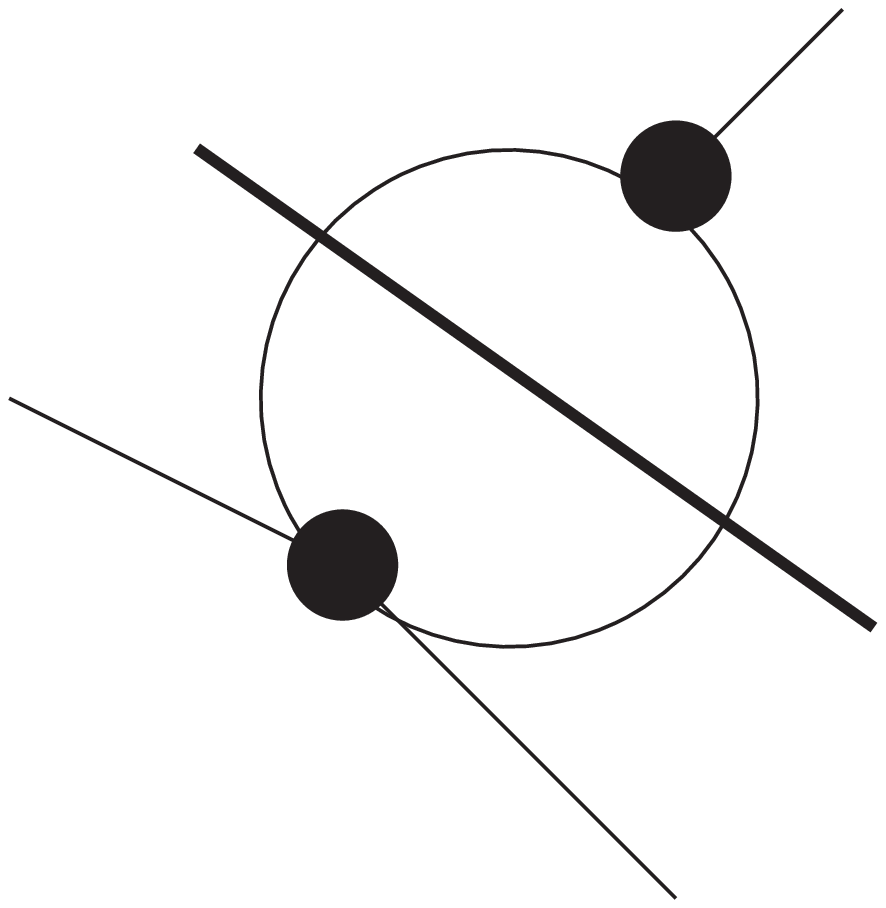}}\;}}
\def\rtrab{{\;\raisebox{-4mm}{\epsfysize=8mm\epsfbox{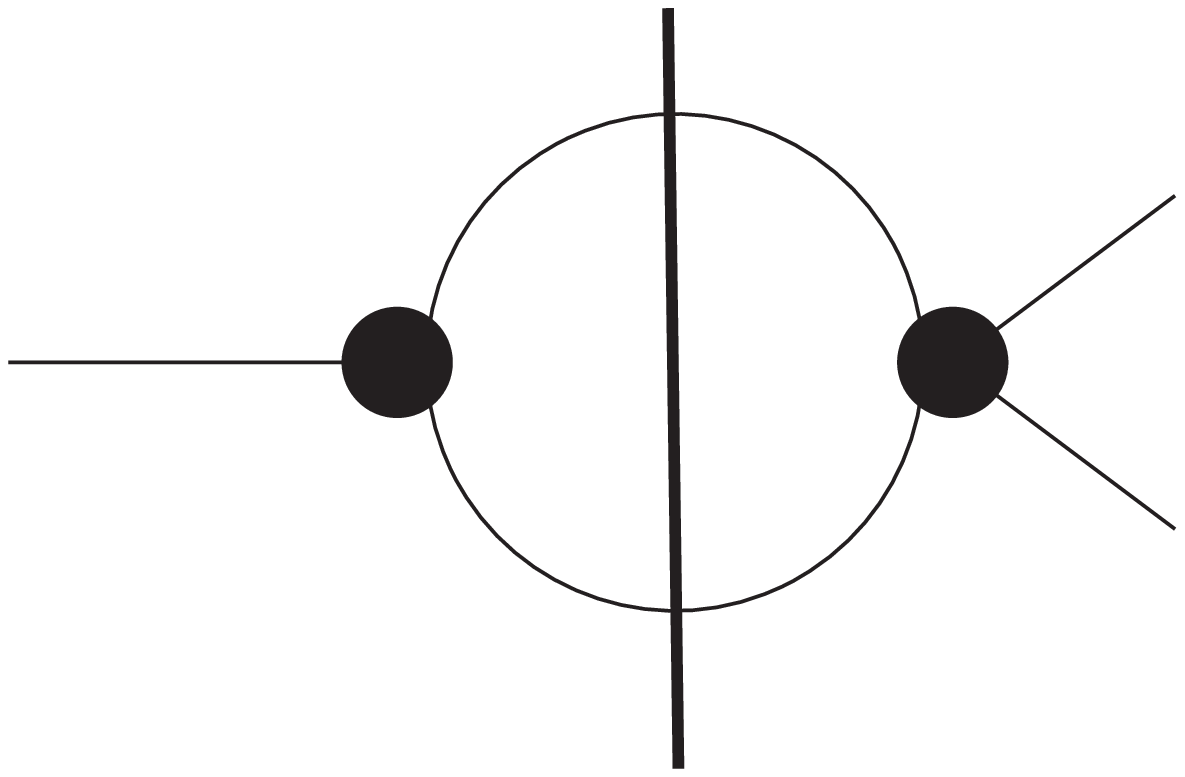}}\;}}
\title{Feynman Amplitudes and Landau Singularities for $1$-loop graphs}
\author{Spencer Bloch}\address{Dept. of Mathematics, University of Chicago, Chicago, IL 60637,
USA\\
E-mail address: bloch@math.uchicago.edu}
\author{Dirk Kreimer}
\address{CNRS-IHES, 91440 Bures sur Yvette, France and Center for Math.\ Phys.\, Boston U., Boston, MA 02215\\ E-mail address: kreimer@ihes.fr}
\begin{document}
\maketitle
\section{Introduction}
The subject of Feynman amplitudes with variable momenta and non-zero masses has been studied by physicists since the 1950's. In the interim, new mathematical methods involving Hodge structures and variations of Hodge structures have been developed. The purpose of this paper is to apply these techniques to the study of amplitudes and Landau singularities in momentum space. While the techniques we develop bear on the general case
here, we will mainly focus on the $1$-loop case.  In this case, for general values of masses and external momenta, the polar locus of the integrand (written in Feynman coordinates) is a smooth quadric. (Exceptionally, in the ``triangle case'', the polar locus is a union of a hyperplane and a quadric.) Mathematically, the polar loci form a degenerating family of such objects, which is a familiar and well-studied situation in algebraic geometry. Our objective is firstly to explain motivically the known fact \cite{DD} that dilogarithms are ubiquitous in this situation, and secondly to show how the motivic and Hodge-theoretic framework is a powerful way to study thresholds and Landau singularities.  

In section \ref{sectHS} we sketch briefly what we will need from the theory of Hodge Structures. The Hodge structures which arise in the context of one loop graphs are quite simple, but it is important to understand how to pass to limits in order to study thresholds in physics. In section \ref{sect2sym} we develop the basic properties of the second Symanzik polynomial which is treated as a quaternionic pfaffian in the sense of E.\ H.\ Moore \cite{T}. The motives we need to study are hypersurfaces defined by a linear combination of the first and second Symanzik polynomials. 

Section \ref{sectdf} develops the basic calculus of differential forms on projective space which is necessary to calculate the de Rham cohomology of our motives. Section \ref{sectpgi} is devoted to the essential technical result, lemma \ref{lem4.3}, which determines the structure of all the $1$-loop motives.  Section \ref{mot} defines the relevant motives. We show (formula \eqref{5.5}) that the weight graded object is a sum of Tate motives $\Q(i)$ for $i=0, -1, -2, -3$.  In section \ref{sectamp} we consider the amplitude itself and show it is a period of a sub Hodge structure (dilogarithm Hodge structure) involving only $\Q(0), \Q(-1), \Q(-2)$.  Completing this chain of ideas, we show in section \ref{sectdl} that the motive of such a dilog Hodge structure is always a sum of dilogs and squares of logarithms (cf. \eqref{2b} below). The argument is variational and uses Griffiths transversality. The authors learned it from \cite{D}. 

Section \ref{sect_tri} discusses the motive of the $1$-loop graph with $3$ edges, the {\it triangle graph}. It turns out that $gr^WH = \Q(0) \oplus \Q(-1)^{5-\nu} \oplus \Q(-2)$ where $\nu$ is the number of masses $m_i=0$. In an appendix, we discuss a duality theorem which is natural mathematically but doesn't have any obvious physical interpretation. Section \ref{sectlim} is a general discussion from a mathematical viewpoint of Landau poles and thresholds, and section \ref{sect_lim2} discusses the limiting mixed Hodge structures associated to various degenerations. Finally, the last section \ref{sect_phy} offers a physical interpretation of the period matrix in the triangle case via Cutkosky rules. 

\section{Hodge Structures}\label{sectHS}
A pure $\Q$-Hodge structure of weight $n$ is a finite dimensional $\Q$-vector space $H_\Q$ together with a decreasing (Hodge) filtration $F^*H_\C$ defined on $H_\C  := H_\Q\otimes \C$. $F^*$ is required to be $n$-opposite to its complex conjugate in the sense that for any $i$
\eq{2.1a}{H_\C \cong F^iH_\C \oplus \overline F^{n+1-i}H_\C,
}
Here $\overline F^j$ is obtained by applying complex conjugation to $F^j$. It is straightforward to check that if we define $H^{i,j} := F^i\cap \overline F^j$, then \eqref{2.1a} is equivalent to the direct sum decomposition 
\eq{}{H_\C = \bigoplus_i H^{i,n-i}.
}
A $\Q$-mixed Hodge structure is a finite dimensional $\Q$-vector space with an increasing filtration (weight filtration) $W_*H$ as well as a Hodge filtration $F^*H_\C$. We require that the induced Hodge filtration on $gr^W_nH_\C$ give $gr^W_nH$ the structure of a pure Hodge structure of weight $n$ for each $n$. 

The only pure Hodge structures of dimension $1$ are the Tate Hodge structures $\Q(n)$. By definition, $\Q(n)$ has weight $-2n$. We have 
\eq{}{F^i\Q(n)_\C = \begin{cases} 0 & i>-n \\
\Q(n)_\C & i\le -n.
\end{cases}
}
In other words, $\Q(n)_\C = H^{-n,-n}(\Q(n)_\C)$. 

A mixed Hodge structure $H$ is called {\it mixed Tate} if  
\eq{}{gr^W_nH = \begin{cases} 0 & n=2m-1 \\
\bigoplus \Q(-m) & n=2m.
\end{cases}
}

The central result of this paper is that Feynman amplitudes at $1$ loop involve only mixed Tate Hodge structures. 
Moreover, the Hodge structures which arise have only $3$ non-trivial weights which we can take to be $0, 2, 4$. 
We refer to them as {\it dilogarithm Hodge structures}. 
\begin{defn}\label{defndl}A dilogarithm mixed Hodge structure $H$ is a mixed Tate Hodge structure 
such that for some integer $n$, we have $gr^W_{2p}H = (0)$ for $p\neq n, n+1, n+2$. 
\end{defn}
We will see in section \ref{sectdl} that periods of dilogarithm Hodge structures have the form
\eq{2b}{\sum_\mu \int \log f_\mu \frac{dg_\mu}{g_\mu}
}
where the $f_\mu, g_\mu$ are rational functions.

For a mixed Tate Hodge structure, the weight and Hodge filtrations are opposite in the sense that
\eq{3}{F^{p+1}H_\C \cap W_{2p}H_\C = (0);\quad H_\C = \bigoplus_{p} (F^pH_\C\cap W_{2p}H_\C).
}
We may choose a basis $\{e^{p,p}_i\}$ of $H_\C$ with $e^{p,p}_i \in F^pH_\C\cap W_{2p}H_\C$. 
\begin{ex}[Kummer extensions] \label{kumex} To an element $x\in \C^\times$ we can associate a mixed Tate Hodge structure $E_x$ with $gr^WE_x = \Q(1)\oplus \Q(0)$. Define a free rank $2$ $\Z$-module $E_{x,\Z} = \Z \ve_{-1}\oplus \Z \ve_0$ with weight filtration $W_{-2}E_{x,\Z} = \Z\ve_{-1} = W_{-1}\subset W_0 = E_{x,\Z}$. Consider the diagram
\eq{2.6b}{\begin{CD} 0 @>>> \Z @> 1\mapsto 2\pi i>> \C @> \exp >> \C^\times @>>> 0 \\
@. @| @A \psi AA @A \ve_{0} \mapsto x AA \\
0 @>>> \Z \ve_{-1} @>>> E_{x,\Z} @>>> \Z \ve_0@>>> 0
\end{CD}
}
Here $\psi(\ve_{-1}) = 2\pi i$ and $\psi(\ve_{0}) = \log(x)$ for some branch of the logarithm. 

By linearity, $\psi$ extends to a $\C$-linear map $\psi: E_{x,\C} = E_{x,\Z}\otimes\C \to \C$. Define
\eq{2.8b}{F^0E_{x,\C} := \ker(\psi_\C) = \C\cdot (\ve_0-\frac{\log x}{2\pi i}\ve_{-1})\subset F^{-1}E_{x,\C} = E_{x,\C}.
}
We take $e^{0,0} = \ve_0-\frac{\log x}{2\pi i}\ve_{-1}$ and $e^{-1,-1} = \frac{1}{2\pi i}\ve_{-1}$. It is traditional for mixed Tate Hodge structures to consider the matrix where the columns, interpreted as coefficients of the $e^{i,i}$ form a basis for the $\Q$-structure. In this case $\ve_0 = e^{0,0} + (\log x)e^{-1,-1},\ \ve_{-1} = 2\pi i e^{-1,-1}$ so the matrix is
\eq{}{\begin{pmatrix}1 && 0 \\ \log x && 2\pi i
\end{pmatrix}.
}
The category of Hodge structures is abelian, and 
\eq{2.10d}{\text{Ext}^1(\Q(-1),\Q) \cong \C^\times;\quad E_x \mapsto x
}
\end{ex}

We remark that the category of Hodge structures has a tensor product. The definitions follow easily from the definition of a tensor product of filtered vector spaces. One has for example $\Q(m)\otimes \Q(n) = \Q(m+n)$. Tensoring with $\Q(n)$ for a suitable $n$, we may if we like arrange that any given mixed Tate Hodge structure has weights $0, 2,\cdots,2r$ for some $r$. 

The central point is that the Betti cohomology of any complex variety (indeed, more generally any diagram of complex algebraic varieties) carries a canonical and functorial Hodge structure. Because Betti groups can be computed using differential forms (de Rham cohomology) our Hodge structures will often have another rational structure coming from algebraic de Rham cohomology. This is useful in physics because it explains the powers of $2\pi i$ occurring in formulas. 
\begin{ex} Consider the Hodge structure $H:= H^1(\P^1-\{0,\infty\},\Q)$. By standard topology this group is one dimensional, dual to the first homology which is spanned by a small circle $S$ around $0$ oriented in a counterclockwise direction. Let $z \in H_\Q$ be a generator with $\langle z,S\rangle=1$. As a Hodge structure, $H = \Q(-1)$. On the other hand, the corresponding de Rham cohomology $H^1_{DR}(\P^1-\{0,\infty\})$ is the $\Q$-vector space defined by the $1$-form $dt/t$ where $t$ is the coordinate on $\P^1$. The pairing with homology is given by integration, and since $\int_S dt/t = 2\pi i$, it follows that $dt/t = 2\pi iz \in H$. The $\Q$-vector space $H$ is the Betti cohomology, $H = H^1_B(\P^1-\{0,\infty\},\Q)$ and the de Rham $\Q$-structure is given in this case by $H_{DR} = 2\pi iH_B$. 
\end{ex}

Families of varieties give rise to families of Hodge structures. Of particular interest is the {\it nilpotent orbit theorem} which is the basic tool in describing degenerations. In the physics surrounding $1$-loop Feynman graphs these degenerations (thresholds) are not as well understood as they might be, and we will show how the nilpotent orbit theorem can be applied. 

To avoid complications, we focus on a $1$-parameter degeneration $\{H_t\}$ parametrized by $t\in D^* = \{t\in \C\ |\ 0<|t|<\ve\}$. This means that we are given a local system $\sH$ over $D^*$ with fibre $H_t$. We have a weight filtration which is an increasing filtration $W_*\sH$ on the local system, and a Hodge filtration which is a decreasing filtration by coherent subbundles $F^*(\sH\otimes_\C \sO_{D^*})$. The point is that the Hodge filtration is not horizontal for the flat structure determined by the local system $\sH$. However, {\it Griffiths transversality} says 
\eq{2.9a}{\frac{d}{dt}F^i(\sH\otimes_\C \sO_{D^*}) \subset F^{i-1}(\sH\otimes_\C \sO_{D^*}).
}
A very general result in algebraic geometry gives that the monodromy on our local system is quasi-unipotent. In other words, replacing $t$ by $u=t^n$ for some $n$, the action $\sigma$ of winding around the puncture in $D^*$  on a fibre of $\sH$ will be unipotent. This allows us to deal for example with the square roots which one faces in 1-loop computations early on. Assuming this has been done, we write 
\eq{2.10b}{N := \log(\sigma)
}
so $N$ is a nilpotent endomorphism of a fibre. 

If we choose $t_0 \in D^*$, we can identify our variation of Hodge structure as a single $\Q$-vector space $H = \sH_{t_0}$ with a weight filtration $W_*H$, a nilpotent endomorphism $N: H \to H$ stabilizing $W_*$, and a variable Hodge filtration $F^*_tH$. In this situation, the nilpotent orbit theorem gives a decreasing filtration $F_{lim}H_\C$ such that the orbit of the one parameter subgroup $\exp(N\frac{\log t}{2\pi i})$ acting on the filtration $F_{lim}H_\C$ approximates the given $F^*_tH$:
\eq{2.11b}{\exp\left(N\frac{\log t}{2\pi i}\right)F_{lim} \sim F_t.
}
Another way to think about \eqref{2.11b} is to note (again by a general result in algebraic geometry) that the coherent sheaf $\sH\otimes \sO_{D^*}$ extends to a coherent sheaf $\widetilde \sH$ on $D$ in such a way that the connection on $\sH$ extends to a connection with log poles on $\widetilde \sH$, i.e. we have
\eq{}{\widetilde\nabla: \widetilde \sH \to \widetilde \sH\cdot \frac{dt}{t}. 
}
Choose a basis $\gamma_i$ for $H^\vee$ which we then view as a multi-valued basis of the local system $\sH^\vee$. We view the $\gamma_i$ as homology classes. For $\omega_t$ a section of $\widetilde\sH$ we write $\langle\gamma_i,\omega_t\rangle = \int_{\gamma_i} \omega_t$ as an integral. Then the entries of
\eq{2.13b}{ \exp\left(-N\frac{\log t}{2\pi i}\right)\begin{pmatrix} \vdots \\
\int_{\gamma_i}\omega_t \\
\vdots\end{pmatrix} 
}
are single-valued functions on $D^*$, and the limit $|t|\to 0$ exists. 

Furthermore, the filtration $F_t$ is meromorphic with respect to the extension in the sense that I can find a basis of the global sections of $\sH\otimes \sO_{D^*}$ which is compatible with the filtration $F^*(\sH\otimes \sO_{D^*})$ and which lies in $t^{-M}\widetilde\sH$ for $M\>>0$. This means there exists a unique saturated filtration $F^*\widetilde\sH$ inducing $F_t$ on $\sH\otimes\sO_{D^*}$. If we choose a basis $\omega_{t,j}$ of $\widetilde\sH$ compatible with the filtration and compute the limits in \eqref{2.13b}, we obtain a concrete matrix representation for $F_{lim}$. Note that $F_{lim}$ depends on the choice of a parameter $t$. For example, if I replace $t$ by $ct$ with $c\in \C^\times$ then the limit in \eqref{2.13b} is multiplied by $\exp(-N\frac{\log c}{2\pi i})$. 

In the context of the limiting Hodge filtration, Griffiths transversality \eqref{2.9a} becomes the condition
\eq{2.14b}{NF_{lim}^i \subset F_{lim}^{i-1}.
}

The limiting filtration $F_{lim}$ is itself the Hodge filtration for the {\it limiting mixed Hodge structure} $H_{lim}$. In general, the weight filtration on $H_{lim}$ is not the limit of the weight filtrations on the $H_t$. For example, the classical situation is when $H_t$ are {\it pure}, i.e. have a single weight. In that case, the limit weight filtration is determined in a canonical way by the nilpotent endomorphism $N$. For applications to $1$-loop amplitudes, we are interested in limits of dilogarithm mixed Tate Hodge structures. The action of monodromy on $gr^WH_t$ will be finite. (This is true quite generally because monodromy will stabilize both an integral and a unitary structure and hence lie in the intersection of a discrete group and a compact group. Such an intersection is necessarily finite.) Typically, in our examples, the eigenvalues of monodromy on $gr^WH_t$ will be $\pm 1$ so it may be necessary to replace $\sigma$ by $\sigma^2$. This explains the presence of $\sqrt{t}$ in formulas found by physicists. Once $\sigma$ is unipotent, however, in our examples, the weight filtration on $H_{lim}$ will be the given weight filtration. 

\begin{ex}\label{ex2.3} We consider a family of Kummer Hodge structures as in example \ref{kumex}. In \eqref{2.6b}, take $\ve_0 \mapsto x(t)$ for $x(t)$ a meromorphic function on the disk $D$, holomorphic away from $0$. Write $x(t) = t^Mu(t)$ with $u(0)\neq 0, \infty$.  With notation as in that example, we take $\gamma_i = \ve_i^\vee, \ i=0,-1$ (dual basis) and $\omega_i = e^{i,i}$. We get
\eq{}{\begin{pmatrix} \int_{\gamma_0}\omega_0 &&  \int_{\gamma_0}\omega_{-1} \\
\int_{\gamma_{-1}}\omega_0 &&  \int_{\gamma_{-1}}\omega_{-1}\end{pmatrix} = \begin{pmatrix}1 && 0 \\
\frac{-\log(t^Mu(t))}{2\pi i} && \frac{1}{2\pi i} \end{pmatrix}
}
The monodromy is given by
\eq{}{N = \begin{pmatrix}0 && 0 \\ -M && 0 \end{pmatrix}.
}
Clearly we should take 
\eq{}{F_{lim}  = \lim_{t\to 0} \exp\begin{pmatrix}0 && 0 \\ +M\log(t)/2\pi i && 0 \end{pmatrix}\begin{pmatrix}1 && 0 \\
\frac{-\log(t^Mu(t))}{2\pi i} && \frac{1}{2\pi i} \end{pmatrix} = \begin{pmatrix}1 && 0 \\ \frac{-\log(u(0))}{2\pi i} && \frac{1}{2\pi i} \end{pmatrix}
}
In this example $F^0= \C\cdot(\ve_0 - \frac{\log(u(0))}{2\pi i} \ve_{-1})= \C e^{0,0}_{lim}$ (defining $e^{0,0}_{lim}$). We take $e^{-1,-1}_{lim} = \frac{\ve_{-1}}{2\pi i}$.

It is a straightforward exercise to extend this construction to mixed Tate variations $H_t$ with $gr^WH_t = \Q(0)^p \oplus \Q(1)^q$ for arbitrary $p,q \ge 1$.
\end{ex}
\begin{ex} Suppose now $gr^WH_t = \Q(0)\oplus \Q(1)\oplus \Q(2)$. We associate to $H_t$ the two Kummer extensions $H_t'=W_{-2}H_t$ and $H_t''=H_t/W_{-4}H_t$. We assume as in example \ref{ex2.3} above that we have calculated the logarithms of monodromy $N', N''$. We can write $H_{lim,\C} =\C e^{0,0}_{lim} \oplus  \C e^{-1,-1}_{lim} \oplus  \C e^{-2,-2}_{lim}$ in such a way that $W_{i,\C} = \sum_{j\le i} \C e^{j,j}_{lim}$ is stable under $N$ and $NF^i_{lim} = N(\sum_{j\ge i} \C e^{j,j}_{lim}) \subset NF^{i-1}_{lim}$.  This implies 
\eq{2.18b}{Ne^{-2,-2}_{lim} = 0;\quad Ne^{-1,-1}_{lim} = a'e^{-2,-2}_{lim};\quad Ne^{0,0}_{lim} = a''e^{-1,-1}_{lim}.
}
Griffiths transversality for $N$ \eqref{2.14b} implies that $Ne^{0,0}_{lim}$ does not involve $e^{-2,-2}_{lim}$. As a consequence, $N$ for the dilogarithm motive is determined by $N'$ and $N''$ for the Kummer sub and quotient motives. These are usually straightforward to calculate. 
\end{ex}

\section{The Second Symanzik Polynomial}\label{sect2sym}

The second Symanzik polynomial is used in the calculation of the
Feynman amplitude associated to a graph $G$ with possibly non-trivial
external momenta. In the physics literature it is usually
derived directly from the linear algebra of Feynman coordinates
\cite{Smirnov}, \cite{Tod}. We will show in this section that it can also be
interpreted as a pfaffian in the   
sense of E.\ H.\ Moore \cite{T} associated to a quaternionic hermitian
matrix. We give the pfaffian construction, but our proof that the
resulting polynomial coincides with the second Symanzik polynomial is
not particularly elegant, so it is relegated to an appendix in the following section. Two
consequences of the pfaffian viewpoint which we do not
pursue further, are firstly that the polynomial is a {\it
  configuration polynomial} for quaternionic subspaces of a based
quaternionic vector space, and hence the techniques of \cite{P} should
apply to the study of the singularities, 
and secondly that for each loop number there is a universal
family. For example, with one loop and $6$ edges, the hypersurface
defined by the second Symanzik is the complement in $\P^5$ of the
complex points of a coset space $GL_2(A)/U_2(A)$ where $A$ is the
quaternions and $U_2(A)$ is the subgroup of $2\times 2$ quaternionic
matrices $M$ satisfying $\overline M^t = M^{-1}$. The Betti numbers
of these coset spaces are known, and one may hope to better understand
the motives of physical interest from this viewpoint.  It will be interesting to study 
the corresponding family at two loops in future work. 

Note that in the presence of non-trivial masses $m_i$, the actual polynomial of
physical interest is 
\eq{2.1}{\Phi(A,q)-(\sum m_i^2A_i)\Psi(A)
}
where $\Psi$ and $\Phi$ are respectively the first and second Symanzik
polynomials.

We write the quaternions $\sA = \R\cdot 1 \oplus \R\cdot i\oplus \R\cdot j \oplus \R\cdot k$ as usual, and we embed $\sA \inj M_2(\C)$ by 
\ga{}{1 \mapsto \begin{pmatrix}1 & 0 \\ 0 & 1\end{pmatrix}; \quad i \mapsto \begin{pmatrix}i & 0 \\ 0 & -i\end{pmatrix} \\
j \mapsto \begin{pmatrix}0 & -1 \\ 1 & 0\end{pmatrix};\quad  k \mapsto \begin{pmatrix}0 & -i \\ -i & 0\end{pmatrix}. \notag
}

Let $u = \begin{pmatrix}0 & -1 \\ 1 & 0\end{pmatrix}$. One checks that the anti-involution $x \mapsto \bar x$ on $\sA$ given by $\bar\ve = -\ve$ for $\ve = i, j, k$ corresponds to $m \mapsto u^{-1}m^t u$ on $M_2(\C)$. More generally, we may embed $M_n(\sA) \inj M_n(M_2(\C))\inj M_{2n}(\C)$ and the anti-involution $x \mapsto \bar x^t$ on $M_n(\sA)$ corresponds to $M \mapsto U^{-1}M^tU$ where $U$ is the diagonal matrix with $u$ along the diagonal. Note that $U$ is skew-symmetric, $U^t = -U$. 

The reduced norm, $Nrd: M_n(\sA) \to \R$ is a polynomial of degree $2n$ which corresponds to the determinant on $M_{2n}(\C)$. 

Let $Herm \subset M_n(\sA)$ be the $\R$-vector space of {\it Hermitian} elements, which we can think of as all elements of the form $x+\bar x^t$. 
\begin{prop}[Moore, Tignol]\label{prop2.1a} There exists a unique polynomial map, the pfaffian norm or Moore determinant $Nrp: Herm \to \R$ such that $Nrp(I) = 1$ and $Nrp(y)^2 = Nrd(y)$. 
\end{prop}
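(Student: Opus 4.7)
The plan is to reduce the existence of $Nrp$ to the classical Pfaffian of a skew-symmetric complex matrix. First I would translate the Hermitian condition on $y \in M_n(\sA)$ into matrix language: since $y \mapsto \bar y^t$ corresponds to $Y \mapsto U^{-1}Y^t U$, the Hermitian condition becomes $Y^t U = UY$, and combined with $U^t = -U$ a one-line calculation gives $(UY)^t = Y^t U^t = -Y^t U = -UY$, so $UY$ is skew-symmetric.

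This makes the choice of $Nrp$ obvious: set $Nrp(y) := \operatorname{Pf}(UY)/\operatorname{Pf}(U)$ using the classical Pfaffian of a $2n\times 2n$ skew-symmetric matrix. Because $U$ is block diagonal with $n$ copies of $u$, and $\operatorname{Pf}(u) = -1$, one has $\operatorname{Pf}(U) = (-1)^n \neq 0$, so the definition is legitimate. The normalization $Nrp(I) = 1$ is tautological, and using $\operatorname{Pf}(A)^2 = \det(A)$ for skew-symmetric $A$, together with $\det(U) = \operatorname{Pf}(U)^2 = 1$,
\[
Nrp(y)^2 \;=\; \frac{\operatorname{Pf}(UY)^2}{\operatorname{Pf}(U)^2} \;=\; \frac{\det(UY)}{\det(U)} \;=\; \det(Y) \;=\; Nrd(y).
\]

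The one subtle point is that $Nrp$ should be a polynomial with \emph{real} coefficients in the $n(2n-1)$ real parameters describing $Herm$. Under the embedding, each entry of $Y$ (and hence of $UY$) is a $\C$-linear function of these parameters, so a priori $\operatorname{Pf}(UY)$ is only a polynomial with complex coefficients. Writing it as $u + iv$ with $u,v \in \R[\text{parameters}]$, the identity $(u+iv)^2 = \operatorname{Pf}(U)^2 \cdot Nrd(y)$ takes real values, forcing $uv = 0$; integrality of $\R[\text{parameters}]$ then makes one of $u, v$ vanish identically, and the real value $\operatorname{Pf}(U) = \pm 1$ at $y = I$ rules out the purely imaginary alternative.

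Uniqueness is formal: if $P_1, P_2 \in \R[Herm]$ both satisfy $P_i^2 = Nrd$ and $P_i(I) = 1$, then $(P_1-P_2)(P_1+P_2) = 0$ in the integral domain $\R[Herm]$, so $P_1 = \pm P_2$, and the normalization at $I$ pins down the sign. I expect the polynomial-reality step to be the only nontrivial technical point; once one has the Pfaffian viewpoint, the defining properties of $Nrp$ reduce to classical identities for the usual determinant and Pfaffian.
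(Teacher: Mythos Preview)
Your proof is correct and follows essentially the same route as the paper: both reduce $Nrd(y)=\det(Y)$ to the square of a classical Pfaffian by observing that multiplying a Hermitian $Y$ by the skew matrix $U$ produces a skew-symmetric matrix. The only cosmetic difference is that the paper writes $y=x+\bar x^t$ and works with $YU = MU-(MU)^t$, whereas you work directly with $UY$; you are also more explicit than the paper about why the resulting polynomial has real (not merely complex) coefficients and about uniqueness, both of which the paper leaves implicit.
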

\begin{proof}We can compute in $M_{2n}(\C)$. We have
\ml{}{\det(M + UM^tU^{-1}) = \det((MU-(MU)^t)U^{-1}) = \\
\det(MU-(MU)^t)\det(U^{-1})=(\text{pfaff}(MU-(MU)^t))^2\cdot \text{pfaff}(U^{-1})^2,
}
using the fact that the determinant of a skew matrix is the square of the pfaffian. 
\end{proof}
\begin{cor}\label{cor2.2a}Suppose $M$ in the above proposition is block diagonal with quaternionic hermitian matrices $M_1,\dotsc,M_p$ along the diagonal. Then $Nrp(M) = \prod Nrp(M_j)$. 
\end{cor}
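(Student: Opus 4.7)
The plan is to exploit the uniqueness statement built into Proposition~\ref{prop2.1a} together with the multiplicativity of the reduced norm on block-diagonal matrices. Write $n = n_1 + \cdots + n_p$ and let
\[
\iota: \mathrm{Herm}_{n_1}\times\cdots\times\mathrm{Herm}_{n_p} \longrightarrow \mathrm{Herm}_n
\]
be the $\R$-linear block-diagonal embedding $(M_1,\ldots,M_p)\mapsto \mathrm{diag}(M_1,\ldots,M_p)$. I will compare, as polynomial functions on the $\R$-vector space on the left, the two maps
\[
f(M_1,\ldots,M_p) := Nrp(\iota(M_1,\ldots,M_p)), \qquad g(M_1,\ldots,M_p) := \prod_{j=1}^p Nrp(M_j),
\]
and show $f=g$.

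First I note that under the embedding $M_n(\sA)\hookrightarrow M_{2n}(\C)$ described in the paper, the block-diagonal structure is preserved (each $M_j$ sits in its own $2n_j\times 2n_j$ block of the complex matrix), so the multiplicativity of the complex determinant on block-diagonal matrices yields
\[
Nrd(\iota(M_1,\ldots,M_p)) = \prod_{j=1}^p Nrd(M_j).
\]
By Proposition~\ref{prop2.1a}, $Nrp(y)^2 = Nrd(y)$ on Hermitian matrices, hence
\[
f(M_1,\ldots,M_p)^2 = \prod_j Nrd(M_j) = \prod_j Nrp(M_j)^2 = g(M_1,\ldots,M_p)^2
\]
as polynomial identities on $\mathrm{Herm}_{n_1}\times\cdots\times\mathrm{Herm}_{n_p}$.

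The polynomial ring on this real vector space is an integral domain, so the identity $f^2=g^2$ factors as $(f-g)(f+g)=0$ and forces $f = \pm g$. To pin down the sign, evaluate at the point $(I_{n_1},\ldots,I_{n_p})$: then $\iota$ sends this tuple to $I_n$, and $Nrp(I_n)=1$ by the normalization in Proposition~\ref{prop2.1a}, while $\prod_j Nrp(I_{n_j}) = \prod_j 1 = 1$. Hence $f$ and $g$ agree at this point, ruling out the minus sign and giving $f = g$, which is exactly the claimed identity $Nrp(M) = \prod Nrp(M_j)$.

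The only genuinely delicate step is the sign determination; it would fail if the space of tuples of Hermitian matrices were disconnected in the sense that $f=+g$ on one component and $f=-g$ on another. Because the underlying space is an irreducible affine variety (an $\R$-vector space) and the polynomial ring is a domain, a single normalization point suffices, which is why the normalization $Nrp(I)=1$ built into Proposition~\ref{prop2.1a} does all the work.
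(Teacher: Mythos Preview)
Your proof is correct and takes a different route from the paper's. The paper argues directly from the Pfaffian construction in the proof of Proposition~\ref{prop2.1a}: when $M$ is block-diagonal the auxiliary matrices $MU-(MU)^t$ and $U^{-1}$ appearing there are themselves block-diagonal (with skew-symmetric blocks), and since the ordinary Pfaffian of a skew-symmetric matrix is multiplicative on block-diagonal matrices, the formula for $Nrp$ factors blockwise. Your argument instead uses only the \emph{characterization} of $Nrp$ in the statement of Proposition~\ref{prop2.1a} (polynomial, normalized at $I$, squares to $Nrd$), together with the multiplicativity of $Nrd$ via the complex determinant and an integral-domain argument to pin down the sign. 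Your approach has the virtue of being black-box with respect to how $Nrp$ was constructed; the paper's approach is more hands-on but requires re-entering the proof of the proposition to track the block structure through the Pfaffian formula.
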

\begin{proof} The assertion is true for the usual pfaffians for skew matrices, and all the matrices in the proof of proposition \ref{prop2.1a} are in block diagonal form. 
\end{proof}
One way to construct elements in $Herm$ is to take $\R$-linear combinations of rank $1$ hermitian elements $x=\bar x^t$. The latter are given by
\eq{}{x=\begin{pmatrix}\bar a_1\\ \bar a_2\\ \cdots \\ \bar a_n \end{pmatrix}\cdot \begin{pmatrix}a_1,a_2\dotsc, a_n \end{pmatrix} = (\bar a_ia_j)_{1\le i, j\le n},
}
where $a_1,\dotsc,a_n \in \sA$. Given a collection $x_1,\dotsc,x_p$ of such hermitian elements, we can construct a polynomial of degree $n$ in $A_1,\dotsc,A_p$ by taking
\eq{2.4a}{\Phi(A_1,\dotsc,A_p) := Nrp(\sum_{i=1}^p A_ix_i).
}

View $\sA^p$ as a right $\sA$-vector space of column vectors. Let $\sH \subset \sA^p$ be a subspace with $\dim_\sA\sH = n$. Choose a basis $\alpha_1,\dotsc,\alpha_n$ for $\sH$ with $\alpha_i=(a_{1i},\dotsc,a_{pi})^t$. Define
\eq{}{e_j^\vee = (a_{j1},\dotsc,a_{jn}),\quad 1\le j\le p
}
Take $x_j = \bar e_j^{\vee,t}\cdot e_j^\vee$ and define  
\eq{2.6a}{\Phi_\sH := Nrp(\sum A_ix_i)
}
as in \eqref{2.4a}. Writing $\alpha = (a_{ij})$, a $p\times n$ matrix, one sees that a different choice of basis for $\sH$ yields a matrix $\beta = (b_{ij})= \alpha M$ where $M$ is $n\times n$ and invertible. We have $(b_{j1},\dotsc,b_{jn}) = e_j^{\vee}M$ so $x_j$ is replaced by $\overline{M}^txM$. 
\begin{lem}\label{lem2.3} Let $M, N$ be $n\times n$ matrices with
  entries in $\sA$. Assume $N = \overline{N}^t$ and $M$ is
  invertible. then  
\eq{2.7a}{Nrp(\overline{M}^tNM) = Nrd(M)Nrp(N). 
}
\end{lem}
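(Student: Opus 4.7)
The plan is to compare squares on both sides using Proposition \ref{prop2.1a}, and then fix a sign.

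First I would translate everything into $M_{2n}(\C)$ via the embedding described just before Proposition \ref{prop2.1a}. Under this embedding, $\bar M^t$ corresponds to $U^{-1}M^tU$, and the reduced norm $Nrd$ corresponds to the ordinary determinant. Hence
\[
Nrd(\bar M^t) = \det(U^{-1}M^tU) = \det(M^t) = Nrd(M),
\]
and, since $Nrd$ is multiplicative,
\[
Nrd(\bar M^t N M) = Nrd(\bar M^t)\,Nrd(N)\,Nrd(M) = Nrd(M)^2\, Nrd(N).
\]
Note that $\bar M^t N M$ is Hermitian whenever $N$ is, so $Nrp(\bar M^t N M)$ is defined by Proposition \ref{prop2.1a}. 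Applying that proposition to both sides of the identity above yields
\[
Nrp(\bar M^t N M)^2 \;=\; Nrd(M)^2\, Nrp(N)^2.
\]

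Next I would upgrade this squared identity to the unsigned statement \eqref{2.7a}. Since $Nrp$ is a polynomial in the entries of its argument and $Nrd$ is a polynomial in the entries of $M$, both sides of the putative identity \eqref{2.7a} are polynomials in the (real) matrix entries of $M$ and $N$. The squared identity factors as
\[
\bigl(Nrp(\bar M^t N M) - Nrd(M)Nrp(N)\bigr)\bigl(Nrp(\bar M^t N M) + Nrd(M)Nrp(N)\bigr) = 0
\]
in the polynomial ring, which is an integral domain, so exactly one of the two factors vanishes identically. Specializing to $M = I$ reduces both sides of \eqref{2.7a} to $Nrp(N)$ (using $Nrp(I) = 1$), which forces the first factor to be the one that vanishes.

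The only real subtlety is the uniqueness of the square root implicit in the definition of $Nrp$; once one notes, as above, that $Nrp$ and $Nrd$ are honest polynomials in the entries, the sign ambiguity is harmless and the specialization at $M=I$ pins it down. Alternatively, one could argue by continuity: $GL_n(\sA)$ is connected, the ratio $Nrp(\bar M^t N M)/(Nrd(M)Nrp(N))$ is a continuous $\pm 1$-valued function of $M$ on the dense open set where $Nrp(N)\neq 0$, and it equals $+1$ at $M=I$. Either argument yields \eqref{2.7a}.
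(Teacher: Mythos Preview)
Your proof is correct and follows essentially the same route as the paper: compare the squares via $Nrd(\bar M^t N M)=Nrd(M)^2Nrd(N)$, observe that both sides of \eqref{2.7a} are polynomials in the entries, and pin down the sign by specializing at $M=I$. You have simply written out the details (e.g.\ $Nrd(\bar M^t)=Nrd(M)$ and the integral-domain factoring argument) that the paper leaves implicit.
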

\begin{proof}Both sides of \eqref{2.7a} are polynomial maps in the entries of $M$ and $N$, and they have the same square. It follows that the ratio is constant. For $M$ the identity matrix, the ratio is $1$. (Note $Nrd(M) = \det(\iota(M))$ where $\iota: Mat_n(\sA) \inj Mat_{2n}(\C)$ is defined via the embedding $\sA \inj Mat_2(\C)$. In particular, $Nrd(M)=Nrd(-M)$.)
\end{proof}

As a consequence of the lemma, $\Phi(\sum A_i\overline{M}^tx_iM) = Nrd(M)\Phi(\sum A_ix_i)$ so $\Phi_\sH$ is well defined upto a non-zero constant factor.

Consider a graph $\Gamma$ with edge set $E$ and vertex set $V$. Let $H = H_1(\Gamma,\Q)$, and choose a basis $H \cong \Q^r$. We have 
\eq{3.9c}{0 \to \Q^r \to \Q^E \xrightarrow{\partial} \Q^{V,0} \to 0,
}
where $\Q^{V,0} \subset \Q^V$ is the image of the boundary map $\partial$. If we tensor with $\sA$ we get 
\eq{3.10c}{0 \to \sA^r \to \sA^E \to \sA^{V,0} \to 0
}
Suppose we are given $q:= (\ldots q_v,\ldots) \in \sA^{V,0}$. Let $\sH_q \subset \sA^E$ be the sub right $\sA$-module in $\sA^E$ spanned by $\sA^r$ and a lifting $\tilde q$ of $q$. To each $e\in E$ we define an $r+1$-vector $w_e=(w_{e,1},\dotsc,w_{e,r+1})$ by looking at the $e$-th coordinate of the $r$ basis vectors for $H \otimes \sA$ together with $\tilde q$. Note $w_{e,1},\dotsc,w_{e,r} \in \R$. Define (quaternionic) hermitian matrices
\eq{2.10a}{x_e := \overline{w_e}^t\cdot w_e. 
}
The {\it second Symanzik polynomial} is the configuration polynomial \eqref{2.6a} for $\sH=\sH_q$
\eq{2.11a}{\Phi(A)_{\Gamma,q} := Nrp(\sum_E A_ex_e). 
}

\begin{ex}Take $r=1$ and $H=\Q (e_1+\cdots +e_n)$. (This is the $1$-loop case.) Let $\tilde q = \sum \mu_e e \in \sA^E$. Then 
\eq{}{N:= \sum A_ex_e = \begin{pmatrix} \sum_E A_e & \sum A_e\mu_e \\ \sum A_e \bar \mu_e & \sum A_e \bar\mu_e\mu_e\end{pmatrix}.
}
We will see that in this case
\ml{2.13a}{\Phi(A) =Nrp(N) = -(\sum A_e \bar\mu_e)(\sum A_e\mu_e) + (\sum A_e)(\sum A_e\bar\mu_e\mu_e)= \\
\sum_{i<j}\overline{(\mu_i-\mu_j)}(\mu_i-\mu_j)A_{e_i}A_{e_j}.
}
The physics convention  would write $\mu_i = \sum_{j=i}^n q_j$ with $\mu_1=0$. The result in \eqref{2.13a} becomes
\eq{2.14a}{\Phi(A)_{\Gamma,q} = \sum_{i<j}\overline{(q_i+\cdots +q_{j-1})}(q_i+\cdots +q_{j-1})A_iA_j.
}
Again as in \eqref{2.1}, the polynomial of physical interest is
\eq{2.16}{D(q,A):= \sum_{i<j}\overline{(q_i+\cdots +q_{j-1})}(q_i+\cdots
  +q_{j-1})A_iA_j-(\sum m_i^2A_i)(\sum A_i).
}
\end{ex}
\begin{ex}
Let us actually discuss one more example. Consider the three-edge banana.
We take as a basis the two independent cycles $\{e_1,e_2\}$ and $\{e_2,e_3\}$.
The matrix is then given as
$$ 
N:=A_1 (1,0,\bar{\mu_1})^T\cdot (1,0,{\mu_1})
+A_2 (1,1,\bar{\mu_2})^T\cdot (1,1,{\mu_2})
+A_3 (0,1,\bar{\mu_3})^T\cdot (0,1,{\mu_3}),
$$
\eq{}{
N=\begin{pmatrix}
A_1+A_2 & A_2 & A_1\mu_1+A_2\mu_2\\
A_2 & A_2+A_3 & A_2\mu_2+A_3\mu_3\\
A_1\bar{\mu_1}+A_2\bar{\mu_2} &  A_2\bar{\mu_2}+A_3\bar{\mu_3} & A_1\bar{\mu_1}\mu_1+A_2\bar{\mu_2}\mu_2+A_3\bar{\mu_3}\mu_3
\end{pmatrix}
}
We have $NRP(N)=A_1A_2A_3\overline{(\mu_1-\mu_2+\mu_3)}(\mu_1-\mu_2+\mu_3)$.
\end{ex}

\section{Appendix to section \ref{sect2sym}}
It remains to show that our definition of the second Symanzik polynomial coincides with the classical physical definition \cite{ItzZ}, formulas 6-87 and 6-88. (The argument which follows parallels the argument for scalar momenta given in \cite{P}.) 

\begin{lem}Let $\sH \subset \sA^p$ be a subspace as above. Then $\Phi_\sH$ has degree $\le 1$ in each $A_i$. 
\end{lem}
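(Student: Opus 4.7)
The plan is to pass from the pfaffian norm to the reduced norm. By Proposition \ref{prop2.1a} we have $Nrp(N)^{2}=Nrd(N)$, and since $Nrp$ is a polynomial in the entries of $N$, it suffices to show that $Nrd(\sum_{j}A_{j}x_{j})$ has degree at most $2$ in each variable $A_{i}$. Indeed, if a polynomial $P$ satisfies $P^{2}=Q$ with $\deg_{A_{i}}Q\le 2$, then $\deg_{A_{i}}P\le 1$.

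Fix an index $i$ and write $N=B+A_{i}x_{i}$ with $B=\sum_{j\neq i}A_{j}x_{j}$. Passing through the embedding $\iota:M_{n}(\sA)\inj M_{2n}(\C)$ gives
$$Nrd(N)=\det\bigl(\iota(B)+A_{i}\,\iota(x_{i})\bigr).$$
The key observation is that $\iota(x_{i})$ has complex rank at most $2$. Since $x_{i}=\bar e_{i}^{\vee,t}\!\cdot\!e_{i}^{\vee}$ is the outer product of an $n\times 1$ quaternionic column with a $1\times n$ quaternionic row, and $\iota$ is multiplicative on products of matrices of compatible sizes, we get a factorization
$$\iota(x_{i})=\iota(\bar e_{i}^{\vee,t})\cdot\iota(e_{i}^{\vee})$$
of the $2n\times 2n$ complex matrix $\iota(x_{i})$ through $\C^{2}$, whence $\rank\iota(x_{i})\le 2$.

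It remains to invoke the standard linear-algebra fact that, for any $m\times m$ complex matrices $M,X$ with $\rank X\le r$, the polynomial $\det(M+tX)\in\C[t]$ has degree at most $r$. (Expanding by multilinearity in columns, $\det(M+tX)=\sum_{S}t^{|S|}\det(A_{S})$ over subsets $S\subseteq\{1,\dots,m\}$, where $A_{S}$ has the $X$-column in slots $S$ and the $M$-column elsewhere; for $|S|>r$ the $|S|$ chosen columns of $X$ lie in an $r$-dimensional subspace, hence are linearly dependent, forcing $\det(A_{S})=0$.) Applied with $m=2n$ and $r=2$, this gives $\deg_{A_{i}}Nrd(N)\le 2$ and therefore $\deg_{A_{i}}Nrp(\sum A_{j}x_{j})\le 1$.

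The only step that might give pause is the rank bound for $\iota(x_{i})$, but this reduces at once to multiplicativity of $\iota$; the rest of the argument is mechanical once one accepts that $Nrp$ is a square-root of $Nrd$ in the polynomial ring.
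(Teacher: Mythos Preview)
Your argument is correct. The rank bound on $\iota(x_i)$ is sound: since $x_i=\bar e_i^{\vee,t}\cdot e_i^\vee$ factors as a product of an $n\times 1$ by a $1\times n$ quaternionic matrix, applying the block embedding $\iota$ entrywise gives a factorization through $\C^2$, and the multilinear expansion of $\det(M+tX)$ then caps the degree in $A_i$ at $2$. Squaring back via $Nrp^2=Nrd$ finishes it.

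The paper takes a different route. Rather than bounding the degree in a single variable directly, it proves the contrapositive by restriction: setting $A_k=0$ for $k\notin T$ yields the configuration polynomial for the composite $\sH\to\sA^p\to\sA^T$, and if $\#T<\dim\sH$ this map has a kernel, which forces the matrix $\sum_{k\in T}A_kx_k$ to be singular (over $\sA$) and hence $Nrp$ to vanish identically on that coordinate subspace. Thus no monomial supported on fewer than $\dim\sH$ variables can survive, and since the total degree is $\dim\sH$, each variable occurs to degree at most one. Your approach is cleaner and more self-contained for the bare statement. The paper's argument, however, yields the extra information that a monomial $\prod_{e\in T}A_e$ can appear only when $\sH\hookrightarrow\sA^T$ is injective; this is precisely what is invoked a few lines later to identify the supporting monomials of $\Phi_{\Gamma,q}$ with cut sets of the graph.
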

\begin{proof} First note that if $\alpha_1,\dotsc,\alpha_n \in \sA^p$ satisfy a linear relation $\sum \alpha_i a_i=0$, then the $x_i$ in \eqref{2.6a}, viewed as map of row vectors $\sA^n \to \sA^n$ by multiplication on the right, kills the row vector $((\bar a)_1,\dotsc,(\bar a)_n)$. It follows that the matrix $\sum A_ix_i$ does not have maximal rank, so $\Phi(\sum A_ix_i)=0$. 

If some $A_i$ appears to degree $\ge 2$ in some monomial in $\Phi_\sH$, then the monomial can contain at most $\dim \sH-1$ distinct $A_j$. Let $T\subset \{1,\dotsc,p\}$ be the indices occurring in this monomial. By assumption, $\#T<\dim\sH$. Consider the diagram
\eq{2.15a}{\begin{CD} \sH @>>> \sA^p \\
@| @VV\text{proj} V \\
\sH @>\iota >> \sA^T
\end{CD}
}
It is immediate that $\Phi_\sH|_{A_k=0, k\not\in T}$ is the configuration polynomial for the bottom row in \eqref{2.15a}. If this is non-zero, then by the above, the map $\iota$ must be injective. In particular, $\#T\ge \dim \sH$, a contradiction. 
\end{proof}

We now consider $\sH_q \subset \sA^E$ as above. Let $C \subset E$ with $\# C=r+1$. A necessary condition for the monomial $\prod_{e\in C} A_e$ to appear in $\Phi_{\Gamma,q}$ is that $H \inj \Q^E/\Q^{E-C}$. Such a set $C$ of edges is called a {\it cut set}. For a cut set $C$, there exists a spanning tree $T$ and an edge $e\in T$ such that $T-e=E-C$. We choose an $\sA$-basis $h_1,\dotsc,h_{r+1}$ for $\sH_q$ such that $h_1,\dotsc,h_r \in H_1(\Gamma,\Q) \subset \sH_q$. For $c\in C$ let $w_c: \sH_q \to \sA^C$ be the map $w_c(h) = c^\vee(h)c$. Let $\bar w_c^t: \sA^C \to \sH_q$ be the map $\sA^C \surj c\sA \to H\cong \sA^{r+1}$ given by $c\mapsto (c^\vee(h_1),\dotsc,c^\vee(h_{r+1}))$. Here we identify $\sH_q=\sA^{r+1}$ using the basis $\{h_i\}$. Note that $\bar w_c^t w_{d} = 0$ for $c\neq d$. It follows that writing $R_C = \sum_{c\in C} w_c$ we have $\bar R_C^t R_C = \sum_{c\in C} x_c$ with $x_c$ as in \eqref{2.10a}. Thus
\eq{2.16a}{Nrp(\bar R_C^t R_C) = Nrp(\sum A_ex_e)|_{A_e=0,\ e\not\in C,\ A_e=1,\ e\in C} = \text{coefficient of $\prod_{c\in C} A_c$ in $\Phi_{\Gamma,q}$}.
}
It follows from lemma \ref{lem2.3} that this coefficient equals $Nrd(R_C)$. Note that the $(r+1)\times (r+1)$-matrix $R_C$ has real entries except for the last column. Let us define the $\sA$-determinant to be the expansion in the last column
\eq{2.17a}{\det{\!}_\sA(R_C) := (-1)^{r+1}\sum_i(-1)^i\det(R_C^{i,r+1})(R_C)_{i,r+1}
}
where $(R_C)^{i,r+1}$ denotes the minor. Note this matrix has $\R$-coefficients, so the determinant is defined. 
\begin{lem}With notation as above, we have $Nrd(R_C) = (\det_\sA(R_C))\overline{(\det_\sA(R_C))}$.
\end{lem}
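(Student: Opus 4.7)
The plan is to reduce the identity to a trivial normal form via left multiplication by a real invertible matrix and then read off both sides directly. Write $R_C = [\,A \mid v\,]$ with $A$ the real $(r+1)\times r$ matrix of the first $r$ columns and $v \in \sA^{r+1}$ the last (quaternionic) column. If $A$ has real rank less than $r$, every minor $\det A_{\hat i}$ vanishes and so does $\det{\!}_\sA(R_C)$ by \eqref{2.17a}; on the other side, the $r$ $\R$-linearly-dependent real columns of $R_C$ contribute $2r$ $\C$-linearly-dependent columns to $\iota(R_C)$, forcing $Nrd(R_C)=0$. So I may assume $A$ has full rank $r$ and pick $M \in GL_{r+1}(\R)$ with $MA = \begin{pmatrix} I_r \\ 0 \end{pmatrix}$; writing $Mv = \begin{pmatrix} u_1 \\ u_2 \end{pmatrix}$ with $u_1 \in \sA^r$ and $u_2 \in \sA$, I obtain the normal form $MR_C = \begin{pmatrix} I_r & u_1 \\ 0 & u_2\end{pmatrix}$.

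The next step is to show that both sides scale by the same factor $(\det M)^2$ under $R_C \mapsto MR_C$. Since $Nrd = \det\circ\iota$ with $\iota$ a ring map, and for real $M$ one has $\iota(M)=M\otimes I_2$ with determinant $(\det M)^2$, one gets $Nrd(MR_C) = (\det M)^2\, Nrd(R_C)$. On the pfaffian side I will establish the real-scaling identity
$$\det{\!}_\sA(MR_C) \;=\; (\det M)\,\det{\!}_\sA(R_C).$$
Expanding both sides via \eqref{2.17a} and comparing the coefficient of each entry $v_j$ of $v$ reduces this to
$$\sum_{i=1}^{r+1} (-1)^{i+j}\det\bigl((MA)_{\hat i}\bigr)\,M_{ij} \;=\; (\det M)\,\det A_{\hat j},$$
which I will deduce from Cauchy-Binet applied to $(MA)_{\hat i} = M_{\hat i,*}\,A$ together with the adjugate relation $\mathrm{adj}(M)\,M = (\det M)\, I_{r+1}$. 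Taking complex conjugates, $\det{\!}_\sA(R_C)\,\overline{\det{\!}_\sA(R_C)}$ then scales by $(\det M)^2$ as well, reducing the lemma to the normal-form case.

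For the normal form $R_C = \begin{pmatrix} I_r & u_1 \\ 0 & u_2\end{pmatrix}$, the embedded matrix $\iota(R_C)$ is block upper triangular in the obvious $(r+1)\times(r+1)$-block structure of $2\times 2$ blocks, with diagonal blocks $I_{2r}$ and $\iota(u_2) = \begin{pmatrix} z & -w \\ \bar w & \bar z\end{pmatrix}$, where $u_2 = z + wj$ with $z,w\in\C$. Hence $Nrd(R_C) = \det\iota(u_2) = |z|^2 + |w|^2 = u_2\bar u_2$. On the other side, in \eqref{2.17a} only the minor $\det A_{\hat{r+1}} = \det I_r = 1$ is nonzero, yielding $\det{\!}_\sA(R_C) = u_2$ and hence $\det{\!}_\sA(R_C)\,\overline{\det{\!}_\sA(R_C)} = u_2\bar u_2 = Nrd(R_C)$. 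The principal technical hurdle is the real-scaling identity for $\det{\!}_\sA$; once proved via the adjugate-Cauchy-Binet manipulation, the rest is a one-line collapse in the normal form.
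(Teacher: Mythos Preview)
Your argument is correct and takes a genuinely different route from the paper's. The paper works directly with the $2(r+1)\times 2(r+1)$ complex matrix $\iota(R_C)$: when all entries are real the $2\times 2$ blocks are scalar and one reads off the square of the ordinary determinant; for the general case (real except in the last column) the paper performs a block Laplace expansion along the last two columns, matching it termwise with \eqref{2.17a} and using $\det\iota(x)=x\bar x$. Your approach instead normalizes by a real $GL_{r+1}$ row operation and checks covariance of both sides. This is more systematic and arguably more transparent: the normal-form endgame is a one-liner, and the covariance step isolates exactly where the ``real except for one column'' hypothesis enters. The paper's route is shorter on the page but leaves more to the reader (``one checks'').

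Two small remarks. First, your covariance identity $\det{\!}_\sA(MR_C)=(\det M)\det{\!}_\sA(R_C)$ for real $M$ can be obtained with less machinery than Cauchy--Binet plus adjugate: replace the last column $v$ by a column of commuting indeterminates $x$ and observe that $\det{\!}_\sA([A\mid x])$ is then the ordinary determinant $\det[A\mid x]$; multiplicativity of the ordinary determinant gives the identity as a polynomial in the $x_i$, and one substitutes $x_i=v_i$ (no ordering issue, as each monomial is a real scalar times a single $v_i$). Second, ``taking complex conjugates'' should read ``taking quaternionic conjugates''; this is harmless since $\det M\in\R$ is fixed by either, but worth tidying.
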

\begin{proof}By definition $Nrd(R_C)$ is calculated using the embedding $\sA\inj M_2(\C)$ to view $R_C$ as a $(2r+2)\times (2r+2)$-complex matrix and then taking the usual determinant. In other words, one views $R_C$ as a map $(\C^2)^{r+1} \to (\C^2)^{r+1}$. The assertion is thus clear if the entries of $R_C$ all lie in $\R \subset \sA$. In that case, all the $2\times 2$-matrices are real scalar and we just get the square of the usual determinant. For the general case, it suffices to consider a $2N\times 2N$ complex matrix with entries $2\times 2$ scalar diagonal except for the last two columns. Then one checks that the determinant is computed by interpreting the last two columns as a single column of $N$ $2\times 2$-matrices, expanding as above \eqref{2.17a} and then taking the determinant. (Note that under the embedding $\sA \inj M_2(\C)$ the determinant corresponds to $x\bar x$.) 
\end{proof}

Finally, to identify $Nrp$ with the second Symanzik polynomial, we have to show the expansion \eqref{2.17a} coincides with the usual combinatorial description in terms of cut sets. Fix an orientation and an ordering for the edges of $\Gamma$. Let $C$ be a cut set as above. Let $F_i$, $i=1,2$ be disjoint with $\Gamma-C = F_1\amalg F_2$. Note that one of the $F_i$ may be an isolated vertex. Let $\Gamma/\!/F$ denote the $2$-vertex graph obtained by shrinking the two components of $F \subset \Gamma$ to two (separate) vertices $v_1, v_2$. For $e\in E(\Gamma)$ not an edge of $F$, the image $\bar e$ of $e$ in $\Gamma/\!/F$ is either a loop (tadpole) or has boundary the difference of the two vertices, $\partial e= \pm(v_2-v_1)$. We have also $H_1(\Gamma)\cong H_1(\Gamma/\!/F)$. As above we enumerate the edges $e_1,\dotsc,e_{r+1}$ in $\Gamma-C$. Let $\Gamma_i = (\Gamma/\!/F)/e_i$ be obtained by contracting $e_i$. Then $\det(R_C^{i,r+1})$ is the determinant of the map from $H_1(\Gamma)$ with basis $h_1,\dotsc,h_r$ to $\Z^{E-C-\{e_i\}}$ with basis $e_1,\dotsc,\widehat{e_i}\dotsc,e_{r+1}$. Define
\eq{}{a(i) := \begin{cases}+1 & \partial \bar e_i = v_2-v_1 \\
-1 & \partial \bar e_i = v_1-v_2\\
0 & \partial \bar e_i=0.
\end{cases}
}
The key point then is 
\eq{2.19a}{(-1)^i \det(R_C^{i,r+1}) = a(i)b
}
where $b=\pm 1$ is independent of $i$. This can be seen as follows. Let $W=\bigoplus_1^{r+1}\Q e_i$, . The composition $H_1(\Gamma) \subset W\surj W/\Q e_i$ is an isomorphism. The evident basis $\{e_k, k\neq i\}$ of $W/\Q e_i$ induces a basis of $H_1(\Gamma)$. For two different choices of $i$, say $i_1, i_2$, the determinant of the change of basis matrix is $(-1)^{i_1-i_2}$.  Indeed, writing $\ve= \frac{1}{r+1}\sum e_i \in W$ and letting $\det_1, \det_2\in \det H_1(\Gamma)$ be the exterior powers of the basis vectors for the two bases, one has in $\det W$ that $\ve\wedge\det_1 = (-1)^{i_1-i_2}\ve\wedge\det_2$. (Compare both sides with $e_1\wedge\cdots\wedge e_{r+1}$.)

Finally, we deduce from this and \eqref{2.16a} the classical combinatorial description of the second Symanzik polynomial, viz. the coefficient of $\prod_{e\in C} A_e$ is given by
\eq{}{\Big(\sum_{\partial \bar e_i = v_2-v_1}e_i^\vee(h_{r+1}) - \sum_{\partial \bar e_i = v_1-v_2}e_i^\vee(h_{r+1})\Big)\overline{\Big(\sum_{\partial \bar e_i = v_2-v_1}e_i^\vee(h_{r+1}) - \sum_{\partial \bar e_i = v_1-v_2}e_i^\vee(h_{r+1})\Big)}.
}

\section{Differential Forms on Projective Space}\label{sectdf}

 \label{sec1}
We turn now to the study of motives associated to $1$-loop graphs. We
recall first the structure of differential forms on projective space. Let $\sO = \sO_{\P^n}$ be the sheaf of (algebraic) functions on projective $n$-space, and let $\Omega^i = \bigwedge^i \Omega^1$ denote the sheaf of algebraic differential $i$-forms. Fix a basis $A_0,\dotsc,A_n$ for the linear homogeneous forms on $\P^n$. One has an exact sequence
\eq{1.1}{0 \to \Omega^1 \to \bigoplus_{i=0}^n \sO(-1)dA_i \xrightarrow{p}\sO \to 0
}
(Here the $dA_i$ are just labels for the various summands of the direct sum.) Twisting by $1$, the map $p(1)$ maps $dA_i$ to $A_i \in \Gamma(\P^n, \sO(1))$. For example, $p(2)(A_jdA_i-A_idA_j)=A_jA_i-A_iA_j=0$, so $A_jdA_i-A_idA_j\in \Gamma(\P^n, \Omega^1(2))$. It follows that $dA_i/A_i - dA_j/A_j$ is a (meromorphic) section of $\Omega^1$. 

We are interested in $\Omega^{n-1}$. By standard Koszul algebra we get from \eqref{1.1} an exact sequence 
\eq{1.2}{0 \to \Omega^{n-1}(n-1) \to \bigwedge^{n-1}\Big(\bigoplus_0^n \sO\cdot dA_i\Big) \to \bigwedge^{n-2}\Big(\bigoplus_0^n \sO\Big)(1)
}
the map on the right is given by
\eq{1.3}{dA_{i_1}\wedge\cdots\wedge dA_{i_{n-1}} \mapsto \sum_{j=1}^{n-1} (-1)^{j-1}A_{i_j}dA_{i_1}\wedge\cdots\wedge\widehat{dA_{i_j}}\wedge\cdots \wedge dA_{i_{n-1}}.
}
Again by standard Koszul stuff we have an exact sequence (I have dropped the labels $dA_i$)
\eq{1.4}{ \bigwedge^{n}\Big(\bigoplus_0^n \sO\Big)\to \bigwedge^{n-1}\Big(\bigoplus_0^n \sO\Big)(1) \to \bigwedge^{n-2}\Big(\bigoplus_0^n \sO\Big)(2)
}
where the maps are as in \eqref{1.3}. For $0 \le j\le n$, the section 
\eq{1.5}{\tau_j:= dA_0\wedge \cdots \wedge\widehat{dA_j}\wedge\cdots \wedge dA_n
}
on the left maps to 
\eq{1.6}{\Theta_j := \sum_{i\neq j} \pm A_idA_0\wedge \cdots\wedge\widehat{dA_i}\wedge\cdots \wedge\widehat{dA_j}\wedge\cdots \wedge dA_n 
\in \Gamma(\P^n, \Omega^{n-1}(n)). 
}
(The sign in the sum is $(-1)^i$ for $i<j$ and $(-1)^{i-1}$ for $i>j$.) Treating these expressions as differential forms in the evident way, we have  
\eq{1.7}{d\Theta_j = n\tau_j.
}
In particular, if $F = G/H$ is a ratio of homogeneous polynomials with $\deg G-\deg H = n$, then we compute 
\ml{1.8}{d(\Theta_j/F) = n\tau_j/F - dF\wedge\Theta_j/F^2 = \frac{nF\tau_j - (\sum \partial F/\partial A_k dA_k)\Theta_j}{F^2} = \\
\frac{(nF-\sum_{k\neq j} \partial F/\partial A_k A_k)\tau_j - \partial F/\partial A_j dA_j\wedge \Theta_j}{F^2} = \\
\frac{\partial F/\partial A_j (A_j\tau_j - dA_j\Theta_j)}{F^2} = \frac{(-1)^j\partial F/\partial A_j\Omega_n }{F^2}
}
Here $\Omega_n = \sum (-1)^iA_idA_0\wedge\cdots\wedge \widehat{dA_i}\wedge\cdots\wedge dA_n$. Note that \eqref{1.8} is an identity between meromorphic $n$-forms on $\P^n$. 

Replacing, if necessary, $F$ by a power of $F$, we have proven
\begin{lem}\label{polelem} Let $\omega = \frac{P\Omega_n}{F^p}$ be an $n$-form on $U := \P^n-\{F=0\}$. Assume $G=\sum G_i\frac{\partial F}{\partial A_i}$ lies in the ideal generated by the partial derivatives of $F$. Then we can reduce the order $p$ of pole of $[\omega] \in H^n_{DR}(U)$, i.e. there exists a form $\omega' = \frac{G'\Omega}{F^{p-1}}$ which is cohomologous to $\omega$, $[\omega]=[\omega']$. (Here $H_{DR}$ is algebraic de Rham cohomology calculated using algebraic differential forms. It coincides with Betti cohomology.)
\end{lem}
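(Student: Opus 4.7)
The plan is to perform a Griffiths-style reduction of pole order, directly generalizing the identity \eqref{1.8}. Reading the hypothesis as saying the numerator $P$ (the lemma's $G$) lies in the Jacobian ideal of $F$, we have $P = \sum_{i=0}^n G_i (\partial F/\partial A_i)$, so
\[
\omega = \sum_{i=0}^n \frac{G_i (\partial F/\partial A_i)\,\Omega_n}{F^p}.
\]
The strategy is to exhibit, for each $i$, an $(n-1)$-form $\eta_i$ with denominator $F^{p-1}$ whose exterior derivative recovers the $i$-th summand, up to an error whose pole order is at most $p-1$.

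Guided by \eqref{1.8}, I would take $\eta_i := G_i \Theta_i/F^{p-1}$; a homogeneity count using $\deg P = dp-(n+1)$ (with $d := \deg F$) forces $\deg G_i = d(p-1)-n$, which is precisely what is needed for $\eta_i$ to be a well-defined rational section of $\Omega^{n-1}$ on $\P^n$. The computation of $d\eta_i$ uses ingredients already in the section: $d\Theta_i = n\tau_i$ from \eqref{1.7}; the Koszul--Euler identity $A_i\tau_i - dA_i\wedge \Theta_i = (-1)^i\Omega_n$ that underlies \eqref{1.8}; the companion identity $dA_k\wedge \Theta_i = A_k\tau_i$ for $k \ne i$, read off directly from the expression \eqref{1.6} for $\Theta_i$; and Euler's identity $\sum_k A_k (\partial F/\partial A_k) = dF$ to collapse the cross-terms into a multiple of $\tau_i$. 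Combining these yields
\[
d\eta_i = \frac{dG_i\wedge \Theta_i + (n - d(p-1))\,G_i\,\tau_i}{F^{p-1}} \;-\; (p-1)(-1)^{i-1}\,\frac{G_i(\partial F/\partial A_i)\,\Omega_n}{F^p}.
\]
Solving for the last term, multiplying by $(-1)^{i-1}/(p-1)$, and summing over $i$ expresses $\omega$ as an exact form plus a form whose denominator is $F^{p-1}$, which is the required reduction.

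The main subtlety --- and the reason for the hedge ``replacing, if necessary, $F$ by a power of $F$'' --- is the degree constraint $\deg G_i = d(p-1)-n \ge 0$: when $d$ is small relative to $n$, no polynomial Jacobian decomposition of $P$ exists and the manipulation above is only formal. The remedy is to pass to $\widetilde F := F^m$ for $m \gg 0$ and rewrite $\omega = (PF^{m\tilde p - p})\Omega_n/\widetilde F^{\tilde p}$ with an integer $\tilde p$ chosen so $m\tilde p \ge p$. Since the Jacobian ideal of $F^m$ equals $F^{m-1}$ times the Jacobian ideal of $F$, the new numerator $PF^{m\tilde p - p}$ still lies in the new Jacobian ideal, while the corresponding degree constraint $dm(\tilde p-1) \ge n$ is easily satisfied for $m$ large. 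After this adjustment the identity above is available and produces a representative of $[\omega]$ with pole order $p-1$, completing the proof.
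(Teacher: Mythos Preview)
Your core argument is correct and is essentially the paper's own proof: the paper derives the identity \eqref{1.8} for $d(\Theta_j/F)$ and then simply remarks that the lemma follows. Your direct computation of $d(G_i\Theta_i/F^{p-1})$ is exactly how one unwinds that remark, and your formula is right. (It in fact simplifies further: applying Euler's identity to $G_i$ gives $dG_i\wedge\Theta_i+(n-d(p-1))G_i\tau_i=-(-1)^i(\partial G_i/\partial A_i)\Omega_n$, so the cohomologous representative is explicitly $\omega'=\tfrac{1}{p-1}\sum_i(\partial G_i/\partial A_i)\,\Omega_n/F^{p-1}$.)

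Your final paragraph, however, misreads the paper's hedge and is unnecessary. The degree constraint $\deg G_i=d(p-1)-n\ge 0$ is \emph{automatic} from the hypothesis: for $\omega$ to descend to $\P^n$ the numerator $P$ must be homogeneous of degree $dp-(n+1)$, and since every nonzero homogeneous element of the Jacobian ideal has degree $\ge d-1$, we get $dp-(n+1)\ge d-1$, i.e.\ $d(p-1)\ge n$. So there is nothing to repair. The paper's phrase ``replacing, if necessary, $F$ by a power of $F$'' just means passing from the model identity \eqref{1.8} (whose denominator is a single rational function of degree $n$) to the lemma's situation with denominator $F^{p-1}$---concretely, one applies \eqref{1.8} with its ``$F$'' taken to be the ratio $F^{p-1}/G_i$. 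Your proposed remedy of substituting $\widetilde F=F^m$ is also slightly off: reducing the $\widetilde F$-pole order from $\tilde p$ to $\tilde p-1$ gives $F$-pole order $m(\tilde p-1)$, not $p-1$. But since the remedy is not needed, this does not affect the validity of your main argument.
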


\section{Complex Poincar\'e Group Invariants}\label{sectpgi}
Feynman integrals, after integration, are fuctions of external momenta. 
If the whole integral transforms as a Lorentz scalar, the integral is a function of 
Lorentz invariant scalar products of external momenta. The number of and type of these invariants 
are exhibited here from a mathematical viewpoint, incorporating momentum conservation and the finite dimension of spacetime.

The fact that the amplitudes for $1$-loop graphs are dilogarithms is a
consequence of some basic facts about the invariants of the orthogonal
group. Let $O_\C(r)$ be the subgroup of $GL(\C^r)$ leaving invariant a non-degenerate inner product $(p,q)\mapsto p\cdot q$. Let $G = \C^r \rtimes O_\C(r)$ be the ``complex Poincar\'e group'' generated by orthogonal transformations and translations. As an algebraic group over $\C$, $G$ has dimension $r+\frac{r(r-1)}{2}$. 
Let $G$ act diagonally on $(\C^r)^{r+2}$. The quotient  $(\C^r)^{r+2}/G$ has dimension 
\eq{4.1}{\dim_\C (\C^r)^{r+2}/G = r(r+2) - (r+\frac{r(r-1)}{2}) = \binom{r+2}{2}-1.
}

Let $p_j:(\C^r)^{r+2} \to \C^r, 1\le j\le r+2$ be the projections. Following physics notation we write
\eq{}{(p_j-p_k)^2 := (p_j-p_k)\cdot(p_j-p_k)
}
with the inner product as above. We obtain in this way $\binom{r+2}{2}$ $G$-invariant functions on $(\C^r)^{r+2}$. It follows from \eqref{4.1} that there is an algebraic relation between these functions. 

To understand this relation, we change bases in $\C^r$ so the inner product is the sum of squares of coordinates (Euclidean inner product). We can view $P:=(p_1,\dotsc,p_{r+2})$ as an $r\times (r+2)$ matrix. The $(r+2)\times (r+2)$-symmetric matrix
\eq{4.2}{N:= (p_j\cdot p_k) = P^tP
}
has rank $\le r$. It is convenient at the point to introduce masses $m_j, 1\le j\le r+2$. Consider the $(r+2)\times (r+2)$-symmetric matrix
\eq{4.4}{M(m) := (m_i^2+m_j^2 + (p_i-p_j)^2) = (m_i^2+p_i^2)_{ij} + (m_j^2+p_j^2)_{ij} -2N = M_1+M_2-2N.
}
View $\C^{r+2}$ as column vectors, and let $H \subset \C^{r+2}$ be the codimension $1$ subspace defined by setting the sum of the coordinates to zero. Note that $M_1$ has all columns the same, so for $h\in H$ we have $M_1h=0$. Similarly $h^tM_2=0$. It follows that the quadratic form given by the symmetric matrix $M(m)$ is necessarily degenerate when restricted to $H$, i.e. $\exists\ 0\neq k \in H$ with $h^tM(m)k=0$ for all $h\in H$. 
\begin{lem}For general values of the $p_i$ we have $\det(M(0))\neq 0$.
\end{lem}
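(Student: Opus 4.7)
The plan is to exhibit a factorization of $M(0)$ that reduces the generic nonvanishing to an elementary linear-algebra check. First, setting $a = (p_1^2,\dotsc,p_{r+2}^2)^T$ and letting $\mathbf{1}\in\C^{r+2}$ denote the all-ones column vector, the identity $(p_i-p_j)^2 = p_i^2 + p_j^2 - 2p_i\cdot p_j$ gives directly
\[
M(0) \;=\; a\,\mathbf{1}^T + \mathbf{1}\,a^T - 2 P^TP \;=\; Q\,J\,Q^T,
\]
where $Q = (\mathbf{1}\,|\,a\,|\,P^T)$ is the $(r+2)\times(r+2)$ matrix whose columns are $\mathbf{1}$, $a$, and the $r$ columns of $P^T$, and $J$ is the block-diagonal $(r+2)\times(r+2)$ matrix with upper $2\times 2$ block $\begin{pmatrix}0 & 1\\ 1 & 0\end{pmatrix}$ and lower $r\times r$ block $-2\,I_r$. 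Since $\det J = (-1)^{r+1} 2^r \ne 0$, this yields $\det M(0) = (\det J)(\det Q)^2$, and the problem reduces to showing that $\det Q$ does not vanish identically as a polynomial in the coordinates of the $p_i$.

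For the second step I would produce one numerical configuration at which $\det Q \neq 0$. A convenient choice is $p_1 = 0$, $p_i = e_{i-1}$ for $2\le i\le r+1$, and $p_{r+2} = 2e_1$, where $e_1,\dotsc,e_r$ is the standard basis of $\C^r$. The first row of $Q$ is then $(1,0,\dotsc,0)$, so a cofactor expansion along it reduces $\det Q$ to the determinant of the $(r+1)\times(r+1)$ submatrix obtained by deleting row and column $1$. The bottom row of that submatrix (coming from $p_{r+2}$) has only its first two entries nonzero, so a second expansion along that row leaves an essentially triangular minor, and one verifies $\det Q = \pm 2 \neq 0$.

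The main obstacle is conceptual rather than computational: one has to spot the factorization $M(0) = QJQ^T$, which converts the generic nonvanishing of the degree-$(2r+4)$ polynomial $\det M(0)$ into the simpler question of whether the columns of $Q$ are generically independent in $\C^{r+2}$. Once the factorization is in hand, the explicit verification on a single configuration is a routine cofactor expansion.
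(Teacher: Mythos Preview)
Your proof is correct. The factorization $M(0)=QJQ^T$ is clean and reduces the question to the generic independence of the columns of $Q$, which your explicit configuration verifies; the cofactor computation goes through as you describe, giving $\det Q=\pm 2$.

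The paper takes a more direct route: it specializes $p_1,\dotsc,p_r$ to the standard orthonormal basis and $p_{r+1}=0$, then regards $\det M(0)$ as a polynomial in the coordinates of $p_{r+2}$ and checks that the coefficient of $p_{r+2,1}^4$ is $\pm$ a power of $2$. No factorization is invoked; one just expands the determinant and tracks a single monomial. Your approach has the advantage of being conceptually transparent (it explains \emph{why} the rank is at most $r+2$ and identifies $\det M(0)$ as a perfect square up to a constant), at the cost of having to notice the $QJQ^T$ identity. The paper's approach avoids that insight but trades it for a slightly opaque monomial chase. Both are short; yours gives more structural information.
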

\begin{proof}Take $p_1,\dotsc,p_r$ to be the usual orthonormal basis of $\C^r$, and take $p_{r+1}=0$. One easily checks in this case that the coefficient of $p_{r+2,1}^4$ in $\det(M(0))$ is plus or minus a power of $2$. In particular, it is non-zero, and the lemma follows. 
\end{proof}
\begin{rem} Of course, it follows from the lemma that $\det(M(m)) \neq 0$ for general $m$ and $p$ as well. 
\end{rem}

Assume now that $M(m)$ is invertible. Write $\vec{1} = (1,\dotsc,1)\in \C^{r+2}$. It follows from the above that $M(m)k = \kappa\vec{1};\quad \kappa \neq 0$. Scaling $k$, we may assume $k=M(m)^{-1}\vec{1}$. Thus
\eq{4.5}{(M(m)^{-1}\vec{1})\cdot \vec{1} = 0.
}
When the masses are zero, \eqref{4.5} yields the non-trivial algebraic relation between the $(p_i-p_j)^2$. We will interpret \eqref{4.5} in the case $r=4$ as determining where in the weight filtration of a Hodge structure the Feynman integrand lies. In physics terms, it is the statement that for $1$-loop graphs, the amplitude is expressed in terms of logarithms and dilogarithms of Lorentz-invariant rational functions of momenta, \cite{DD}, \cite{Tod}. 

In physical situations, of course, the $p_i$ are $4$-vectors. 

\begin{lem}\label{lem4.3} Fix $n\ge 6$. Let $p_i \in \C^4,\ 1\le i\le n$, and let $m_i\in \C,\ 1\le i\le n$. Let $H \subset \C^n$ be the codimension $1$ linear subspace defined by setting the sum of the coordinates to $0$. The matrix $M(m) = (m_i^2+m_j^2+(p_i-p_j)^2)_{ij}$ has rank $\le 6$. For general values of $m_i, p_i$ the rank is exactly $6$ and the vector $(1,\dotsc,1)$ lies in the image $M(m)(H) \subset \C^n$.  
\end{lem}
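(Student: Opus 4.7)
My plan is to separate the two claims of the lemma. For the rank bound I would use the decomposition $M(m)=M_1+M_2-2N$ from \eqref{4.4}, noting that $M_1=(m_i^2+p_i^2)_{ij}$ has constant columns (rank $1$), $M_2=(m_j^2+p_j^2)_{ij}$ has constant rows (rank $1$), and $N=P^tP$ with $P$ the $4\times n$ matrix whose columns are the $p_i$ (rank $\le 4$). Subadditivity of rank then yields $\rank M(m)\le 1+1+4=6$. For the generic equality I would apply the preceding lemma (the case $r=4$, $n=r+2=6$): it provides six of the $p_i$ in general position for which $\det M(0)\neq 0$ on the corresponding $6\times 6$ principal submatrix, forcing the rank of the full $M(m)$ to be exactly $6$ on a Zariski-open set.

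For the image statement the plan is computational. Writing $\vec u=(m_i^2+p_i^2)_i\in\C^n$ and expanding $M(m)\vec v$ for an arbitrary $\vec v\in\C^n$ gives
\[
M(m)\vec v \;=\; \Bigl(\sum_j v_j\Bigr)\vec u \;+\; (\vec u\cdot\vec v)\,\vec 1 \;-\; 2\,P^t(P\vec v).
\]
For $\vec v\in H$ the first term drops out, so the problem reduces to producing $\vec v\in H$ with $P\vec v=0$ (which implies $P^tP\vec v=0$, since $P^t\colon\C^4\to\C^n$ is injective for generic $P$) and with $\vec u\cdot\vec v=1$. Equivalently, I want an element of the linear subspace
\[
K \;:=\; H\cap\ker P \;\subset\;\C^n
\]
on which the functional $\vec u\cdot(-)$ assumes the value $1$. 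Here $K$ is cut out by the five linear equations $\sum_j v_j=0$ and $P\vec v=0$; for generic $p_i$ these five functionals are independent (since $\vec 1$ generically fails to lie in the row span of $P$ once $n\ge 5$), so $\dim K=n-5\ge 1$ whenever $n\ge 6$.

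The \textbf{hardest step}, and the main obstacle in the argument, is the final genericity check: that $\vec u\cdot(-)$ is not the zero functional on $K$. This amounts to showing that $\vec u$ does not lie in the $5$-dimensional annihilator of $K$, i.e.\ in the span of $\vec 1$ together with the rows of $P$. An identity of the form $m_j^2+p_j^2 = c_0+\sum_{k=1}^{4}c_k\,p_{j,k}$ for all $j=1,\dots,n$ would force the quadratic quantity $p_j^2=\sum_k p_{j,k}^2$ to be affine in the coordinates $p_{j,k}$ for each $j$, which visibly fails on a Zariski-open set of $(m,p)$ (one may verify this by choosing, say, $p_1,\dots,p_4$ to be the standard basis, $p_5=p_6=0$, and then letting $m_j$ vary). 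Once this non-vanishing is established, the inhomogeneous equation $\vec u\cdot\vec v=1$ on $K$ admits a solution, producing the desired $\vec v\in H$ with $M(m)\vec v=\vec 1$.
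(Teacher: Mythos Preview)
Your proof is correct and follows essentially the same route as the paper. Both arguments use the decomposition $M(m)=M_1+M_2-2N$ for the rank bound, and both reduce the image assertion to solving the identical linear system $\sum a_i=0$, $\sum a_ip_i=0$, $\sum a_i(p_i^2+m_i^2)=1$ in $n\ge 6$ unknowns; the paper simply asserts this system is solvable for generic data, while you spell out the genericity check in more detail. One small remark: the parenthetical ``since $P^t$ is injective'' is unnecessary, as $P\vec v=0$ already gives $P^tP\vec v=0$ trivially.
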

\begin{proof}As in \eqref{4.4} $M(m)$ is a sum of three matrices. The matrices $M_1, M_2$ have rank $1$. The matrix $N$ has rank $4$ (for general $p_i$) as in \eqref{4.2}. It follows that $M(m)$ has rank $\le 6$, and it is easy to see the rank is exactly $6$ for general values of the parameters. To show the vector $(1,\dotsc,1)\in M(m)(H)$, it suffices to solve the equations
\eq{4.6}{\sum_{i=1}^n a_i(p_i^2+m_i^2) = 1;\quad \sum_{i=1}^n a_ip_i=0;\quad \sum_{i=1}^n a_i=0.
}
These equations clearly admit a solution in the $a_i$ for general values of the parameters when $n\ge 6$. 
\end{proof}

\section{The Motive}\label{mot}

Let $X:Q=0$ be a rank $\min(6,n+1)$ quadric in $\P^n$. Let
$A_0,\dotsc,A_n$ be homogeneous coordinates, and write $\Delta:\prod
A_i=0$ for the reference simplex. We will be interested in the
''motive'' (or more concretely, the Hodge structure)  
\eq{5.1}{H^n(\P^n-X,\Delta-X\cap\Delta,\Q). 
}
(In the case $n=2$, the triangle graph, the motive of physical interest is slightly
different. We treat it separately in section \ref{sect_tri}.)

We assume that $X$ is in good position with respect to $\Delta$ in the sense that for any face $F\cong \P^{i} \subset \Delta$ the intersection $X\cap F$ has rank $\min(6,i+1)$. In particular, if $\dim F<6$ then $X\cap F$ is smooth. (The nullspace $L\subset X$ is a linear space of dimension $n-6$, and our assumption is that $L$ meets all faces of $\Delta$ properly.)
\begin{lem}\label{lem5.1} (i) We have
\eq{}{H^n(\P^n-X,\Q) \cong \begin{cases}0 & n>5 \\ \Q(-m-1) & n=2m+1 \le 5 \\ 0 & n=2m>0 \\ \Q(0) & n=0
\end{cases}
}
\noindent (ii) $H^k(\P^n-X,\Q) = (0)$ if $0<k\neq n\le 5$ or if $n>5$ and $k\neq 0, 5$. 
\end{lem}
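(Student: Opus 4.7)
The approach is to split on whether $X$ is smooth. By hypothesis the rank of $Q$ is $\min(6,n+1)$, so when $n\le 5$ the quadric $X$ is a smooth hypersurface in $\P^n$, whereas when $n>5$ the quadric is the projective cone from its null-space $L\cong \P^{n-6}$ over a smooth rank-$6$ quadric $\bar X\subset \P^5$ of dimension four.

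In the smooth range $n\le 5$, the plan is to read off $H^*(\P^n - X)$ from the Gysin long exact sequence
\begin{equation*}
\cdots \to H^{k-2}(X)(-1) \xrightarrow{i_*} H^k(\P^n) \to H^k(\P^n-X) \to H^{k-1}(X)(-1) \xrightarrow{i_*} H^{k+1}(\P^n) \to \cdots .
\end{equation*}
The cohomology of $\P^n$ is $\Q(-i)$ in each even degree $2i$. The cohomology of the smooth $(n-1)$-dimensional quadric $X$ is likewise $\Q(-i)$ in every even degree $2i$ away from the middle; in the middle degree $n-1$ it is $0$ when $n$ is even and of rank two, namely $\Q(-(n-1)/2)^2$, when $n$ is odd. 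Going through the sequence degree by degree, and using that the off-middle Gysin map $\Q(-i)\to \Q(-i)$ is multiplication by $\deg X = 2$ (via $i^* i_* = 2h$, where $h$ is the hyperplane class), hence an isomorphism, one obtains $H^k(\P^n-X)=0$ for all $0<k\le n$ except when $n=2m+1$ is odd, in which case the middle Gysin map $\Q(-m-1)^2 \to \Q(-m-1)$ is surjective with a one-dimensional kernel $\Q(-m-1)$ that appears as $H^n(\P^n-X)$. This yields both (i) and (ii) for $n\le 5$ (including the trivial cases $n=0,1$, where $X$ is empty or a pair of points).

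For $n>5$, the plan is to reduce to $n=5$. Linear projection away from the vertex $L$ gives a Zariski-locally trivial $\A^{n-5}$-bundle $\pi: \P^n-L \to \P^5$. Since $L\subset X$, one has $\P^n-X = \pi^{-1}(\P^5-\bar X)$, so $\pi$ restricts to an affine fibration over $\P^5-\bar X$. Homotopy invariance of singular cohomology then gives $H^*(\P^n-X)\cong H^*(\P^5-\bar X)$, and the $n=5$ computation above shows the right-hand side is $\Q(0)$ in degree $0$, $\Q(-3)$ in degree $5$, and vanishes elsewhere; in particular $H^n(\P^n-X)=0$ because $n>5$, giving (i) and (ii) in this regime. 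The main obstacle is the middle-degree Gysin computation in the odd smooth case: one has to identify the two generators of $H^{n-1}(X)(-1)$ geometrically as the classes of the two rulings of maximal linear subspaces $\P^m\subset X$ and verify that each pushes forward to the class of an $m$-plane in $\P^n$, so that the Gysin map acquires the shape $(a,b)\mapsto a\pm b$; this relies on the specific geometry of even-dimensional smooth quadrics and is the only step that is not routine bookkeeping.
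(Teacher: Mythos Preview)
Your proposal is correct and follows essentially the same route as the paper: for $n>5$ you project away from the vertex $L$ to realize $\P^n-X$ as an $\A^{n-5}$-bundle over $\P^5-\bar X$, and for $n\le 5$ you use the Gysin sequence together with the known cohomology of smooth quadrics. The paper argues identically, only more tersely---it records the single exact piece $0\to H^n(\P^n-X)\to H^{n-1}(X)(-1)\to H^{n\pm 1}(\P^n)\to 0$ and cites that the middle cohomology of an even-dimensional smooth quadric has rank two generated by algebraic cycles, leaving the surjectivity of the middle Gysin map and the degree-by-degree verification of (ii) implicit; your discussion of $i^*i_*=2h$ and of the two rulings simply fills in those details.
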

\begin{proof}Suppose first $n> 5$. Let $p:\P^n-L \to \P^5$ be the projection with center $L$. We have $X-L = p^{-1}(Y)$, where $Y \subset \P^5$ is a smooth quadric. It follows that $\P^n-X$ is a fibre bundle over $\P^5-Y$ with fibre $\A^{n-5}$. A standard result for fibrations with contractible fibres yields $H^*(\P^5-Y,\Q) \cong H^*(\P^n-X,\Q)$. Since $\P^5-Y$ is affine of dimension $5$, it has cohomological dimension $5$ so $H^n(\P^n-X,\Q)=(0)$.  

Quite generally, for $X$ a smooth hypersurface in $\P^n$, the Gysin sequence (given by residues on differential forms) yields an exact sequence which reads in part
\eq{}{0 \to H^n(\P^n-X,\Q) \to H^{n-1}(X,\Q(-1)) \to H^{n-1}\P^n,\Q) \to 0.
}
We now assume $n\le 5$ so $X$ is a smooth quadric.  The middle dimensional cohomology of a smooth quadric of dimension $d$ is known to be rank $2$ generated by algebraic cycles for $d$ even and zero for $d$ odd. The lemma follows. 
\end{proof}

For an index set $I = \{i_0,\dotsc,i_p\} \subset \{0,\dotsc,n\}$ write $|I|=p+1$ and let $\Delta_I \subset\Delta\subset \P^n$ be defined by the vanishing of the homogeneous coordinates $A_{i_j}$. The motive \eqref{5.1} is the hypercohomology of the complex of sheaves 
\eq{}{\Q_{\P^n-X} \to \bigoplus_{|I|=1}\Q_{\Delta_I-X\cap \Delta_I} \to \cdots \to \bigoplus_{|I|=n}\Q_{\Delta_I}
}
There is a spectral sequence $E_1^{p,q} = H^q(\bigoplus_{|I|=p}\Q_{\Delta_I-X\cap \Delta_I}) \Rightarrow H^{p+q}(\P^n-X,\Delta-X\cap\Delta,\Q).$ (For simplicity we write $\P^n-X = \bigoplus_{|I|=0}\Delta_I-X\cap \Delta_I$.) 

Suppose first $n\le  5$. The differentials $d_1^{p,q}: E_1^{p,q} \to E_1^{p+1,q}$ are zero in this case except for $d_1^{4,0}: \bigoplus_{|I|=4}H^0(\Delta_I-X\cap \Delta_I,\Q) \to \bigoplus_{|I|=5}H^0(\Delta_I-X\cap \Delta_I,\Q)$ which is simply restriction from $1$-simplices to $0$-simplices. It follows that the weight graded cohomology in these cases is 
\eq{}{gr^WH^n(\P^n-X,\Delta-X\cap \Delta,\Q) = \begin{cases}\Q(0) \oplus\bigoplus_{15}  \Q(-1)  \oplus\bigoplus_{15} \Q(-2)\oplus \Q(-3) & n=5 \\ \Q(0) \oplus\bigoplus_{10}  \Q(-1)  \oplus\bigoplus_{5} \Q(-2) & n=4 \\ \Q(0) \oplus\bigoplus_{6}  \Q(-1)  \oplus \Q(-2) & n=3 \\
\Q(0) \oplus\bigoplus_{3}  \Q(-1) & n=2 \\
\Q(0) \oplus \Q(-1) & n=1 \end{cases} 
}

For $n\ge 6$ the differential $d_1^{n-6,5}:E_1^{n-6,5} \to E_1^{n-5,5}$ is non-trivial. One finds for the weight graded 
\eq{5.5}{gr^WH^n(\P^n-X,\Delta-X\cap\Delta,\Q) = \Q(0)\oplus\bigoplus_{\binom{n+1}{n-1}}\Q(-1)\oplus\bigoplus_{\binom{n+1}{n-3}}\Q(-2)\bigoplus_{c_n}\Q(-3).
} 
Here $c_n$ is the dimension of $\text{coker}(\bigoplus_{|I|=n-6}\Q
\xrightarrow{\partial}\bigoplus_{|I|=n-5}\Q) $. In fact, the weight
$6$ part of these motives will not play a role in our amplitude
calculations. This is because (as we will see in proposition
\ref{prop8.2}) the differential form given by the Feynman integrand
\eqref{6.1} below lies in $W_4H^n(\P^n-X,\Delta-X\cap \Delta,\Q)$.  

\section{The amplitude}\label{sectamp}

Associated to a $1$-loop graph with $n$ internal edges and incoming
momenta ($4$-vectors summing to $0$) $p_i$ at the vertices we have the
second Symanzik polynomial $D(p,A)$ \eqref{2.16} which is a
homogeneous quadric in the variables $A_1,\dotsc,A_n$. The associated
amplitude is 
\eq{6.1}{\int_{\sigma} \frac{(\sum A_i)^{n-4}\Omega_{n-1}}{D(p,A)^{n-2}}.
} 
Here the first Symanzik polynomial is just $\sum A_i$, and
$\Omega_{n-1}$ is as in section \ref{sectdf}. Note if $n \le 3$ then
$\sum A_i$ appears in the denominator. We will focus on the case
$n\ge 4$, leaving the {\it triangle graph} case $n=3$ (we are now counting edges from $1,\ldots,n$,
not from $0,\ldots,n-1$, as in the previous section) to section
\ref{sect_tri}.   

\begin{lem}\label{prop6.1} Assume $n\ge 5$. Let $\P^{n-2}\cong \Delta_i,\ 0\le i\le n-1 \subset \P^{n-1}$, be the maximal faces of the coordinate simplex $\Delta \subset \P^{n-1}$. Let $X: D(p,A) \subset \P^{n-1}$ be the quadric. Assume momenta and masses are general. Then the form $\eta_{n-1}:= \frac{(\sum A_i)^{n-4}\Omega_{n-1}}{D(p,A)^{n-2}}$ on $\P^{n-1}-X$ is exact. we can find an $(n-2)$-form $w_{n-1}$ on $\P^{n-1}-X$ and constants $a_j \in \C$ such that (i) $dw_{n-1} = \eta_{n-1}$; (ii) $w_{n-1}|\Delta_j= \pm a_j\eta_{n-2}$; (iii) $\sum a_j = 0$. 
\end{lem}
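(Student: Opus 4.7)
The strategy will be a Griffiths--Dwork pole reduction driven by Lemma \ref{lem4.3}. I would write the quadric as $D(p,A) = \tfrac{1}{2}\sum_{i,j} \tilde M_{ij} A_i A_j$; up to sign and a factor of $2$ the symmetric matrix $\tilde M$ coincides with the matrix $M(m)$ of Section \ref{sectpgi}, and $\partial D/\partial A_j = \sum_i \tilde M_{ij} A_i$. Lemma \ref{lem4.3} then supplies scalars $a_1,\dotsc,a_n$ with $\sum_j a_j = 0$ and a nonzero constant $c$ for which $\sum_j \tilde M_{ij}\, a_j = c$ for every $i$, i.e.
\[
\sum_{j=1}^n a_j\,\frac{\partial D}{\partial A_j} \;=\; c\sum_{i=1}^n A_i.
\]
In particular $(\sum_i A_i)^{n-4}$ lies in the Jacobian ideal of $D$, which is precisely the hypothesis of Lemma \ref{polelem}.

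Setting $G_j := \tfrac{1}{c}(\sum_i A_i)^{n-5}\, a_j$ gives $\sum_j G_j\,\partial_j D = (\sum_i A_i)^{n-4}$. Guided by \eqref{1.8}, I would define
\[
w_{n-1} \;:=\; \frac{1}{n-3}\sum_{j=1}^n (-1)^j\,\frac{G_j\,\Theta_j}{D^{n-3}}
\]
and compute $dw_{n-1}$ by applying the identity underlying \eqref{1.8} to each summand with $F = D$ and pole order $n-3$. The standard Griffiths reduction then yields
\[
dw_{n-1} \;=\; \frac{\bigl(\sum_j G_j\,\partial_j D\bigr)\Omega_{n-1}}{D^{n-2}} \;-\; \frac{1}{n-3}\cdot\frac{\bigl(\sum_j \partial_j G_j\bigr)\Omega_{n-1}}{D^{n-3}}.
\]
The first term is $\eta_{n-1}$ by construction; the second vanishes identically, because $\partial_j G_j = \tfrac{n-5}{c}(\sum_i A_i)^{n-6}\,a_j$ and so $\sum_j \partial_j G_j$ is proportional to $\sum_j a_j = 0$. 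This proves (i).

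For (ii), I would restrict $w_{n-1}$ to $\Delta_j = \{A_j = 0\}$, where both $A_j$ and $dA_j$ pull back to zero. Reading \eqref{1.6}, for $k\ne j$ every monomial in $\Theta_k$ either carries $A_j$ as a coefficient or contains $dA_j$ as a wedge factor, so $\Theta_k|_{\Delta_j} = 0$; on the other hand $\Theta_j|_{\Delta_j}$ equals, up to a sign, the analogue $\Omega_{n-2}$ on the face $\Delta_j \cong \P^{n-2}$. Since $D|_{\Delta_j}$ and $(\sum_i A_i)|_{\Delta_j}$ are, respectively, the second and first Symanzik polynomials of the graph with edge $j$ contracted, one obtains
\[
w_{n-1}|_{\Delta_j} \;=\; \pm\frac{a_j}{c(n-3)}\cdot \frac{(\sum_{i \ne j} A_i)^{n-5}\,\Omega_{n-2}}{D_j^{n-3}} \;=\; \pm\frac{a_j}{c(n-3)}\,\eta_{n-2}.
\]
Absorbing the constant $1/(c(n-3))$ into the definition of $a_j$ gives (ii), and condition (iii) is then just the relation $\sum_j a_j = 0$ built into the choice supplied by Lemma \ref{lem4.3}.

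The Griffiths--Dwork manipulation is essentially mechanical once $\sum_i A_i$ has been placed in the Jacobian ideal; the real obstacle is the boundary case $n=5$. The system behind Lemma \ref{lem4.3} consists of six linear conditions ($\sum a_i = 0$, $\sum a_i p_i = 0$, $\sum a_i (p_i^2+m_i^2)=1$) in $n$ unknowns, generically solvable for $n \ge 6$ but overdetermined for $n=5$. I expect that at $n=5$ one must invoke the non-trivial algebraic dependence \eqref{4.5} — the reflection of the fact that five four-momenta subject to momentum conservation cannot be algebraically independent — to manufacture $a_j$ satisfying both $\sum a_j = 0$ and $\sum_j a_j \partial_j D = c\sum_i A_i$. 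Beyond this, the remaining subtleties are sign bookkeeping in \eqref{1.6}--\eqref{1.8} and the precise identification of $\tilde M$ with $M(m)$ up to an overall factor and sign, both of which are routine.
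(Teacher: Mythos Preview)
Your approach is exactly the paper's: define $w_{n-1}$ as the $\Theta_j$-combination with coefficients $a_j$ supplied by Lemma~\ref{lem4.3}, compute $dw_{n-1}$ via the identity \eqref{1.8} (the paper applies it with $F=D^{n-3}/(\sum A_i)^{n-5}$, you with the equivalent Griffiths--Dwork form), and kill the lower-pole term using $\sum a_j=0$. The restriction argument for (ii) is also the same.

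Your caution about $n=5$ is well placed, and your proposed remedy is the one point that does not work. The relation \eqref{4.5} is derived for $r+2$ vectors in $\C^r$; with $r=4$ that is $n=6$, not $n=5$. Concretely, at $n=5$ the bilinear form $M(m)|_{H\times H}$ equals $-2N|_{H\times H}$ with $\dim H=4$ and $\mathrm{rank}\,N\le 4$, so for \emph{generic} four-momenta it is nondegenerate and no $\vec a\in H$ with $M(m)\vec a\in\C\vec 1$ exists. The paper's proof invokes Lemma~\ref{lem4.3} (stated for $n\ge 6$) without comment, so this is a gap shared with the paper rather than one introduced by you. Note, however, that at $n=5$ the factor $(n-5)$ already annihilates the second term in your $dw_{n-1}$ formula, so parts (i) and (ii) go through with $\vec a=M(m)^{-1}\vec 1$ and only the auxiliary condition (iii) is in question.
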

\begin{proof}Let $M(m) = (m_i^2+m_j^2+(p_i-p_j)^2)_{1\le i,j\le n}$ be the symmetric matrix corresponding to $D(p,A)$. From lemma \ref{lem4.3} there exists  a column vector $\vec{a} = (a_1,\dotsc,a_n)$ such that $M(m)\vec{a} = (1,\dotsc,1)$ and $\sum a_i=0$. Define
\eq{}{w_{n-1}:= \frac{(\sum A_i)^{n-5}\sum_j (-1)^ja_j\Theta_j}{2(n-3)D(p,A)^{n-3}}
}
where $\Theta_j$ is as in \eqref{1.6}. Using \eqref{1.8} with $F=\frac{D(p,A)^{n-3}}{(\sum A_i)^{n-5}}$ we compute
\ml{}{dw_{n-1} = \sum_j a_j\Big( \frac{(n-3)(\sum A_k)^{n-5}\frac{\partial D}{\partial A_j}}{2(n-3)D^{n-2}} - \frac{(n-5)(\sum A_k)^{n-6}}{2(n-3)D^{n-3}}\Big) = \\
\frac{(n-3)(\sum A_k)^{n-5}\sum_j a_j\frac{\partial D}{\partial A_j}}{2(n-3)D^{n-2}} = \frac{(\sum A_k)^{n-4}}{D^{n-2}} = \eta_{n-1}.
} 
Note finally that $\Theta_j|\Delta_k = \pm \delta_{jk}\Omega_{n-2}$, proving (ii). \end{proof}
\begin{prop}\label{prop8.2} With notation as in the lemma, we have $\eta_{n-1} \in W_4H^{n-1}(\P^{n-1}-X, \Delta-X\cap \Delta)$. The Feynman amplitude for any $1$-loop graph is a period of a dilogarithm mixed Hodge structure as in definition \ref{defndl}.
\end{prop}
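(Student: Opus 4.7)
The strategy is to iterate Lemma~\ref{prop6.1}, exploiting the fact that restriction to a coordinate face $\Delta_k=\{A_k=0\}$ converts the second Symanzik data for the graph $\Gamma$ into the Symanzik data for the contracted graph $\Gamma/e_k$, which is again a one-loop graph with one fewer edge. First observe that $\Omega_{n-1}|_{\Delta_j}=0$ (every term of $\Omega_{n-1}$ contains either $A_j$ or $dA_j$), so $\eta_{n-1}|_{\Delta}=0$ and $\eta_{n-1}$ represents a genuine class in the relative cohomology. Lemma~\ref{prop6.1} then provides $\eta_{n-1}=dw_{n-1}$ on $\P^{n-1}-X$ with $w_{n-1}|_{\Delta_k}=\pm a_k\,\eta_{n-2}^{(k)}$, where $\eta_{n-2}^{(k)}$ is (up to a nonzero constant) the analogous Symanzik integrand for $\Gamma/e_k$ on $\Delta_k\cong\P^{n-2}$. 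In the \v{C}ech--de~Rham bicomplex for the pair $(\P^{n-1}-X,\Delta-X\cap\Delta)$, this relation rewrites the cocycle $(\eta_{n-1},0)$ in total degree $n-1$ as a cocycle supported in \v{C}ech degree one.

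One then iterates. After $m$ successive applications, $[\eta_{n-1}]$ is represented by a cocycle in $\check C^m(\Omega^{n-1-m})$ whose summand on $\Delta_I$ (with $|I|=m$) is, up to a nonzero constant, the Symanzik integrand $\eta^{(I)}_{n-1-m}$ for the contracted graph $\Gamma/I$. To pass from depth $m$ to $m+1$ we apply Lemma~\ref{prop6.1} on each such face, which carries $n-m$ edges; this requires $n-m\geq 5$. Hence the iteration can be carried out for $m=0,1,\dotsc,n-5$, and the terminal representative lives in $\check C^{n-4}(\Omega^{3})$ on the three-dimensional faces $\Delta_I\cong\P^3$. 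In particular $[\eta_{n-1}]\in F^{n-4}H^{n-1}(\P^{n-1}-X,\Delta-X\cap\Delta)$, where $F^\bullet$ denotes the \v{C}ech filtration of the spectral sequence.

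To translate this into a weight bound, recall from Lemma~\ref{lem5.1} that each $H^q(\Delta_I-X\cap\Delta_I)$ is pure Tate: $\Q(0)$ of weight $0$ for $q=0$, $\Q(-(q+1)/2)$ of weight $q+1$ for odd $q\in\{1,3,5\}$, and zero otherwise. The spectral sequence therefore lives in the category of mixed Tate Hodge structures and degenerates at $E_2$; the weight-$2k$ graded piece of $H^{n-1}$ is the sum of the $E_\infty^{p,q}$ with $p+q=n-1$ that contribute weight $2k$. The weight-$6$ part sits only in $E_\infty^{n-6,5}$, and since $[\eta_{n-1}]\in F^{n-4}$ while $n-6<n-4$, this weight-$6$ component must vanish. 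Thus $\eta_{n-1}\in W_4 H^{n-1}$. The second assertion is now immediate: by \eqref{5.5}, $W_4 H^{n-1}$ is a mixed Tate Hodge structure with $\gr^W$ concentrated in weights $0,2,4$, hence a dilogarithm Hodge structure in the sense of Definition~\ref{defndl}; the amplitude, being the pairing of $[\eta_{n-1}]$ with the relative homology class $[\sigma]$ of the real simplex, is thus a period of the sub-mixed Hodge structure of $W_4H^{n-1}$ generated by $[\eta_{n-1}]$, which is again dilogarithm.

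The principal technical obstacle will be verifying cleanly that restriction of the Feynman data to each face $\Delta_I$ really yields the Feynman data for the contracted one-loop graph $\Gamma/I$ (so Lemma~\ref{prop6.1} is applicable inductively) and that the genericity of momenta needed in Lemma~\ref{lem4.3} is preserved under successive contractions, so that the vector $\vec{a}$ continues to exist at every step. A secondary point is the compatibility of the \v{C}ech filtration $F^\bullet$ with the weight filtration $W_\bullet$ used above; this follows formally from the purity of the $E_1$ terms but deserves an explicit remark.
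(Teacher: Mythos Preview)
Your proposal is correct and takes essentially the same approach as the paper: both iterate Lemma~\ref{prop6.1} through the \v{C}ech--de~Rham bicomplex $C^{a,b}=\bigoplus_{|I|=a}\Omega^b_{\Delta_I-X\cap\Delta_I}$ to push the representative of $[\eta_{n-1}]$ from $C^{0,n-1}$ into \v{C}ech degree $\ge n-4$, and then invoke Lemma~\ref{lem5.1} to identify this filtration level with $W_4$. The paper phrases the inductive step purely in terms of the matrix $M(m)$ (restricting to principal submatrices) rather than contracted graphs, and records the outcome as landing in $C^{n-4,3}\oplus C^{n-2,1}\oplus C^{n-1,0}$ rather than $C^{n-4,3}$ alone, but the mechanism is identical and the caveats you flag about genericity under restriction and compatibility of the \v{C}ech and weight filtrations are exactly the points the paper leaves implicit.
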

\begin{proof}
If $n \ge 6$, the faces $\Delta_j \cong \P^{n-2}$ have dimension $\ge 4$ and we can apply the lemma again to the forms $w_{n-1}|\Delta_j = \pm a_j\eta_{n-2}$. In this way we can build a sort of cascade
\eq{}{\begin{array}{ccccccc}
 &&&& \Omega^{n-2}_{\P^{n-1}-X}& \xrightarrow{d} & \Omega^{n-1}_{\P^{n-1}-X} \\
 &&&& \downarrow \\
 && \bigoplus_i \Omega^{n-3}_{\Delta_i - X\cap\Delta_i} &  \xrightarrow{d} &  \bigoplus_i \Omega^{n-2}_{\Delta_i - X\cap\Delta_i} \\
 && \downarrow  \\
 && \vdots \\
 \bigoplus_{|I|=n-5} \Omega^3_{\Delta_I-X\cap \Delta_I}& \xrightarrow{d} & \ldots \\
 \downarrow \\
 \bigoplus_{|I|=n-4} \Omega^3_{\Delta_I-X\cap \Delta_I}
 \end{array}
}
where the vertical maps are restrictions on faces (with appropriate signs). (We simplify notation by writing $\Omega^i_Z$ for the sections of the sheaf $\Omega^i$ over $Z$ rather than the sheaf itself.) What this means is that the de Rham cohomology of our motive, $H^{n-1}_{DR}(\P^{n-1}-X,\Delta-X\cap \Delta)$ is calculated by a double complex of algebraic differential forms $C^{a,b} = \bigoplus_{|I|=a}\Omega^{b}_{\Delta_I-X\cap\Delta_I}$. The differential $d': C^{a,b} \to C^{a+1,b}$ (resp. $d'': C^{a,b} \to C^{a,b+1}$) is given by restriction to faces of $\Delta$ with appropriate signs (resp. exterior differentiation.) The total differential $d=d'+d''$. We have
\eq{6.5}{H^{n-1}_{DR}(\P^{n-1}-X,\Delta-X\cap \Delta) = H^{n-1}(C^{**},d) = \Big(\bigoplus_{a+b=n-1}C^{a,b}\Big)\Big/ d\Big(\bigoplus_{a+b=n-2}C^{a,b}\Big).
}
(Note that in total degree $n-1$, all cochains are closed.) The cochain $(0,\dotsc,0,\eta_{n-1}) \in C^{0,n-1}$ represents a de Rham class whose period integrated against the homology chain given by $\sigma_{n-1} = \{(a_1,\dotsc,a_n)\ |\ a_i \ge 0, \forall i\}$ is the Feynman amplitude. The content of proposition \ref{prop6.1} is that we can construct a form $w \in \bigoplus_{a+b=n-2}C^{a,b}$ such that 
\eq{6.6}{(0,\dotsc,0,\eta_{n-1}) - dw \in C^{n-4,3}\oplus  C^{n-2,1}\oplus C^{n-1,0}
}
(Note there is no contribution from $C^{n-3,2}$. This is because
$H^{2k}(\P^{2k}-X)=(0)$ for a smooth quadric in even dimensional
projective space of any dimension. The argument is the same as in
lemma \ref{prop6.1}.)   

Finally, It follows from lemma \ref{lem5.1} that the filtration in
  equation \eqref{6.5} coming from the filtration $W_rC^{**} =
  \bigoplus_{a,b;\ a\ge n-2-r}C^{a,b}$ is the weight filtration
  $W_rH^{n-1}_{DR}(\P^{n-1}-X,\Delta-X\cap \Delta)$. 
\end{proof}

Note that this is a proof of an old result of Nickel, who first studied the dependences between 
one-loop graphs in a fixed dimension \cite{Nickel}.
\section{Dilogarithm Motives}\label{sectdl}

We have seen \eqref{5.5}, \eqref{6.6} (note also the last comment in
section \ref{mot}) that motives $H$ arising from $1$-loop amplitudes
satisfy $gr^WH = \Q(0) \oplus \Q(-1)^b \oplus \Q(-2)^c$. They are
mixed Tate motives with weights $0, 2, 4$. In this section we show how
periods of such motives are related 
to dilogarithms. A general reference is \cite{D2}. We will follow the
standard convention and trivialize the one-dimensional vector
space $\Q(n)=\Q$ in such a way
that the Betti structure is $(2\pi i)^n\Q$ so the $DR$-structure is
$\Q$. 

First let us reduce to the case $c=1$. We assume for simplicity that
our de Rham structure is defined over $\Q$. (If not, one need simply
extend the field of coefficients of $H$.) The quotient pure Hodge
structure $H/W_2H \cong \bigoplus \Q(-2)$ satisfies
\eq{}{(H/W_2H)_{DR} = (2\pi i)^2(H/W_2H)_{B} \subset (H/W_2H)_{\C}. 
}

Further, the Hodge filtration has a single non-trivial piece in degree
$-2$ and hence is defined already over $\Q$. What this means is that
we can take our Feynman integrand $\eta$ which we view as lying in
$H_{DR}$ and project it to $(H/W_2H)_{DR}$. The $\C$-line spanned by
this image is canonically identified with $\Q(-2)_\C$, where
$\Q(-2)_{DR} = \Q\cdot\eta$ and $\Q(-2)_B = \Q\cdot(\eta/(2\pi
i)^2)$. The preimage $H'\subset H$ of this copy of $\Q(-2)$ has
weight graded $gr^WH' = \Q(0)\oplus \Q(-1)^b \oplus Q(-2)$, and it suffices
to compute the periods for this Hodge structure. 

Because $H$ is a mixed Tate Hodge structure, there will exist a 
base $e_{-2}, e_{-1,\mu}, e_0$ of $H_\C$ ($1\le \mu \le b$) such that the weight
(resp. Hodge) filtration on $H_\C=\C^{[-2,0]}$ is given by
$\C^{[-i,0]}=W_{2i}H_\C$ and $F^jH_\C = \C^{[-2,-j]}$, and such that
the trivialization given by the $e$'s identifies $gr^WH=\Q(0)\oplus \Q(-1)^b
\oplus \Q(-2)$. Here 
$\C^{[r,s]}$ is the span of the $e_{p,q}$ with $r\le p\le s$. 

We consider first the sub Hodge structure $W_2H$ and the quotient
$H/W_0H$, which are mixed Tate with weights $0, 2$ (resp. $2,
4$). There will exist $b$-tuples $(f_1,\dotsc,f_b)$ and
$(g_1,\dotsc,g_b)$ in $\C^{\times,b}$ such that the Betti structures
on $W_2H_\C = \C e_0\oplus \bigoplus_{\mu=1}^b \C e_{-1,\mu}$
(resp. $(H/W_0H)_\C = \bigoplus_{\mu=1}^b \C e_{-1,\mu}\oplus \C
e_{-2}$) are given by the $\Q$-spans of the elements $\frac{1}{2\pi i}(e_{-1,\mu} + \log
  g_\mu e_0)$ 
(resp. $\frac{1}{(2\pi i)^2}(e_{-2}+\log f_\mu e_{-1,\mu}$). The Betti structure on $H$ will then be the $\Q$-span of the columns of a matrix
\eq{8.2}{\frac{1}{(2\pi i)^2}\begin{pmatrix}1 & 0 & 0 & \hdots & 0 & 0 \\ \log f_1 & 2\pi i & 0 & \hdots & 0 & 0\\ \log f_2 & 0 & 2\pi i & \hdots & 0 & 0\\ \vdots & \vdots & \vdots & \hdots & 0 & 0\\ \log f_b & 0 & 0 & \hdots & 2\pi i & 0\\
h & 2\pi i\log g_b & 2\pi i\log g_{b-1} & \hdots & \log g_1 & (2\pi i)^2
\end{pmatrix},
}
and the challenge is to compute $h$. 

We can compute $h$ upto a constant by using {\it Griffiths transversality}. We treat the $f_\mu, g_\mu, h$ as functions and consider the variation of Hodge structure given by \eqref{8.2}. It carries a connection $\nabla$ for which the columns are horizontal. The transversality condition says $\nabla(F^i) \subset F^{i-1}$ where $F^*$ is the Hodge filtration. 

Write $C_i$ for the columns in \eqref{8.2}. We have
\eq{8.3}{e_{-2} = C_1-\sum_{\mu}\frac{\log f_\mu}{2\pi i}C_{\mu+1} - \frac{1}{(2\pi i)^2}(h-\sum \log f_\mu \log g_\mu)C_{b+2}.
}
Transversality says that 
\ml{}{\nabla e_{-2} = Ae_{-2}+Be_{-1} = -\frac{1}{2\pi i}\sum \frac{df_\mu}{f_\mu}C_{\mu+1} - \frac{1}{(2\pi i)^2}(dh-\sum (f_\mu dg_\mu + g_\mu df_\mu))C_{b+2} = \\
 -\frac{1}{2\pi i}\sum \frac{df_\mu}{f_\mu}e_{-1,\mu} - (dh-\sum (\log f_\mu \frac{dg_\mu}{g_\mu} )e_0.
}
We conclude that the Betti structure on $H$ is given upto a constant of integration by setting 
\eq{9.5}{h = \sum_\mu \int  \log f_\mu \frac{dg_\mu}{g_\mu}
}
in \eqref{8.2}. 

\begin{rmk} Note that the actual entries in \eqref{8.2} depend on the scaling of the $e_{i,\mu}$ which are given by algebraic de Rham classes. The actual values determined by the Feynman integrand will differ by an algebraic function of masses and momenta (eventually involving square roots) from the logs an dilogs in \eqref{8.2}. For an example of how this works, see section \ref{sect_phy}. 
\end{rmk}

\section{The Triangle Graph}\label{sect_tri}

The amplitude associated to the {\it triangle graph} with $1$ loop, $3$ vertices and $3$ internal edges, is of interest both physically and mathematically. Let $C, D \subset \P^2$ be rational curves. We assume they are reduced
but not necessarily irreducible. {\it Rational} in this context simply
means that the normalization of each irreducible component is
$\P^1$. Assume further that the intersection $C\cap D$ is
transverse. In particular, $C\cap D$ is a finite set of smooth points
in $C$ and in $D$. Let $C^0 = C- (C\cap D)$ (resp. $D^0 = D-(C\cap
D)$). The triangle graph yields a motive \eqref{8.6} which has the form
$H^2(\P^2-D,C^0)$ for suitable $C, D$. 
\begin{figure}
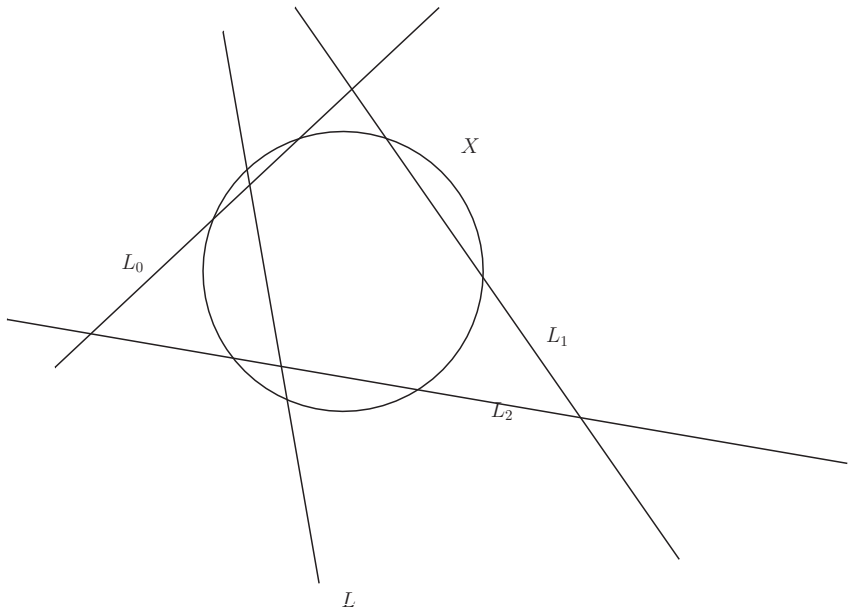

$$\triangleGeom$$
\caption{The geometry of the triangle graph. We indicate the three lines $L_0,L_1,L_2$, the line $L$, and the conic $X$.
The latter is given by $X:q_0^2A_1A_2+q_1^2A_2A_0+q_2^2A_0A_1-(m_0^2A_0+m_1^2A_1+m_2^2A_2)(A_0+A_1+A_2)=0$. 
They are all in general position. There are many degeneracies possible: for example, the conic would go through the three corners
$L_i=L_j$ in the massless case, or the conic can become tangential to one of those lines.} 
\end{figure}

We are particularly interested in the case when $C = L_0\cup L_1\cup L_2$ is the coordinate simplex (with homogeneous coordinates $A_i$ and $L_i:A_i=0$) and $D = L\cup X$ with $L:A_0+A_1+A_2=0$ and $X\subset \P^2$ a conic.  We write for simplicity
\eq{8.6}{H := H^2\Big(\P^2-(L\cup X),(L_0\cup L_1\cup L_2)-\big((L\cup X)\cap (L_0\cup L_1\cup L_2)\big),\Q\Big)
}
For the moment we assume that $X$ is a smooth conic in general position with respect to the other lines. 
\begin{prop}The Hodge structure on $H$ is mixed Tate, given by $W_0H \subset W_2H\subset W_4H$ with 
\eq{7.7}{gr^W_0H = \Q(0);\quad gr^W_2H = \Q(-1)^5;\quad gr^W_4H = \Q(-2).
}
\end{prop}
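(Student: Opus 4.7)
The plan is to apply the spectral sequence of Section \ref{mot} directly to this situation. Setting $D = L \cup X$, we have
\[
E_1^{p,q} \;=\; \bigoplus_{|I|=p} H^q(\Delta_I - D \cap \Delta_I,\Q) \;\Longrightarrow\; H^{p+q}(\P^2-D,\Delta-D\cap\Delta,\Q),
\]
with only $p=0,1,2$ nonzero (since $\Delta_{\{0,1,2\}}=\emptyset$). I would read off the $E_1$ page, compute the two relevant $d_1$ differentials in total degree $2$, and check degeneration.

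For the $E_1$ page: when $p=1$, each $L_i - D\cap L_i$ is $\P^1$ minus three points (one from $L$, two from $X$, by general position), so $H^0=\Q(0)$ and $H^1=\Q(-1)^2$; for $p=2$, each vertex $L_i\cap L_j$ contributes $H^0=\Q(0)$. The substantive computation is $H^*(\P^2-D)$ for the reducible NCD $D=L\cup X$ with two nodes. I would obtain this from the long exact sequence of the pair $(\P^2,D)$, using Mayer--Vietoris on $L\amalg X \to D$: one gets $H^0(D)=\Q(0)$, $H^1(D)=\Q(0)$ (a single combinatorial loop produced by the two nodes), and $H^2(D)=\Q(-1)^2$. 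The restriction $H^2(\P^2)=\Q(-1)\to H^2(D)=\Q(-1)^2$ sends the hyperplane class to $(1,2)$ and is injective; Poincar\'e--Lefschetz duality then yields the pure pieces $H^0(\P^2-D)=\Q(0)$, $H^1(\P^2-D)=\Q(-1)$, $H^2(\P^2-D)=\Q(-2)$.

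Now for the $d_1$'s in total degree $2$: all outgoing differentials vanish since $E_1^{1,2}=E_1^{2,1}=E_1^{3,0}=0$. The incoming $d_1^{1,0}: E_1^{1,0}\to E_1^{2,0}$ is the simplicial coboundary $\Q^3\to\Q^3$ for the $1$-skeleton of a $2$-simplex, whose image has rank $2$, giving $E_2^{2,0}=\Q(0)$. The incoming $d_1^{0,1}: H^1(\P^2-D)\to \bigoplus_i H^1(L_i-D\cap L_i)$ is restriction; since $H^1(\P^2-D)$ is generated by $d\log(L^2/Q)$, which restricts to a nontrivial $d\log$ on each $L_i$ under general position, the map is injective, so $E_2^{1,1}=\Q(-1)^5$. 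No higher $d_r$ interferes (targets are zero at $E_1$), so the spectral sequence degenerates, and since all $E_1^{p,q}$ here are pure of weight $2q$, one reads off $gr^W_0H=\Q(0)$, $gr^W_2H=\Q(-1)^5$, $gr^W_4H=\Q(-2)$.

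The main obstacle is the NCD cohomology $H^*(\P^2-D)$ for the reducible divisor; once this pure weight structure is established the rest is combinatorial bookkeeping. A secondary point is verifying the general-position hypothesis that $d\log(L^2/Q)|_{L_i}\neq 0$ for each $i$, which is immediate provided $X$ avoids the corners $L_i\cap L_j$ and is not tangent to any $L_i$ at a point of $L\cap X$.
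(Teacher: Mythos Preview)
Your argument is correct. The computation of $H^*(\P^2-D)$ via Mayer--Vietoris on $D=L\cup X$ and the long exact sequence of $(\P^2,D)$ yields exactly the pure pieces $\Q(0),\Q(-1),\Q(-2)$ you state, and your injectivity check for $d_1^{0,1}$ via the explicit generator $d\log(L^2/Q)$ is fine (in fact nonvanishing on a single $L_i$ already suffices, and under general position the restricted divisor $2\ell_i-m_i-n_i$ is never zero since $\ell_i\notin X$). Degeneration at $E_2$ and the identification of the spectral-sequence filtration with the weight filtration are justified as you say, since each $E_\infty^{p,q}$ is pure of weight $2q$.

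The paper organizes the same computation differently. Rather than stratifying $\Delta$ and running the spectral sequence of Section~\ref{mot}, it uses the single long exact sequence of the pair $(\P^2-D,\,C-C\cap D)$ directly, and it computes $H^*(\P^2-D)$ by Poincar\'e duality with the homology of $D$ (so $H^2(\P^2-D)\cong H_1(D,\Q(-2))$, etc.) rather than via $H^*(D)$ and Lefschetz duality. The injectivity of $H^1(\P^2-D)\to H^1(C-C\cap D)$ is then read off from a commutative square of residue/boundary maps into $H_2(D,\Q(-2))$ and $H_0(C\cap D,\Q(-1))$, with no explicit de Rham representative needed. Your route has the virtue of slotting the triangle into the general template of Section~\ref{mot} with only the modification $X\rightsquigarrow L\cup X$; the paper's route has the virtue of producing the description of $gr^W_2H$ in terms of degree-zero $0$-cycles on the five components $L,X,L_0,L_1,L_2$ (Remark~\ref{rmk8.3}), which is what gets used later when computing the Kummer extension classes.
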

\begin{proof}Write $C = L_0\cup L_1\cup L_2$ and $D = L\cup X$. We have
\eq{7.8}{H^1(\P^2-D) \to H^1(C-C\cap D) \to H \to H^2(\P^2-D) \to 0
}
We have by Poincar\'e duality (formulated algebro-geometrically using cohomology with support, 
)
\eq{7.9}{H^2(\P^2-D,\Q) \cong H^3_D(\P^2,\Q) \cong H_1(D,\Q(-2))\cong \Q(-2). 
}
Note that topologically, $D$ is a union of two Riemann spheres $S^2$ meeting at two distinct points $p_1, p_2$. We get
\eq{7.10}{H_1(S^2,\Q(-2))^{\oplus 2} \to H_1(D,\Q(-2)) \to H_0(\{p_1,p_2\},\Q(-2)) \to H_0(S^2,\Q(-2))^{\oplus 2}
}
from which one deduces $H_1(D,\Q(-2)) \cong \Q(-2)$. We have again by duality a diagram
\eq{7.11}{\begin{CD}H^1(\P^2-D) @>>> H^1(C-C\cap D) \\
@VV \partial_D  V @VV \partial_C V \\
H_2(D,\Q(-2)) @>>> H_0(C\cap D, \Q(-1)) \\
@VV \cong V @VV \cong V\\
\Q(-1)^{\oplus 2} @>\inj >> \Q(-1)^{\oplus 9}
\end{CD}
}
The map $\partial_D$ is injective and has image the kernel of
$H_2(D,\Q(-2)) \to H_2(\P^2,\Q(-2))$ which is one dimensional. The
image of $\partial_C$ consists of all elements in $H_0(C\cap D,
\Q(-1))$ which have degree $0$ on each irreducible component of
$C$. The kernel of $\partial_C$ is $H^1(C,\Q) \cong \Q(0)$. The
proposition follows. (A detailed proof that $\dim gr^W_2H = 5$ is
given in remark \ref{rmk8.3} below.)\end{proof}
\begin{rmk}\label{rmk8.3} Let $\ell_i = L_i\cap L$ and $\{m_i,n_i\} = X\cap L_i,\ i=0,1,2$. We can identify $gr^W_2H$ with a subquotient of the Hodge structure $\Q(-1)^9$ with basis indexed by the $\ell_i, m_i, n_i$ as follows
\ml{}{gr^W_2H \cong \Q(-1)^5 \cong \\
\Big\{\sum_{i=0}^2 a_i\ell_i + b_im_i+c_in_i\ |\ a_i, b_i, c_i \in \Q(-1), a_i+b_i+c_i=0\Big\}\Big/ \Big\{\Q(-1)\cdot \sum_{i=0}^2 (2\ell_i-m_i-n_i)\Big\}. 
}
Alternatively, Consider zero cycles $z=\sum_0^2 a_i\ell_i + \sum b_im_i + \sum c_i n_i$ with $a_i, b_i, c_i \in \Q$. We impose the condition that for any one of the irreducible components $L, L_0, L_1, L_2, X$, the ``piece'' of $z$ supported on that component has degree $0$. This amounts to the linear conditions
\eq{8.13}{0 = \sum a_i = \sum (b_i+c_i) = a_0+b_0+c_0 = a_1+b_1+c_1 = a_2+b_2+c_2.
} 
The vector space $A$ of such cycles has dimension $5$ and is identified with $\Q(1)\otimes gr_2^W H$. 
\end{rmk}
\begin{rmk}\label{rmk8.4}
Let $M_i, i=1,2,3$ be the masses associated to the edges of the
triangle graph. There is physical interest in the situation when one
or more of the $M_i=0$. With reference to \eqref{2.16}, we see that
setting $M_i=0$ amounts to having the conic pass through the $i$-th
vertex of the triangle. The curves $C, D$ in the above discussion no
longer meet transversally, so we must blow up some of the vertices. Let $\pi: P \to \P^2$ be the blowup of $\nu=1, 2, 3$ of the three points $(1,0,0), (0,1,0), (0,0,1)$. Assume the
other parameters are generic and let $E_i$ be the exceptional
divisors. In our motive $H$ \eqref{8.6} we must replace $\P^2$ with $P$. 
The curve $L\cup X$ is replaced by the strict transform in $P$ of $L\cup X$, and the other rational curve 
becomes the total inverse image in $P$ of the triangle, a $(3+\nu)$-gon comprising the strict transforms in $P$ of the three lines $A_j=0$ and the exceptional divisors $E_i$. One checks that each blowup drops the rank of $gr_2^WH$ by one. Thus $gr_2^W H =
\Q(-1)^b$ with $b=5-\nu$, $\nu$ being the number of zero masses.

\end{rmk}
We will not compute the amplitude, as this has been done very nicely in \cite{DD}. Instead we will look more closely at qualitative results we can deduce about the motive. In particular, these will help to frame a (future) study of Landau poles for 1-loop graphs. 

\section{Appendix on duality for the triangle graph motive}
The following duality result, proposition \ref{prop_dual}, will not
be used in the sequel.

\begin{lem}Consider the diagram of varieties over $\C$ endowed with the complex topology. 
\eq{}{\begin{CD} \P^2-D @> j_D >> \P^2 \\
@A k_C AA @AA j_C A \\
\P^2-(D\cup C) @> k_D >> \P^2-C
\end{CD}
}
where the maps are the evident inclusions. We continue to assume $C\cap D$ is transverse. Let $A$ be a constant sheaf on $\P^2$. Then
\eq{7.2}{j_{D*}k_{C!}A_{\P^2-(D\cup C)} = j_{C!}k_{D*}A_{\P^2-(D\cup C)}. 
}
Here the lower $!$ and the lower $*$ are extension by zero and direct image extension viewed as acting on the derived category. (That is, we write e.g. $j_{D*}$ in place of $Rj_{D*}$.) 
\end{lem}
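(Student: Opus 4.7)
The plan is to verify the equality stalk by stalk on $\P^2$. Stratify $\P^2$ by $V:=\P^2-(C\cup D)$, the locally closed strata $C^\circ := C-(C\cap D)$ and $D^\circ := D-(C\cap D)$, and the finite set $C\cap D$. Since the compositions $j_D\circ k_C$ and $j_C\circ k_D$ are both the open immersion $V\inj \P^2$, both sides of the claimed equality restrict to $A_V$ on $V$, so the task reduces to matching stalks at points of $C\cup D$.

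For $p\in C^\circ$ the argument is formal: $(j_{C!}k_{D*}A_V)_p=0$ because $p\in C$ and $j_{C!}$ is extension by zero, while $(j_{D*}k_{C!}A_V)_p$ equals $(k_{C!}A_V)_p=0$ since $p\notin D$ (so $j_{D*}$ is stalk-trivial at $p$) and $p\in C$. For $p\in D^\circ$, pick a local coordinate $x$ on a small disk neighborhood $U\ni p$ with $D\cap U=\{x=0\}$ and $C\cap U=\emptyset$; then $U\setminus D$ is homotopy equivalent to a circle, and both stalks compute $R\Gamma(U\setminus D,A)\cong A\oplus A[-1]$.

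The interesting case is $p\in C\cap D$, where transversality enters. As before, $(j_{C!}k_{D*}A_V)_p=0$ trivially because $p\in C$. To show $(j_{D*}k_{C!}A_V)_p=0$, use transversality to choose coordinates $(x,y)$ on a bidisk $U\ni p$ with $D\cap U=\{x=0\}$ and $C\cap U=\{y=0\}$; then $U\setminus D$ is a punctured disk in $x$ times a disk in $y$, and its intersection with $C$ is the axis $\{y=0\}$, which is itself a punctured disk. Writing $\iota$ for the inclusion of this axis into $U\setminus D$, the short exact sequence of sheaves
\begin{equation*}
0 \to k_{C!}A_V \to A_{U\setminus D} \to \iota_* A \to 0
\end{equation*}
identifies $(j_{D*}k_{C!}A_V)_p$ with the mapping fiber of the restriction $R\Gamma(U\setminus D,A) \to R\Gamma(C\cap(U\setminus D),A)$, and this restriction is a quasi-isomorphism because the axis inclusion is a deformation retract.

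The main obstacle is this last computation at $p\in C\cap D$: transversality is what provides the bidisk product structure that makes the axis a deformation retract of $U\setminus D$. The other two strata are essentially formal, relying only on the properties of extension by zero and the homotopy type of a small punctured neighborhood of a smooth point on a curve.
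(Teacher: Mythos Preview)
Your stalk computations are all correct, but the overall strategy has a gap: knowing that two objects of the derived category have isomorphic stalks at every point does not, by itself, show they are isomorphic---one needs an actual morphism between them and then checks it is a quasi-isomorphism stalkwise. The paper supplies this missing ingredient at the outset: since $j_{C!}$ is left adjoint to $j_C^!=j_C^*$, a natural transformation $j_{C!}k_{D*}\to j_{D*}k_{C!}$ amounts to a map $k_{D*}\to j_C^*j_{D*}k_{C!}$, and by open base change for the (cartesian) square this target is $k_{D*}k_C^*k_{C!}=k_{D*}$; one takes the identity. With this comparison map in hand, your stalk checks become a valid proof. Alternatively, having verified that both complexes have zero stalks along all of $C=C^\circ\cup(C\cap D)$, you can finish without ever writing down the map: each complex is then canonically the $!$-extension from $\P^2-C$ of its restriction there, and both restrictions equal $k_{D*}A_V$ (again by open base change), so both complexes equal $j_{C!}k_{D*}A_V$. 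This second route also makes your separate verification on $D^\circ$ unnecessary.

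At $p\in C\cap D$ the paper argues via K\"unneth: in the product coordinates each side is an external product, so its stalk at the origin is killed by the factor $(j_!A_{U-\{0\}})_0=0$. Your deformation-retract / mapping-fiber argument reaches the same vanishing and is a perfectly good alternative. Note too that once the comparison map is in place, the paper can (and does) omit the separate checks at $C^\circ$ and $D^\circ$: away from $C\cap D$ one of $j_C,j_D$ is locally an isomorphism and the map is visibly the identity there.
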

\begin{proof} There is a natural morphism of functors $j_{C!}k_{D*}\to J_{D*}k_{C!}$. Indeed, $j_{C!}$ is left adjoint to $j_C^! = j_C^*$, so it suffices to define a map $k_{D*} \to j_C^*j_{D*}k_{C!} = k_{C*}$, and we can take the identity. To show the map is an isomorphism, it suffices to look at the stalks at points of $C\cap D$. We can coordinatize a small complex neighborhood of such a point so locally $\P^2-(C\cap D)$ looks like $(U-\{0\})\times (U-\{0\})$ where $U \subset \C$ is the open unit disk about $0$. Locally, $D=\{0\}\times U$ and $C = U\times \{0\}$. The stalk at $(0,0)$ on both sides of \eqref{7.2} is $j_{C!}A\boxtimes j_{D*}A$ by Kunneth. 
\end{proof}

Recall we have a Verdier duality functor on the derived category of constructible sheaves on a reasonable topological space. For $\P^2_\C$ it takes the form (to simplify I work with sheaves of $\Q$-vector spaces) $\D F = Hom(F, \Q_{\P^2}(2)[4])$ where the Hom is in the derived category. Verdier duality yields an isomorphism
\eq{}{ R\Gamma(\P^2, \D F) \cong \text{Hom}_\Q(R\Gamma(\P^2, F),\Q)
}
We have $\D j_{C!}A = j_{C*}\D A$ and $D j_{C*} A= j_{C!}\D A$. Using the lemma we find for $A=\Q$ 
\ml{}{\text{Hom}_\Q(R\Gamma(\P^2, j_{D*}k_{C!}\Q_{\P^2-(D\cup C)}),\Q) \cong \text{Hom}_\Q(R\Gamma(\P^2, j_{C!}k_{D*}\Q_{\P^2-(D\cup C)}),\Q) \cong \\
R\Gamma(\P^2,j_{C*}k_{D!} \Q_{\P^2-(D\cup C)}(2)[4]). 
}
Taking $H^{-2}$ on both sides yields an isomorphism (duality)
\begin{prop}\label{prop_dual} With notation as above, we have
\eq{8.5}{\text{Hom}_\Q(H^2(\P^2-D,C-(C\cap D),\Q),\Q) \cong H^2(\P^2-C,D-(C\cap D),\Q(2)).
}
\end{prop}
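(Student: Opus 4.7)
The plan is to prove this by Verdier duality on $\P^2$, reducing to a local statement at points of $C \cap D$ which is then handled by the commutation lemma already established.

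First I would rewrite both sides of \eqref{8.5} as hypercohomology on the fixed ambient space $\P^2$. Using the standard recipe that for an open immersion $j: U \inj X$ and a closed subset $Z \subset U$ with open complement $W = U - Z$ (inclusion $k: W \inj U$), one has $H^*(U, Z; \Q) \cong H^*(X, j_* k_! \Q_W)$. Applied to the pair $(\P^2-D,\, C-C\cap D)$ this gives
\[
H^2(\P^2-D,\, C-C\cap D,\,\Q) \;\cong\; H^2(\P^2,\; j_{D*} k_{C!}\,\Q_{\P^2-(C\cup D)}),
\]
and symmetrically for the right-hand side with $C$ and $D$ swapped and a Tate twist.

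Next I would invoke the commutation lemma, namely $j_{D*} k_{C!}\,\Q \cong j_{C!} k_{D*}\,\Q$. This is the conceptual heart of the argument. Its verification is local on $\P^2$ and is trivial away from $C\cap D$; at a point of $C\cap D$, transversality lets me pick complex coordinates in which the configuration is isomorphic to a product $(U-\{0\}) \times (U-\{0\}) \subset U \times U$ with $D = \{0\}\times U$ and $C = U\times\{0\}$. Both sheaves are then computed by the K\"unneth box-product $j_{C!}\Q \boxtimes j_{D*}\Q$, giving the identification. The main obstacle is precisely this local check, and the place where transversality of $C \cap D$ is essential; without it, the product structure fails.

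Finally I would apply Verdier duality. Since $\P^2_\C$ is smooth of complex dimension $2$, the dualizing complex is $\Q_{\P^2}(2)[4]$, and the duality functor $\D = R\mathcal{H}om(-,\Q_{\P^2}(2)[4])$ satisfies $\D\,j_{C!} = j_{C*}\,\D$ and $\D\,k_{D*} = k_{D!}\,\D$. Thus
\[
\D\bigl(j_{C!} k_{D*}\,\Q_{\P^2-(C\cup D)}\bigr) \;\cong\; j_{C*} k_{D!}\,\Q_{\P^2-(C\cup D)}(2)[4].
\]
Combining this with global Verdier duality $\mathrm{Hom}_\Q(R\Gamma(\P^2,F),\Q) \cong R\Gamma(\P^2,\D F)$ applied to $F = j_{D*}k_{C!}\Q_{\P^2-(C\cup D)} \cong j_{C!}k_{D*}\Q_{\P^2-(C\cup D)}$, and then extracting the correct cohomological degree (taking $H^{-2}$ of the dualizing complex, which converts $H^2$ on the left into $H^2$ of the shifted right-hand complex with its $(2)$-twist), yields precisely the isomorphism \eqref{8.5}.
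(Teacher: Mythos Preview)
Your proposal is correct and follows essentially the same route as the paper: identify the relative cohomology groups with hypercohomology on $\P^2$ of $j_{D*}k_{C!}\Q$ and $j_{C*}k_{D!}\Q$, use the commutation lemma $j_{D*}k_{C!}\Q \cong j_{C!}k_{D*}\Q$ (checked locally via the product coordinates at points of $C\cap D$), and then apply global Verdier duality together with $\D j_! = j_*\D$, $\D j_* = j_!\D$ and take $H^{-2}$. The only cosmetic difference is that you spell out the identification of the relative groups with the sheaf hypercohomology more explicitly than the paper does.
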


\section{Landau Poles and Thresholds}\label{sectlim}

In this section we examine the phenomenon of Landau poles and normal and anomalous thresholds. 
\begin{ex}\label{ex10.1} Consider the case of the triangle graph. Changing notation slightly, we can rewrite \eqref{2.16} in this case
\eq{10.1a}{D(q,A) = |q_0|^2A_1A_2+|q_1|^2A_0A_2+|q_2|^2A_0A_1-(\sum_0^2 m_i^2A_i)(\sum_0^2 A_i). 
}
Fix an $i$ and suppose $|q_i|^2=(m_j\pm m_k)^2$. Then
\eq{10.2a}{D(q,A)|_{A_i=0} = -(A_j+A_k)(m_j^2A_j+m_k^2A_k)+(m_j\pm m_k)^2A_jA_k= -(m_jA_j\mp m_kA_k)^2.
}
Geometrically, our conic becomes tangent to the line $L_i: A_i=0$. If we look at the motive \eqref{8.6}, we see that this corresponds to a degenerate configuration, and we might reasonably expect the amplitude to become singular. In fact, a moment's reflection reveals a vast number of possible degenerations including situations where the conic itself degenerates to a union of $2$ lines through a point $p$, which may lie on one of the $L_i$, and situations where the conic passes through the point $L_i\cap L$ where $L:A_0+A_1+A_2=0$. One would like to better understand the behavior of the amplitude near these singularities.  
\end{ex}

Let us consider a generalization due to Cutkosky \cite{ItzZ} of the above example to more general graphs. If we write the amplitude in its usual (non-parametrized) form, we find an integral over $\R^q$ where $q$ is the loop number of the graph. The integrand has in the denominator a product of rank $4$ affine quadrics of the form (roughly) $(\vec{x}-\vec{q})^2+m^2$. These quadrics determine the polar locus of the integrand, and hence the motive whose realization will contain the amplitude as a period. In fact the motive can be taken to be the union of the projective closures of the quadrics. If we ignore what is happening at infinity and just consider the affine quadrics, we might expect degeneracies to occur for values of the parameter $\vec{q}$ where some subset of the affine quadrics do not meet transversally. In general, the locus of such $\vec{q}$ will form a divisor in the space of momenta, and our first job is to use elimination theory to find this divisor. 

To formulate things precisely, we fix a graph $\Gamma$. We write $H=H_1(\Gamma,\Q)$ and $E=\text{Edge}(\Gamma)$. As in \eqref{3.9c}, \eqref{3.10c} we have 
\eq{10.3a}{0 \to H\otimes \sA \to \sA^E \xrightarrow{\partial} \sA^{V,0}  \to 0. 
}
For $e\in E$  write $e^\vee: \sA^E \to \sA$ for the evident functional. Let me write (abusively) $e^{\vee,2}: \sA^E \to \R,\ \sum_\ve a_\ve \ve \mapsto a_e\bar a_e=a_{e,0}^2+a_{e,1}^2+a_{e,2}^2+a_{e,3}^2$. Given $q=\sum q_vv \in \sA^{V,0}$ (so $\sum_v q_v=0$)  consider the set 
\eq{10.4a}{H(q) :=\partial^{-1}(q) \subset \sA^E
}
Note that if $q\neq 0$ then technically $H(q)$ is not a vector space but a torsor under the vector space $H\otimes \sA$. In fact, $H(q)$ embodies the Feynman rule imposing relations for each vertex. In other words, if vertex $v$ lies on edges $e_1,\dotsc,e_p$ oriented to point toward $v$, and if $h \in H(q)$, then $\sum_i e_i^\vee(h) = q_v$. 

Let $S \subset E$ be a subset of edges, and let $T\subset S$ be such that the $\{e^\vee|H(q)\}_{e\in T}$ form a basis for the vector space spanned by $\{e^\vee|H(q)\}_{e\in S}$. To make life interesting, assume $T\neq S$. Let $m_e \in \R$ be a collection of masses, and 
consider the quadrics 
\eq{10.5a}{e^{\vee,2}-m_e^2: H(q) \to \R;\quad e \in S.
}
Let $X(S,q) \subset H(q)$ be defined by the vanishing of the quadrics \eqref{10.5a}. We want to identify the set of $q\in \sA^{V,0}$ such that $X(S,q)$ is not a smooth subvariety of $H(q)$ of codimension equal to $\# S$. 

For each $e\in T$ we get $4$ $\R$-linear functionals $e^{\vee,(0)},\dotsc,e^{\vee,(3)}: \sA^E \to \R$. If we consider the Jacobian matrix for the map 
\eq{10.6a}{(\ldots,e^{\vee,2}-m_e^2,\ldots)_{e\in S}: H(q) \to \R^{\# S}
}
It will have $\# S$ rows and $4\cdot\# T$ columns. For the row given by $e\in S$ we write $e^{\vee}|H(q) = \sum_{\ve\in T} c_{e,\ve} \ve^\vee + a_e(q)$ with $a_e(q)\in \sA$ and $c_{e,\ve} \in \R$. We have
\eq{}{e^{\vee,2}|H(q) = \sum_{i=0}^3(\sum_{\ve\in T} c_{e,\ve} \ve^{\vee,(i)}+a_e(q)^{(i)})^2
}
For $\tau \in T$ it follows that the entry of the jacobian matrix corresponding to $\partial/\partial \tau^{(i)}$ is
\eq{}{ 2c_{e,\tau}( \sum_{\ve\in T} c_{e,\ve} \ve^{\vee,(i)}+a_e(q)^{(i)})
}
It follows that the $4$ entries of the $e$-row corresponding to $\partial/\partial\tau$ yield exactly
\eq{10.7a}{2c_{e,\tau} e^{\vee}|H(q).
}
If we assume $S$ ordered so the elements of $T=\{\tau_1,\dotsc,\tau_p\}$ come first, the matrix \eqref{10.7a} evaluated at $h\in H(q)$ will look like
\eq{10.10a}{2\cdot\begin{pmatrix}\tau_1^\vee(h) & 0 & 0 & \hdots \\
0 & \tau_2^\vee(h) & 0 & \hdots \\
\vdots &\vdots &\vdots &\vdots \\
0 & \hdots & 0 &  \tau_p^\vee(h) \\
\vdots &\vdots &\vdots &\vdots\end{pmatrix}
}
The point $h$ will be singular in $X(S,q)$ if first of all the quadrics \eqref{10.5a} vanish at $h$ (i.e. $h\in X(S,q)$) and secondly there exists a real non-zero row vector $\vec{b}=(b_1,\dotsc,b_{\#S})$ of length $\#S$ which dies under right multiplication by \eqref{10.10a}. Since the $e^\vee|H(q)$ are affine linear combinations of the $\tau_i^\vee|H(q)$ we can use such a vector, which we treat as a vector with unknown entries $b_i$, to write $p$ affine linear equations
\eq{}{\sum_{i=1}^p\alpha_{ij}(\vec{b})\tau_i^\vee(h) = \beta_j(\vec{b},q);\quad j=1,\dotsc,p
}
We then solve these equations:
\eq{}{\tau_i^\vee(h) = \gamma_i(\vec{b},q)
}
and substitute into the quadrics \eqref{10.5a} (again using that $e^\vee|H(q)$ are affine linear combinations of the $\tau_i^\vee|H(q)$).  Note that the $\gamma_i$ are homogeneous of degree $0$ in the $b_j$. The quadrics yield $\#S$ equations $F_i(\vec{b},q)=0$ which are homogeneous of degree $0$ in the $b_i$. Write $\A^{V,0}$ for the affine space associated to $\sA^{V,0}\cong (\R^{V,0})^4$. We can view the $b_i$ as homogeneous coordinates on $\P^{\#S-1}$. In this way we get $\#S$ equations in $\P^{\#S-1}\times \A^{V,0}$. Projecting down to $\A^{V,0}$ amounts to eliminating the variables $b_i$. The image is a closed subvariety $Z\subset \A^{V,0}$ with the property that $q\in Z \Leftrightarrow \text{the intersection of the quadrics in \eqref{10.5a} is not transverse}.$ Note that in general we expect $Z$ is a hypersurface in $\A^{V,0}$ though of course degeneracies can occur. This $Z$ is our divisor. 
\begin{ex}\label{ex10.2} Suppose elements of $S$ form a cut, i.e. that $\Gamma - \bigcup_{e\in S} e$ is disconnected but that $S$ is minimal in the sense that no proper subset of $S$ disconnects $\Gamma$. (In removing edges, we do not remove vertices, so one of the connected components may be an isolated vertex.) It is easy to see in this case that $S-T=\{e\}$ is a single edge, so $S=\{\tau_1,\dotsc,\tau_p,e\}$ and $\#S=p+1$. For a suitable edge orientation we get
\eq{}{e^\vee = \sum_{i=1}^p \tau_i^\vee + a(q)
} 
where $a(q)$ is some fixed linear combination (depending on $\Gamma,\ S,\ T$) of the $q_v$. (Recall $q=\sum q_vv$.) The matrix \eqref{10.10a} in this case is
\eq{10.14a}{2\cdot\begin{pmatrix}\tau_1^\vee & 0 & 0 & \hdots \\
0 & \tau_2^\vee & 0 & \hdots \\
\vdots & \vdots & \vdots &\hdots \\
0 & \hdots & 0 & \tau_p^\vee \\
\sum_{i=1}^p \tau_i^\vee + a(q) & \sum_{i=1}^p \tau_i^\vee + a(q) &  \hdots & \sum_{i=1}^p \tau_i^\vee + a(q)\end{pmatrix}.
}
The linear equations and their solutions become
\ga{10.15a}{(b_i+b_{p+1})\tau_i^\vee +b_{p+1}(\sum_{j\neq i}\tau_j^\vee + a(q))=0;\quad i=1,\dotsc,p \\
\tau_i^\vee = a(q)D_i(b)/D(b) \label{10.16a}
}
It is easy to see the determinant $D(b)$ in the denominator does not identically vanish because the term $b_1b_2\cdots b_p$ cannot cancel.

The quadrics in this case become after substitution ($|a|^2=a\bar a,\ a\in \sA$.)
\ga{10.17a}{|a(q)|^2D_i(b)^2/D(b)^2=m_i^2;\quad 1\le i\le p \\
|a(q)|^2(1+\sum_{i=1}^pD_i(b)/D(b))^2 = m_{p+1}^2 \label{10.18a}
}
Combining these, we deduce finally
\eq{10.19a}{|a(q)| = \sum_{i=1}^{p+1}\mu(i)m_i;\quad \mu(i) = \pm 1.
}
Note \eqref{10.19a} is necessary and sufficient for the intersection of the Feynman quadrics on $H(q)$ indexed by $S$ to be non-transverse. Indeed, if \eqref{10.19a} holds, we can solve for the $\tau_i^\vee$ as multiples of $a(q)$ using \eqref{10.16a}. The resulting point will lie on $X(S,q)$. The matrix \eqref{10.14a} can then be treated as a matrix of scalars (more precisely, all entries lie on the same line). It has $p+1$ rows and $p$ columns, so there is necessarily a non-trivial solution $\vec{b}$ and the point on $X(S,q)$ is not a point of transverse intersection.

The values of $q$ where \eqref{10.19a} hold are called {\it normal thresholds}. We have seen \eqref{10.2a} that in the case of the triangle graph, normal thresholds correspond to values of external momenta where the polar conic \eqref{10.1a} becomes tangent to one of the $L_i:A_i=0$. 
\end{ex}

\section{Thresholds for the triangle graph}\label{sect_lim2}
In this section, we outline the use of limiting mixed Hodge structures (cf. section \ref{sectHS}) to study thresholds. We take the very basic case of the triangle graph with zero masses. The second Symanzik polynomial has the form
\eq{}{Q=|q_0|^2A_1A_2+|q_1|^2A_0A_2+|q_2|^2A_0A_1.
}
We will eventually want to assume $q_i=q_i(t)$ for $t$ in a small disk about $0$ and that two of the $|q_i(t)|^2$ tend to $0$ as $t\to 0$. (The notation here is misleading. $q_i$ is a $4$-vector and $|q_i|^2 = \langle q_i,q_i\rangle$ for some non-degenerate quadratic form on $\C^4$. In particular, $|q_i|^2$ is analytic in $q_i$.) Our objective will be to show in this case that the logarithm of monodromy $N$ \eqref{2.10b} satisfies $N^2 \neq 0$ and that as a consequence the leading term for the expansion of the amplitude as $t\to 0$ is a non-zero multiple of $(\log t)^2$.

The differential form we need to integrate is
\eq{11.2d}{\eta(q):= \frac{\Omega_2}{(A_0+A_1+A_2)(|q_0|^2A_1A_2+|q_1|^2A_0A_2+|q_2|^2A_0A_1)}
} 
Let $\pi: P \to \P^2$ be the blowup of the three vertices $A_i=A_j=0$. We take $q$ general so the singularities of the polar locus of $\eta(q)$ do not fall at the vertices. Let $Z(q) \subset P$ be the strict transform of this polar locus. Let $E_0, E_1, E_2\subset P$ be the exceptional divisors, so $E_i$ lies over $A_j=A_k=0$, and let $F_i\subset P$ be the strict transform of the locus $\{A_i=0\}$.  The union
\eq{}{\Sigma := \pi^*\Delta=E_0\cup E_1\cup E_2\cup F_0\cup F_1\cup F_2 \subset P
}
forms a hexagon. Note that $Z(q) = L'\cup Y(q)$ where $L'$ is the strict transform of the line $L:A_0+A_1+A_2=0$ in $\P^2$ and $Y(q)$ is the strict transform of the conic. $L'$ meets each $F_i$ in a single point, and $Y(q)$ meets each of the $E_i$ in a single point. Let $\Sigma^0 := \Sigma - \Sigma\cap Z(q)$. Then $\Sigma^0$ is a hexagon of affine lines $E_i^0, F_j^0$, so $H^1(\Sigma^0)=\Q(0)$. The motive we need to study is 
\eq{}{H:= H^2(P-Z(q),\Sigma^0).
}

We have seen in Remark \ref{rmk8.4} that $gr^WH = \Q(0)\oplus \Q(-1)^2 \oplus Q(-2)$. The next step is to construct the Kummer motives $W_2H$ and $H/W_0H$ (see Example \ref{kumex}).  Let $S=\sum n_is_i$ be a $0$-cycle (formal linear combination of smooth points) on $\Sigma^0$. We define a Kummer extension $K_S$ by pullback as follows \minCDarrowwidth.1cm
\eq{}{\begin{CD} 0 @>>> H^1(\Sigma^0) @>>> H^1(\Sigma^0-\{s_i\}) @>>> \bigoplus_i \Q(-1) @>>> 0 \\
@. @| @AAA @AA s A \\
0 @>>> \Q(0) @>>> K_S @>>>  \Q(-1) @>>> 0.
\end{CD}
}
Recall \eqref{2.10d} that Kummer extensions $\leftrightarrow \C^\times$. Let $[S] \in \C^\times$ correspond to $K_S$ as above. It is an easy exercise to check that the mapping
\eq{}{\{\text{$0$-cycles on $\Sigma^0$}\} \to \C^\times;\quad S \mapsto [S]
}
is a homomorphism of groups. To compute this map, we note that the $E_i, F_i$ are projective lines with natural projective coordinates $a_j, a_k$. We have
\eq{}{F_i^0 = F_i-\{-1\};\quad E_i^0 = E_i - \{|q_j|^2a_k+|q_k|^2a_j=0\}.
}
Suppose $S=s$ is a single point. If $s\in E_i^0$ (resp. $s\in F_i^0$) we choose $f$ a regular function on $E_i^0$ (resp. $F_i^0$) with a simple zero at $s$ and no other zeroes. We then orient our hexagon by ordering the edges $E_0, F_1, E_2, F_0, E_1, F_2$. Let $j, k$ be such that $F_j, E_i, F_k$ (resp. $E_j, F_i, E_k$) is part of the ordered string of edges. Define $[s] = \frac{f(E_i\cap F_k)}{f(E_i\cap F_j)}$ (resp.  $[s] = \frac{f(F_i\cap E_k)}{f(F_i\cap E_j)}$.) 

We will be interested in the case $S=H_i'\cdot \Sigma^0$ where $H_i'$ is the strict transform of the line $H_i: A_j-A_k=0$ in $\P^2$. Thus $S=\{1\in E_i^0\} + \{1\in F_i^0\}$. On $F_i$ we take $f=\frac{a_j-a_k}{a_j+a_k}$, so $[ 1\in F_i^0] = -1$. On $E_i$, let $f=\frac{a_j-a_k}{|q_j|^2a_k+|q_k|^2a_j}$, so $[1\in E_i^0] = \frac{-|q_j|^2}{|q_k|^2}$. Taken together, we see 
\eq{11.8c}{[H_i'\cdot \Sigma^0] = \frac{|q_j|^2}{|q_k|^2}. 
}
\begin{lem}\label{lem11.1} $W_2H$ is an extension of $\Q(-1)^2$ by $\Q(0)$ corresponding to extension classes 
\eq{11.9a}{\frac{|q_j|^2}{|q_k|^2}, \frac{|q_i|^2}{|q_k|^2} \in \C^\times.
}
Here $i, j, k$ are all distinct. 
\end{lem}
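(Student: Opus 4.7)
The plan is to recover $W_2H$ from the long exact sequence of the pair $(P-Z(q),\Sigma^0)$ and then to identify its two Kummer sub-extensions with ones produced by the homomorphism $S\mapsto[S]$, using suitable auxiliary curves in $P$. Write $h$ for the class of the pullback of a line from $\P^2$ and $E_0,E_1,E_2$ for the three exceptional divisors, so $\Pic(P)_\Q=\Q h\oplus\bigoplus_i \Q E_i$. From the pair sequence $H^1(P-Z(q))\to H^1(\Sigma^0)\to H\to H^2(P-Z(q))\to H^2(\Sigma^0)$ one sees that $H^2(\Sigma^0)=0$ (the hexagon deformation retracts onto $S^1$) and that the map from $H^1(P-Z(q))$ (whose weights are $\geq 1$) to the pure weight-$0$ group $H^1(\Sigma^0)=\Q(0)$ is zero, so $W_0 H=H^1(\Sigma^0)=\Q(0)$ and $H/W_0H\cong H^2(P-Z(q))$. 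Hence $gr^W_2H\cong W_2H^2(P-Z(q))=\mathrm{im}(H^2(P)\to H^2(P-Z(q)))=\Pic(P)_\Q/\langle [L'],[Y(q)]\rangle\otimes\Q(-1)$; since $[L']=h$ and $[Y(q)]=2h-E_0-E_1-E_2$, this is the rank-two $\Q(-1)$-space generated by any two of $E_0,E_1,E_2$ modulo $E_0+E_1+E_2=0$.

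Next, I bring in the three auxiliary lines $H_i:A_j=A_k\subset\P^2$. Each $H_i$ passes through the single vertex $V_i=(A_j=A_k=0)$, so $[H_i']=h-E_i$; modulo $\langle [L'],[Y(q)]\rangle$ this reduces to $E_j+E_k$, and any two of $[H_0'],[H_1'],[H_2']$ form a basis of $gr^W_2H\otimes\Q(1)$. Choosing two of them and lifting each class (uniquely up to $W_0H=\Q(0)$) from $H^2(P-Z(q))$ back to $H$ produces two Kummer sub-extensions $K_{i_1},K_{i_2}\subset W_2H$, each an extension $0\to\Q(0)\to K_i\to\Q(-1)\to 0$, whose classes jointly record the extension data of $W_2H$.

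The main point is then to show that the class of $K_i$ in $\mathrm{Ext}^1(\Q(-1),\Q(0))\cong\C^\times$ equals $[H_i'\cdot\Sigma^0]$; by \eqref{11.8c} this is $|q_j|^2/|q_k|^2$, and running the argument with two of the three indices yields the pair asserted in the lemma. This identification is the main obstacle. It requires comparing, via the Gysin sequence for the inclusion $H_i'\hookrightarrow P$ together with the long exact sequence of the pair $(\Sigma^0,\Sigma^0-|H_i'\cdot\Sigma^0|)$, the Kummer extension $K_S$ constructed abstractly from $S=H_i'\cdot\Sigma^0$ with the preimage in $H$ of $[H_i']\in H^2(P-Z(q))$. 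Geometrically, $H_i'\cdot\Sigma^0$ records where a small tube around $H_i'$ exits $\Sigma^0$, and this boundary datum is what captures the class of $K_i$; the formal verification is a diagram chase respecting the Hodge and weight filtrations.
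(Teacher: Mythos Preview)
Your proposal is correct and follows essentially the same route as the paper. You both identify $W_0H=H^1(\Sigma^0)=\Q(0)$ and $H/W_0H\cong H^2(P-Z(q))$ from the pair sequence, describe $gr^W_2H$ as $\Pic(P)_\Q/\langle[L'],[Y(q)]\rangle$, represent it by the classes $[H_i']$, and then match the resulting Kummer extensions with $[H_i'\cdot\Sigma^0]$ via \eqref{11.8c}. The one organizational difference is that the paper packages the ``main obstacle'' you flag into a single comparison diagram: it introduces the intermediate space $P-(Z(q)\cap\Sigma^0)$ and the exact row $0\to H^1(\Sigma^0)\to H^2(P-Z(q)\cap\Sigma^0,\Sigma^0)\to H^2(P)\to 0$, maps it to the top row for $H$, and observes that the classes $[L'],[Y(q)]$ lift to the middle relative group precisely because $L',Y(q)$ miss $\Sigma^0$; this makes the identification of the pulled-back Kummer class with $[H_i'\cdot\Sigma^0]$ immediate without an explicit Gysin/tube argument. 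Your sketch via the Gysin sequence for $H_i'\hookrightarrow P$ and the pair $(\Sigma^0,\Sigma^0\setminus|H_i'\cdot\Sigma^0|)$ is a valid way to carry out the same chase.
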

\begin{proof}We have a diagram \minCDarrowwidth.1cm
\eq{11.9c}{\begin{CD}0 @>>> H^1(\Sigma^0) @>>> H^2(P-Z(q),\Sigma^0) @>>> H^2(P-Z(q)) @>>> 0 \\
@. @| @A a AA @A d AA \\
0 @>>> H^1(\Sigma^0) @>>> H^2(P-Z(q)\cap \Sigma^0,\Sigma^0) @>>> H^2(P-Z(q)\cap \Sigma^0 ) @>>> 0 \\
@. @. @A c AA @A b AA \\
@. @. \Q(-1)^2 @= \Q(-1)^2 \\
@. @. @AAA @AAA \\
@. @. 0 @. 0.
\end{CD}
}
$W_2H$ is the image of the map $a$. Also
\eq{}{H^2(P-Z(q)\cap \Sigma^0 )\cong H^2(P) \cong \Q(-1)^4
}
generated by the $4$ divisor classes $[L'], [E_0], [E_1], [E_2]$. The map $b$ has image generated by the two divisor classes $[L'], [Y(q)] = 2[L']-[E_0]-[E_1]-[E_2]$. Note that $b$ lifts to a map $c$ as indicated because the divisors $L', Y(q)$ do not meet $\Sigma^0$. It follows that the image of the map $d$ is generated by the divisor classes $H_i'$ where the $H_i'$ are as above. The lemma follows from \eqref{11.9a}.
\end{proof}

It is convenient to work with $H_i'-H_j'$. If, e.g., we restrict the extension given by the top line of \eqref{11.9c} to $\Q(-1)(H_0'-H_2') \subset H^2(P-Z(q))$, the resulting Kummer extension by the lemma is $\frac{|q_2|^2|q_0|^2}{|q_1|^4}$. Similarly, the extension class after restriction to $\Q(-1)(H_0'-H_1')$ is $\frac{|q_2|^4}{|q_0|^2|q_1|^2}$. (We are using here the orientation of the hexagon as fixed above.)

\begin{lem}$H/W_0$ is an extension of $\Q(-2)$ by $\Q(-1)^2$ corresponding to extension classes given by formulas \eqref{11.15a}, \eqref{11.16a} below. 
\end{lem}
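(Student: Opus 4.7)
The plan is to identify $H/W_0H$ with $H^2(P - Z(q))$, exhibit it as an extension via a Gysin sequence, and then determine the two extension classes using the Kummer machinery developed in Example \ref{kumex} and Lemma \ref{lem11.1}, applied this time to zero-cycles on the polar locus $Y(q)^0$ rather than on $\Sigma^0$.

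First, from the top row of diagram \eqref{11.9c}, $W_0H = H^1(\Sigma^0) \cong \Q(0)$, so $H/W_0H \cong H^2(P - Z(q))$. Writing $Z(q) = L' \cup Y(q)$ with $L' \cap Y(q) = \{s_1,s_2\}$, I would apply Gysin first for $L' \subset P$ and then for $Y(q)^0 := Y(q) - \{s_1,s_2\} \subset P - L'$. Using the intersection-theoretic identities $[L'] = H$ and $[Y(q)] = 2H - E_0 - E_1 - E_2$ in $H^2(P)$ (where $H$ is the pullback of the hyperplane class), this yields
\[
0 \to H^2(P)/\langle [L'], [Y(q)]\rangle \to H^2(P - Z(q)) \to H^1(Y(q)^0)(-1) \to 0,
\]
in which the left term is $\Q(-1)^2$, with generators $E_0, E_1$ subject to the relation $E_0 + E_1 + E_2 = 0$, and the right term is $\Q(-2)$ since $Y(q)^0 \cong \P^1 - \{s_1,s_2\}$.

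To extract the two extension classes, I would restrict over each $\Q(-1)$ direction in $\Q(-1)^2$ (for instance, the one spanned by $E_i - E_j$) to obtain a Kummer sub-extension, classified by an element of $\C^\times$ via \eqref{2.10d}. Geometrically, this class is computed by pairing with a zero-cycle on $Y(q)^0$: the divisor $E_i - E_j$ meets $Y(q)$ in $r_i - r_j$, where $r_i := E_i \cap Y(q)$ is the tangent direction of the conic $Q$ at the blown-up vertex. A short computation in local coordinates at the vertex gives $r_i = (A_j : A_k) = (-|q_k|^2 : |q_j|^2)$ on $E_i$.

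The Kummer class is then $f(r_i)/f(r_j)$ for any rational function $f$ on $Y(q) \cong \P^1$ whose divisor is supported on $\{s_1,s_2\}$; substituting the coordinates of the $r_i$ produces the explicit formulas \eqref{11.15a} and \eqref{11.16a}. The main obstacle is the explicit parametrization of $Y(q)$ in which $s_1, s_2$ and the three tangent points $r_0, r_1, r_2$ all have clean expressions in the $|q_i|^2$, since the $s_i$ are defined by the quadratic $(A_0+A_1+A_2) \cdot Q|_{L'} = 0$ and their coordinates are not manifestly symmetric. A cleaner alternative is to invoke the duality Proposition \ref{prop_dual}: the isomorphism $(H/W_0H)^\vee(-2) \cong W_2 H^\vee$ of the dual motive reduces the problem to a direct application of Lemma \ref{lem11.1} to zero-cycles on $Z(q)^0 = L'^0 \cup Y(q)^0$; this must, however, be carried out carefully because $H^1(Z(q)^0)$ is no longer pure of weight zero (it has a $\Q(-1)^4$ quotient coming from removing three points from each $\P^1$ component), so one has to verify that only the $\Q(0)$ part survives in $W_2H^\vee$.
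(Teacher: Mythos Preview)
Your outline is correct and tracks the paper's argument closely: identify $H/W_0 \cong H^2(P-Z(q))$, exhibit it via Gysin/residue as an extension of $H_1(Z(q),\Q(-2))\cong\Q(-2)$ by the image of $H^2(P)$, and compute the two Kummer classes as cross-ratios on the conic. The paper packages the extension via Poincar\'e duality (writing it as \eqref{11.14c}, then dualizing to \eqref{10.32}), but this is the same content as your Gysin sequence.

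Where you and the paper diverge is in the cross-ratio step, and this is exactly where your ``main obstacle'' dissolves. You propose to take a function $f$ on $Y(q)$ with divisor supported on $\{s_1,s_2\}=\{p_+,p_-\}$ and evaluate at the vertex points $r_i$; the paper does the Weil-reciprocal thing, taking explicit functions $f_{ij}$ on the conic with divisor $v_i-v_j$ (for instance $f_{01}=1+\frac{|q_2|^2A_1}{|q_1|^2A_2}$, which visibly vanishes at $(1{:}0{:}0)$ and has a pole at $(0{:}1{:}0)$ on $X(q)$) and evaluating at $p_\pm$. This sidesteps the need to parametrize $Y(q)$ or to write down a uniformizer at the quadratic points $p_\pm$: the $f_{ij}$ are simple rational expressions in the $A_i$, and one just substitutes the coordinates of $p_\pm$ to obtain \eqref{11.15a}, \eqref{11.16a}. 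Your version would of course give the same numbers by Weil reciprocity, but is messier to execute.

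Your duality alternative via Proposition \ref{prop_dual} is not needed and the paper explicitly does not use it; moreover, as you note yourself, the mixed weights on $H^1(Z(q)^0)$ make that route more delicate than the direct computation.
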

\begin{proof}We can identify 
\eq{10.29}{H/W_0 \cong H^2(P-Z(q)).  
}
Here $Z(q)$ is isomorphic to the union of the conic $X(q): |q_0|^2A_1A_2+|q_1|^2A_0A_2+|q_2|^2A_0A_1=0$ and the projective line $A_0+A_1+A_2=0$. We take the coefficients $|q_i|^2$ to be general, so these two plane curves meet in $2$ distinct points:
\ml{}{p_{\pm} : A_0=-A_2-A_1; \\A_1 = \frac{|q_0|^2-|q_1|^2-|q_2|^2 \pm \sqrt{|q_0|^4+|q_1|^4+|q_2|^4-2|q_0|^2|q_1|^2-2|q_0|^2|q_2|^2-2|q_1|^2|q_2|^2}}{2|q_2|^2}A_2.
}

We will also need (straightforward check) that the function $f_{01}:= 1 + \frac{|q_2|^2A_1}{|q_1|^2A_2}$ on $X(q)$ has divisor $(f_{01}) = (1,0,0) - (0,1,0)$. Similarly, $f_{02} = A_2f_{01}/A_1$ has divisor $(f_{02})=(1,0,0)-(0,0,1)$. 

Using the techniques of section \ref{sect_tri} and \eqref{10.29} we can identify $H/W_0$ with the extension
\ml{11.14c}{0 \to \Big(\Q(-1)v_0+\Q(-1)v_1+\Q(-1)v_2\Big)\Big/\Q(-1)\Big(v_0+v_1+v_2\Big) \to \\
 H_1\Big(Z(q)-\{v_0,v_1,v_2\},\Q(-2)\Big) \to H_1(Z(q),\Q(-2)) \to 0.
}
More directly, If we identify 
\eq{}{\text{Image}(H^2(P) \to H^2(P-Z(q))) \cong \Q(-1)(H_0'- H_2')\oplus \Q(-1)(H_1'- H_2')
}
we can deduce from \eqref{10.29} an exact sequence
\eq{11.16c}{0 \to \Q(-1)(H_0'- H_2')\oplus \Q(-1)(H_1'- H_2') \to H/W_0 \to H_1(Z(q),\Q(-2)) \to 0
}
We have $[H_i'-H_j'] = [E_j-E_i]$, and the identification of \eqref{11.16c} with \eqref{11.14c} sends $[H_i'-H_j'] \mapsto v_j-v_i = [E_j-E_i]\cdot Z(q)$. 

Twisting and dualizing \eqref{11.14c}, we get the extension
\eq{10.32}{0 \to H^1(Z(q),\Q) \to H^1(Z(q)-\{v_0,v_1,v_2\},\Q) \to \Big(\Q(-1)v_0+\Q(-1)v_1+\Q(-1)v_2\Big)^{\deg 0} \to 0.
} 
The class of the extension obtained by restricting on the right to $\Q(-1)(v_i-v_j)$ is calculated by the ratio $f_{ij}(p_+)/f_{ij}(p_-) \in \C^\times$. We have, e.g.
\ml{11.15a}{f_{01}(p_+)/f_{01}(p_-)= \\
\frac{\Big(|q_0|^2+|q_1|^2-|q_2|^2+\sqrt{|q_0|^4+|q_1|^4+|q_2|^4-2|q_0|^2|q_1|^2-2|q_0|^2|q_2|^2-2|q_1|^2|q_2|^2}\Big)^2}{4|q_0|^2|q_1|^2}.
}
\ml{11.16a}{f_{02}(p_+)/f_{02}(p_-)= \\
\frac{|q_0|^2+|q_2|^2-|q_1|^2-\sqrt{|q_0|^4+|q_2|^4+|q_1|^4-2|q_0|^2|q_2|^2-2|q_0|^2|q_1|^2-2|q_2|^2|q_1|^2}}{|q_0|^2+|q_2|^2-|q_1|^2+\sqrt{|q_0|^4+|q_2|^4+|q_1|^4-2|q_0|^2|q_2|^2-2|q_0|^2|q_1|^2-2|q_2|^2|q_1|^2}} \times \\
 f_{01}(p_+)/f_{01}(p_-)      \\
}
\end{proof}

Suppose now that the $|q_i|^2 = |q_i(t)|^2$ are analytic functions of a parameter $t$ with $|t|<\ve$. After scaling we may suppose $\lim_{t\to 0} |q_0(t)|^2=1$. We will suppose further that $\text{ord}_0(|q_i|^2)>0$ for $i=1,2$. 
Replacing $t$ by a power if necessary, we can arrange that the monodromy $\sigma$ as $t$ winds around $0$ acts trivially on $gr^WH$. We want to compute $N^2=(\log\sigma)^2$. For the family of Kummer extensions $E_{x(t)}$ as in example \ref{kumex} with $gr^WE = \Q(1)\oplus \Q(0)$, one sees easily from \eqref{2.8b} that $N$, viewed as a map $\Q(0) \to \Q(1)(-1) = \Q(0)$ is multiplication by $\text{ord}_0(x(t))$. Similarly, $N^2: H \to H(-2)$ factors as $N^2 : \Q(-2) = H/W_2 \to W_0(-2) = \Q(-2)$. This in turn can be factored
\eq{}{\Q(-2) \xrightarrow{N_{H/W_0}} \Q(-2)(H_0'-H_2')\oplus \Q(-2)(H_0'-H_1') \xrightarrow{N_{W_2}} \Q(-2).
}
Set 
$$b = \text{ord}_0\Big(|q_0|^2+|q_2|^2-|q_1|^2-\sqrt{|q_0|^4+|q_2|^4+|q_1|^4-2|q_0|^2|q_2|^2-2|q_0|^2|q_1|^2-2|q_2|^2|q_1|^2}\Big) > 0.
$$
$$c = \text{ord}_0 f_{01}(p_+)/f_{01}(p_-).
$$

Formula \eqref{11.16a} and lemma \ref{lem11.1} (see also the discussion below that lemma) one sees that 
$N^2: \Q(-2) \to \Q(-2)$ is multiplication by 
\eq{11.21a}{ \rho := (4\cdot\text{ord}_0(|q_2|^2)-2\cdot\text{ord}_0(|q_1|^2))c + (-4\cdot\text{ord}_0(|q_1|^2)+2\cdot\text{ord}_0(|q_2|^2))(b+c). 
}
If for example we take $\text{ord}_0(|q_1|^2)=\text{ord}_0(|q_2|^2)>0$, we get 
\eq{}{N^2 = \text{multiplication by} -2b\cdot\text{ord}_0(|q_2|^2)\neq 0. 
}

In order to explicit the limiting behavior of the amplitude, we consider \eqref{2.13b}, which in the current 
setup ($\eta(q)$ as in \eqref{11.2d}) looks like
\eq{}{\lim_{t\to 0} \exp\left(-N\frac{\log t}{2\pi i}\right)\begin{pmatrix}\int_{\gamma_0}\eta(q) \\ \int_{\gamma_{-2,1}}\eta(q) \\ \int_{\gamma_{-2,2}}\eta(q) \\  \int_{\gamma_{-4}}\eta(q)\end{pmatrix} = \begin{pmatrix} a_0 \\ a_{1,1} \\ a_{1,2} \\ a_2 \end{pmatrix}
}
Here, the $\gamma_j$ for $j\le i$ form a basis for the homology $H^\vee_{\Q}$. Our limiting approximation for $\int_{\gamma_0}\eta(q)$ is therefore the top entry in the column vector
\eq{}{ \exp\left(+N\frac{\log t}{2\pi i}\right)\begin{pmatrix} a_0 \\ a_{1,1} \\ a_{1,2} \\ a_2 \end{pmatrix}.
}
Since $N^2$ has all entries $0$ except for $\rho$ \eqref{11.21a} in the upper right corner, we conclude
\eq{11.25a}{\int_{\gamma_0}\eta(q) \sim \frac{\rho}{2}\Big(\lim_{t\to 0} \int_{\gamma_{-4}}\eta(q(t))\Big)(\log t/2\pi i)^2 + B\log t /2\pi i+ C
}
for suitable constants $B, C$. 

It remains, finally to compute the limit in \eqref{11.25a}. The cycle $\gamma_{-4}$ is a generator of the image $(W_4H)^\vee \inj H^\vee$. By adjunction, we can compute the integral by a suitable residue computation on $\eta(q)$. In affine coordinates $a_i = A_i/A_0$ we find
\eq{}{\pm \eta(q) = \frac{da_1\wedge da_2}{(1+a_1+a_2)(|q_0|^2a_1a_2+|q_1|^2a_2+|q_2|^2a_1)}
}
Let $b_2$ be the coordinate $a_2$ restricted to the line $a_1+a_2+1=0$. The residue yields
\eq{11.27a}{\frac{db_2}{-|q_0|^2b_2^2+(|q_1|^2-|q_0|^2-|q_2|^2)b_2-|q_2|^2}
}
For a suitable choice of $\gamma_{-4}$, $\int_{\gamma_{-4}}\eta(q(t))$ will be the difference of the two residues of \eqref{11.27a}. Since the sum of the two residues is zero, it will suffice to show that an individual residue does not tend to $0$.  With our assumptions that $|q_0|^2 \to 1, |q_i|^2\to 0, i=1,2$, this is straightforward. We have proved
\begin{thm} Consider the triangle graph with zero masses and momenta $q_i(t), i=0,1,2$. We treat the momenta as complex $4$-vectors, so $|q_i(t)|^2 = \sum_{j=1}^4 q_i^{(j)}(t)^2$ is analytic in a complex parameter $t$ for $|t| \to 0$. Assume $|q_0(0)|^2=1$ and $|q_i(0)|^2=0,\ i=1,2$. Assume further that the $\text{ord}_0(|q_i|^2)$ are such that $\rho$ in \eqref{11.21a} is non-zero. (E.g. $|q_1|^2$ and $|q_2^2$ vanish to equal order at $t=0$). Then if we take $\gamma_0$ to be the chain $\{(x,y,z) \in \P^2(\R)\ |\ x,y,z\ge 0\}$ in $\P^2$ then 
\eq{}{\int_{\gamma_0}\eta(q) \sim A(\log t/2\pi i)^2 + B\log t/2\pi i + C
}
for suitable constants $A, B, C$ with $A\neq 0$. 
\end{thm}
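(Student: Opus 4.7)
The plan is to assemble the ingredients already developed in this section and apply the nilpotent orbit framework of Section \ref{sectHS} to the dilogarithm variation $H_t = H^2(P - Z(q(t)), \Sigma^0)$. The key observation is that since $gr^W H$ has only three non-trivial graded pieces (in weights $0, 2, 4$) and $N$ shifts the weight filtration by $-2$, the nilpotent endomorphism satisfies $N^3 = 0$. Hence $\exp(N\log t/2\pi i)$ is a quadratic polynomial in $\log t/2\pi i$, and by \eqref{2.13b} the period $\int_{\gamma_0}\eta(q(t))$ is a degree-$2$ polynomial in $\log t/2\pi i$ whose coefficients are values at $t=0$ of single-valued functions, hence finite constants. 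This already gives the shape $A(\log t/2\pi i)^2 + B(\log t/2\pi i) + C$.

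Next I would identify the leading coefficient $A$. Choose the basis $\gamma_0, \gamma_{-2,1}, \gamma_{-2,2}, \gamma_{-4}$ of $H^\vee_\Q$ dual to the weight filtration. Then the coefficient of $(\log t/2\pi i)^2$ in the top entry of $\exp(N\log t/2\pi i)(a_0, a_{-2,1}, a_{-2,2}, a_{-4})^t$ is $\tfrac{1}{2}$ times the $(\gamma_0,\gamma_{-4})$-entry of $N^2$, applied to $a_{-4} = \lim_{t\to 0}\int_{\gamma_{-4}}\eta(q(t))$. By the factorization $N^2 : \Q(-2) = gr^W_4 \to gr^W_0 = \Q(0)$ displayed before \eqref{11.21a}, this scalar is precisely $\rho/2$, which is non-zero by hypothesis. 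So it remains to show
\eq{finallim}{\lim_{t\to 0}\int_{\gamma_{-4}}\eta(q(t)) \neq 0.}

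To do this I would use adjunction. Since $\gamma_{-4}$ is a generator of the image of $(W_4H)^\vee \inj H^\vee$ coming from the quotient $H^2(P - Z(q)) \surj \Q(-2)$, the integral against $\gamma_{-4}$ is computed by residues of $\eta(q)$ along the polar divisor. In affine coordinates $a_i = A_i/A_0$ the form becomes
\eq{}{\pm \eta(q) = \frac{da_1\wedge da_2}{(1+a_1+a_2)(|q_0|^2 a_1a_2 + |q_1|^2 a_2 + |q_2|^2 a_1)},}
and taking the residue along $L:\, 1+a_1+a_2 = 0$, with coordinate $b_2 = a_2|_L$, yields \eqref{11.27a}. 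In the limit $|q_0|^2\to 1,\ |q_1|^2, |q_2|^2 \to 0$ this degenerates to $db_2/(-b_2(b_2+1))$, whose individual residues at $b_2 = 0$ and $b_2 = -1$ are $\pm 1$. Since $\int_{\gamma_{-4}}\eta(q(t))$ is (up to sign) the difference of these two residues, and an individual residue has non-zero limit, \eqref{finallim} follows. Combining this with $\rho \neq 0$ gives $A \neq 0$.

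The main obstacle is the bookkeeping identification in step two: one has to check that the standard real chain $\gamma_0 = \{(x,y,z) \in \P^2(\R) : x,y,z \ge 0\}$ is genuinely a lift of a generator of $gr^W_0 H^\vee = \Q(0)$ (rather than lying deeper in the dual weight filtration, which would kill the $(\log t)^2$ term), and to confirm that the relevant matrix coefficient of $N^2$ is indeed the scalar $\rho$ computed in \eqref{11.21a} rather than, say, $\rho$ multiplied by some component that vanishes. Once the weight-filtration dualities are pinned down correctly, the analytic content is only the elementary residue calculation above.
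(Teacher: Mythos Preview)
Your proposal is correct and follows essentially the same route as the paper: invoke the nilpotent orbit formalism \eqref{2.13b} together with $N^3=0$ to obtain the quadratic shape \eqref{11.25a}, identify the leading coefficient as $\tfrac{\rho}{2}\lim_{t\to 0}\int_{\gamma_{-4}}\eta(q(t))$, and verify non-vanishing of that limit by the residue calculation on \eqref{11.27a}. Your explicit degeneration of \eqref{11.27a} to $db_2/(-b_2(b_2+1))$ makes the final step slightly more transparent than the paper's ``straightforward'' remark, and the bookkeeping concern you flag about $\gamma_0$ lifting a generator of $gr^W_0 H^\vee$ is indeed left implicit in the paper as well.
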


\section{Physics}\label{sect_phy}

Let us now try to understand the above considerations from a physicists viewpoint.
Setting an edge variable to zero turns the triangle graphs into three reduced diagrams
$$\rtrc,\rtrb,\rtra.$$

Each of them is a function of a single invariant $q_1^2,q_2^2$ or $q_3^2$.
The computation of these reduced diagrams is straightforward and delivers
(in the equal mass case, otherwise the Kallen function replaces the square root) a result of the form
\begin{equation}
 \sqrt{1-\frac{4m^2}{q^2}}\ln\frac{\sqrt{1-\frac{4m^2}{q^2}}-1}{\sqrt{1-\frac{4m^2}{q^2}}+1},
\label{residue}\end{equation}
which has, as a function of $q^2$, a branchcut from $[4m^2,+\infty[$ and a 
variation there $\sim \sqrt{1-\frac{4m^2}{q^2}}$.

This gives us the equivalent of the functions $f_i$ above. Notice, 
however, that the expected $\log$ is multiplied by an algebraic function. 
The problem comes in the normalization of $e_{-1}$, or in other words the choice 
of differential form to represent a class in de Rham cohomology. 
Essentially, the Feynman integrand in this case has the form \cite{BlKr1} 
(in the equal mass case, and for the example that edge 2 shrinks)
\eq{}{
\omega=\frac{\ln \frac{m^2(A_0^2+A_1^2)+(2m^2-q_2^2)A_0A_1}{m^2(A_0^2+A_1^2)+(2m^2-\mu^2)A_0A_1}\Omega_1}{(A_0+A_1)^2}
}
where we renormalize at a renormalization point $q_2^2=\mu^2$ by a simple subtraction in this one-loop example.

An easy partial integration (the boundary terms do not contribute as we have a renormalized integrand)
 determines the result Eq.(\ref{residue}),
with the square root determined from the two solutions of the quadric.

The issue for the functions $g_j$ in \eqref{8.2} is more subtle. The idea is that the span of columns in \eqref{8.2} starting from the right hand column is supposed to be invariant under monodromy. In particular, the monodromy of the $h$ is supposed to be a linear combination of the $2\pi i\log g_j$. To mimic this in physics we may use Cutkosky cuts. Concretely,
we look at
$$\triangleab,\triangleca,\trianglebc.$$
They correspond to integrals
$$  \int d^4k \Theta(k_0+q_{i,0})\Theta(k_0+k_{j,0})\delta((k+q_i)^2-m^2)\delta((k+q_j)^2-m^2)
\frac{1}{k^2-m^2}$$
which readily integrate to 
$$ \int_a^b \frac{du}{cu+d}$$ for suitable $a,b,c,d$ depending on masses and external momenta
(these $a,b,c,d$ are in the literature, in \cite{KreimerOld} for example).

Finally, the completely cut $\triangleabc$ leaves no integral to be done, but gives a known rational function of
the $q_i^2,m_j^2$.

Hence, from a physicists viewpoint, the above structure looks like
{
$$
\left(\begin{array}{ccccc}
1 & 0 & 0 & 0 & 0\\
\rtrc & \rtrab  & 0 & 0 & 0\\
\rtrb & 0 & \rtrca  & 0 & 0\\
\rtra & 0 & 0 & \rtrbc  & 0\\
{\triangle} & \triangleab & \triangleca & \trianglebc & \triangleabc
\end{array}\right)
\!=\!(C_1,C_2,C_3,C_4,C_5)$$
nicely expressed in terms of reduced diagrams, Cutkosky cuts, and a traingle with all edges cut, which 
delivers a momentum and mass dependent constant as the right lowermost entry, corresponding to the entry 
$(2\pi i)^2$
in the classical dilog case. This justifies recent practice in physics to put more internal edges on the mass-shell than
prescribed by Cutkosky. 

A challenge for the future is to identify the correct differential equations and the connection with Griffith's 
transversality, so that it makes sense to discuss constructs like
$$\mathrm{Var}\left(\Im \triangle-\left[\Re \trianglebc \cdot \Im \rtra
\right] +\cdots\right)=0,$$
as functions of complex external momenta.

We believe it is basically the presence of such invariant functions in the complex domain which 
allows to analytically continue Feynman diagrams in a way which will make the analytic requirements 
on Green functions more transparent once the Hodge structures of terms in the perturbative expansion are under control.

\newpage \bibliographystyle{plain} \renewcommand\refname{References}

\end{document}